\documentclass[11pt]{article}
\usepackage{amssymb} 
\usepackage{amsmath}
\usepackage{amsthm}   
\usepackage{epsfig}
\usepackage{subfigure}
\usepackage{a4wide}
\usepackage{setspace}
\usepackage{rotating} 
\usepackage{color} 
\usepackage{geometry}  
\geometry{a4paper, top=20mm, left=25mm, right=25mm, bottom=25mm, headsep=10mm, footskip=12mm}

\usepackage{ifthen}   
\newcommand{\user}{alex}      

\newfont{\smcal}{cmu10 scaled 1200}
\newfont{\handw}{cmmi10 scaled 1200}
\newfont{\handws}{cmmi10 scaled 800}
\newtheorem{Prop}{Proposition}[section]
\newtheorem{Lem}[Prop]{Lemma}

\newtheorem{Th}[Prop]{Theorem}
\newtheorem{Rm}[Prop]{Remark}
\newtheorem{Rms}[Prop]{Remarks}
\newtheorem{Def}[Prop]{Definition}

\newtheorem{Ex}[Prop]{Example}
\newtheorem{As}[Prop]{Assumption}

\newcommand{\grad}{\mbox{\rm grad}}

\newcommand{\var}{\mbox{\rm Var}}

\newcommand{\supp}{\mbox{\rm supp}}

\newcommand{\dint}{\int\!\!\!\!\int}

\newcommand{\sint}{{\int_0^1}}

\newcommand{\ba}{\ensuremath{{\bf a}}}%
\newcommand{\bb}{\ensuremath{{\bf b}}}%

\newcommand{\wA}{\ensuremath{\widetilde{A}}}%
\newcommand{\wB}{\ensuremath{\widetilde{B}}}%
\newcommand{\E}{\ensuremath{\mathbb{E}}}%
\newcommand{\T}{\ensuremath{\mathbb{T}}}%
\newcommand{\Z}{\ensuremath{\mathbb{Z}}}%
\newcommand{\R}{\ensuremath{\mathbb{R}}}%
\newcommand{\C}{\ensuremath{\mathbb{C}}}
\newcommand{\calD}{\ensuremath{\mathcal{D}}}%

\newcommand{\vart}{\vartheta}
\newcommand{\hvart}{\hat\vartheta}
\newcommand{\varep}{\epsilon}

\newcommand{\Hess}{\,\mbox{\rm Hess}}
\newcommand{\IM}{\,\mbox{\rm Im}}
\newcommand{\RE}{\,\mbox{\rm Re}}

\newcommand{\tM}{\widetilde{M}}%
\newcommand{\tF}{\widetilde{F}}%

\newcommand{\konvD}{\overset{\cal D}{\rightarrow}} 

\newcommand{\teacher}[1]{%
	\ifthenelse{	\isundefined{\showme}}%
		{}%
		{\fbox{\parbox{0.9\textwidth}{#1 \begin{center}Dieser Abschnitt erscheint nur in der \emph{teacher's version}.\end{center}}}}}

\begin{document}
      \title{Drift Estimation in Sparse Sequential Dynamic Imaging: with Application to Nanoscale Fluorescence Microscopy}
        \author{Alexander Hartmann\footnote{Institute for Mathematical Stochastics, Georg-August-University G\"ottingen, Germany}, Stephan Huckemann$^{*, }$\footnote{Felix Bernstein Institute for Mathematical Statistics in the Biosciences, Georg-August-University G\"ottingen, Germany}, J\"orn Dannemann$^{*}$,\\ Oskar Laitenberger\footnote{Laser Laboratory, G\"ottingen, Germany}, Claudia Geisler$^{\ddagger}$, Alexander Egner$^{\ddagger}$, and Axel Munk$^{*, \dagger, }$\footnote{Max-Planck-Institute for biophysical Chemistry, G\"ottingen, Germany} $^{, }$\footnote{Corresponding author. munk@math.uni-goettingen.de}}
        \maketitle

\onehalfspacing

\begin{abstract}
A major challenge in many modern superresolution fluorescence microscopy techniques at the nanoscale lies in the correct alignment of long sequences of sparse but spatially and temporally highly resolved images. This is caused by the temporal drift of the protein structure, e.g. due to temporal thermal inhomogeneity of the object of interest or its supporting area during the observation process.
We develop a simple semiparametric model for drift correction in SMS microscopy. Then we propose an M-estimator for the drift and show its asymptotic normality. This is used to correct the final image and it is shown that this purely statistical method is competitive with state of the art calibration techniques which require to incorporate fiducial markers into the specimen. Moreover, a simple bootstrap algorithm allows to quantify the precision of the drift estimate and its effect on the final image estimation. We argue that purely statistical drift correction is even more robust than fiducial tracking rendering the latter superfluous in many applications. The practicability of our method is demonstrated by a simulation study and by an SMS application.
This serves as a prototype for many other typical imaging techniques where sparse observations with highly temporal resolution are blurred by motion of the object to be reconstructed.
\end{abstract}

\par
\vspace{9pt}
\noindent {\it Key words and phrases:} drift estimation, image registration, semiparametrics, M-estimation, nanoscale fluorescence microscopy, super resolution microscopy, asymptotic normality, sparsity, registration 
\par
\vspace{9pt}
\noindent {\it AMS 2000 Subject Classification:} \begin{minipage}[t]{10cm}
Primary 62M10, 62M40, Secondary 92C05, 92C37
 \end{minipage}
\par

\section{Introduction}
Optical fluorescence imaging is an important tool in the life sciences for studying biological molecules at subcellular level.
Until 20 years ago the Abb\'e diffraction barrier stood for more than hundred years as a physical limitation of spatial resolution for any kind of light microscopy. This amounts to a resolution level of about 250 nm (approx. half the wave length of visible light) in lateral and 500 nm in axial direction. The diffraction barrier is attributable to the fact that two features that are closer than the resolution level cannot be distinguished in a light micrograph because they merge into one another. Meanwhile, this barrier has been overcome by imaging features that are within such a diffraction limited area not simultaneously but consecutively by changing their ability to generate contrast in time \cite{Hell2009b}. In the case of fluorescence microscopy, this means changing the fluorophore's ability to send out a fluorescence photon or to change the properties of the emitted fluorescence photon, e.g. its color. This switching has been implemented by several techniques \cite{Hell2003, Betzig2006, Rust2006, 
Hess2006} which has initiated a revolution in cell imaging. Nowadays, biological molecules can even be viewed ``at work'' at a resolution level down to 10 -- 20 nm which gives entirely new insights into the signalling and transport processes within cells (see e.g. \cite{Westphal2008, Berning2012, Jones2011, Huang2013}, to mention a few). State of the art nanoscale microscopy can be roughly divided into two distinct categories: In the targeted mode (ensemble based), the fluorophores (markers) are switched at a known (precisely defined) coordinate, whereas in the stochastic mode, the fluorophores are switched at random (initially unknown) locations. The first includes techniques such as stimulated emission depletion (STED) \cite{Hell1994, Klar2000, Schmidt2008}, saturated patterned excitation microscopy (SPEM) \cite{Heintzmann2002} or saturated structured illumination microscopy (SSIM) \cite{Gustafsson2005m}, and reversible saturable optical fluorescence transitions (RESOLFT) \cite{Hofmann2005, Hell2003}. 
Due to the direct targeting the acquisition time of these techniques is usually relatively short and the sample drift is not a major source of blurring.

In contrast, in its stochastic switching (or single marker switching, SMS) mode, fluorescence microscopy is performed by illuminating the whole sample but with a low switching light intensity, assuring that with high probability only a few (random) markers are in their fluorescent state at any time. These techniques are being developed rapidly during the last years and they include \emph{stochastic optical reconstruction microscopy (STORM)} \cite{Rust2006, Holden2011}, \emph{photoactivated localization microscopy (PALM)} \cite{Betzig2006}, \emph{fluorescence photoactivation localization microscopy (FPALM)} \cite{Hess2006}, and \emph{PALM with independently running acquisition (PALMIRA)} \cite{Geisler2007, Egner2007} and \cite{Hell2007} for a survey.

Given the fact that in SMS microscopy a sufficient number of marker molecules has to be imaged in order to generate a representative view of the sample, SMS experiments provide a huge number (e.g. in the range of several tens of thousands) of highly time resolved images (frames), each of which contains very little but sparse information. In this setting, methods have recently been developed which make explicit use of this sparseness for image reconstruction, e.g. using a sparsity enforcing penalty or prior, see \cite{Babcock2012, Cox2012, Holden2011, Zhu2012, Quan2011, Hafi2014}. The unknown marker positions are usually determined by calculating the centroid of their observed diffraction patterns which renders more sophisticated deconvolution methods unnecessary. Obviously, this physically enforces spatial sparseness and the localization accuracy can be ${\sqrt{N}}$ times better than the initial resolution of the microscope, where $N$ is the average number of detected photons within the individual 
diffraction patterns \cite{Thompson2002}. The markers localized within each frame are 
then registered in highly time and space resolved position histograms (see Figure \ref{single.images:fig}), the overlay of which represents the final SMS-image.

A major motivation for this paper is, however, that the measurement process in SMS microscopy typically takes several minutes. Hence, the image is blurred if the object drifts over significant distances during this time. This drift may be caused by temperature variations (thermal drift) during the measurement process and external systematic movements of the optical device (mechanical drift). As can be seen in Figure \ref{data.reconstr:fig} (left upper display) this drift is the major source of blurring. The issue of correcting for the, per se unknown, motion of the object in the sparse position histograms is well known and it is therefore current practice to incorporate fiducial markers (e.g. bright fluorescent microspheres) into the sample in order to register subsequent frames. This is technically demanding and expensive. Often the fiducials also outshine relevant parts of the image, hence it would be an important achievement to develop methods which allow to estimate the drift of the sample without incorporation of fiducials into the sample.

A first attempt has been made in \cite{Geisler2012}, who suggested a heuristic correlation method to align subsequent frames properly (see \cite{Deschout2014} for a recent survey on this issue). In this paper we will treat this problem in a statistically rigorous way. We argue that a parametric model for the drift function is often appropriate and we suggest an M-estimator for it. See the right hand side of Figure \ref{data.reconstr:fig} for the image of the recordings of a $\beta$-tubulin network (network I) within a mammalian cell, which was obtained after drift correction with our M-estimator, to be developed in section \ref{drift-est-general:sec}. We will show the asymptotic normality of this estimator as the acquisition time increases, and we argue that this is the ``right asymptotics'' in SMS microscopy due to relatively long acquisition times which inherently come along with this technique. From this asymptotics we obtain simple bootstrap confidence bands for the drift function and finally improved 
estimates of the image together with a measure to access the statistical uncertainty of the aligned images.

We stress that our asymptotics is substantially different to that underlying many other image alignment and registration methods where at each time step data from the full image is observed and hence asymptotics concerns the number of pixels tending to infinity.

Finally, our method is compared in real world applications from SMS microscopy with calibration using a fiducial marker. We show that our method is at least as competitive revealing the incorporation of fiducials as not necessary in the analysis and processing of SMS images.

\begin{figure}[h!]
\ifthenelse{\equal{\user}{alex}}{
  \includegraphics[width=1\textwidth]{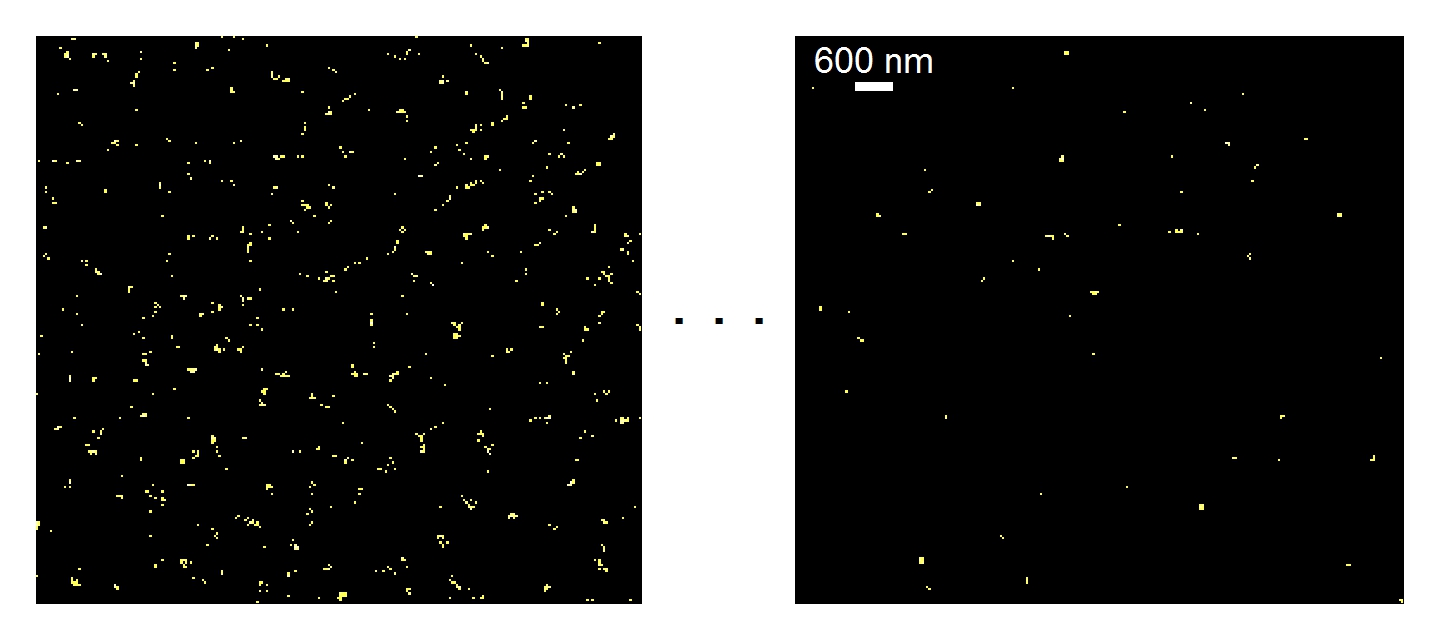}
  }{}
\ifthenelse{\equal{\user}{stephan}}{
  \includegraphics[width=1\textwidth]{../IMG/Cage552_PVA_YG200_03_Einzelbilder.jpg}
  }{}
 \caption{\it Superimposed position histograms derived from the first (left) and last (right) 20 frames of an SMS experiment. The sample (network I) is a fixed Vero-cell Abberior Cage 552-labelled $\beta$-tubulin network. The superimposed position histograms of all 40,000 frames of the experiment and a reconstruction derived from them are shown in Figure \ref{data.reconstr:fig}.}
 \label{single.images:fig}
\end{figure}

\begin{figure}[b!]
\centering
\ifthenelse{\equal{\user}{alex}}{
  \includegraphics[width=0.75\textwidth]{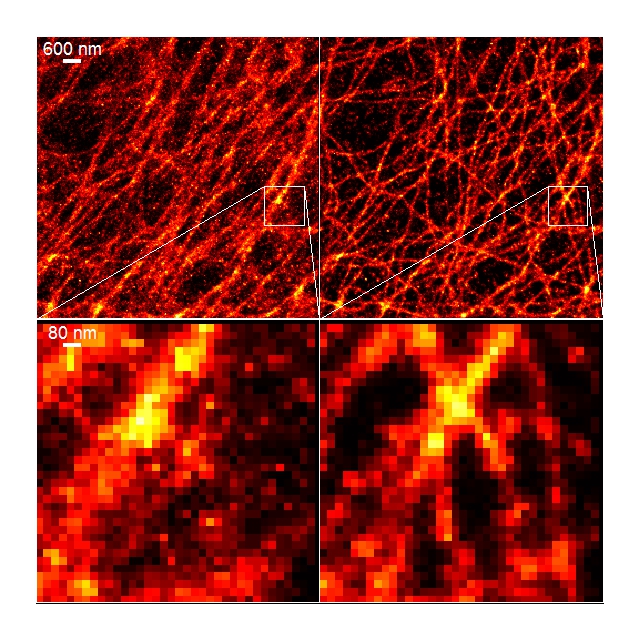}
  }{}
\ifthenelse{\equal{\user}{stephan}}{
  \includegraphics[width=0.75\textwidth]{../IMG/Cage552_PVA_YG200_03_linear-quadratic_details.jpg}
  }{}
 \caption{\it Left column: SMS acquisition of the Abberior Cage 552-labelled $\beta$-tubulin network I in a fixed Vero-cell. Top left: drift blurred position histogram. Top right: reconstructed position histogram under a linear-quadratic drift model. Bottom left: detailed view inside the white box above. Bottom right: reconstructed detail after drift correction.}
 \label{data.reconstr:fig}
\end{figure}

\paragraph{A simple drift model for SMS microscopy.}
The data acquisition process in SMS microscopy is a two step process (on-switching of marker molecules and subsequent read-out of their fluorescent signal) and we refer for details to \cite{Betzig2006, Geisler2012, Hell2009b, Hess2006}. However, as the data represents single photon counts recorded with an array of photodetectors, a spatial (thinned) Poisson process (possibly corrupted by some background noise) with unknown intensity follows from the independence assumption of photon emittance of different markers. The unknown intensity $\lambda$ of the Poisson process is linked to the unknown marker density $f$, say, by a convolution $K$ which is determined by the optical system, $\lambda = K * f$.
In ensemble based microscopy the focal spot is scanned through the sample. This requires an additional deconvolution step which can be helpful to obtain improved resolution (see \cite{Vardi1985, Silverman1990, Nowak2000, Cavalier2002, Antoniadis2006, Zhang2008, Frick2013, Bigot2013} for several Poisson deconvolution methods).
In contrast, in SMS microscopy as considered in this paper, the center of each spot already serves as a very accurate location estimate of the marker molecule because of the enforced sparsity (see \cite{Aspelmeier2014}). Therefore, we adopt current practice, and a sophisticated deconvolution step is not required.

As you may draw from Figure \ref{data.reconstr:fig}, indeed the major source of blurring in SMS microscopy comes from sample drift, rather than from optical blurring as the technique is designed to be physically sparse.

The second simplification is motivated from the fact, that since the number of photon counts in SMS experiments is usually quite high we restrict to use a (heteroscedastic) Gaussian model as an approximation to the Poisson model for large intensity $f$ in the following.
We nevertheless did some simulations for a Poisson model warranting that approximation appropriate (see Section \ref{sec:simulations}).
Hence, an approximate model for the above SMS scenario is thus 
given by (possibly after an offset correction)
	\begin{eqnarray}\label{few-observations:model}
	 Z_{j,t} &=& f\big(x_{j,t}-\delta_t\big) +\tilde\sigma_{j,t} \varep_{j,t},~~\varep_{j,t}\stackrel{i.i.d.}{\sim}{\cal N}(0,1),
	 \end{eqnarray}
with noise levels $\tilde\sigma_{j,t} > 0$. Here, for each time point $t \in \T':=\left\{0,\frac{1}{T'},\frac{2}{T'},\dotsc,\frac{T'-1}{T'}\right\}$ ($T'>0$ is the total number of frames) one observes $Z_{j, t}$ at (relatively few) locations $x_{j,t}$ which are assumed to be randomly selected from an equidistant grid of size $n=N^2$ of equally spaced pixels in the unit square $[0,1]^2$ (the image domain), $j \in J_t \subseteq \{1,\dotsc, n\}$, $n_t := \# J_t$.
The underlying unknown true marker intensity $f$ which is assumed to be square integrable is shifted by an unknown drift function $\delta_t$ and has to be estimated together with $\delta_t$.
The variances $\tilde\sigma_{j, t}^2$ model spatial and temporal inhomogeneities and are unknown, in general. In particular, in a pure Poisson model they would equal the signal $f$ itself.
As elaborated before, in the low energy stochastic switching scenario we assume that only very few (i.e. $n_t$ is small) and sufficiently distant pixel locations are selected by the switch-on process. In consequence, the errors $\varep_{j,t}$ ($t \in \T'$, $j\in J_t$) can be assumed to be independently distributed for different time points $t_1, t_2 \in \T'$, $t_1\neq t_2$, even if the affiliated pixel locations $x_{j,t_1}, x_{j,t_2}$ are identical. Actually, in model (\ref{few-observations:model}) $f(x_{j, t}-\delta_t)$ has to be rescaled with the relative amount of total intensity at time $t$, i.e. multiplied by $n_t/\sum_{s \in \T} n_s$, because the intensity of the images scales with $n_t$. This will be suppressed in the following, however, as any estimate of $f(x_{j,t})$ can be rescaled with this (observable) number.
We will see (Assumption \ref{convergent-sigma:as}) that our method does not require any assumption on $J_t$ or $n_t$ (besides of $n_t \geq 1$, which is always true for SMS microscopy as the sampling rate is never chosen below). Therefore, our results and conditions will be stated for fixed and arbitrary values $n_t$. Note that for SMS microscopy $n_t$ and $J_t$ are strictly speaking random and their exact distributions will depend on the fluorophore characteristics. Then, however, Assumption \ref{convergent-sigma:as} and our main Theorems \ref{asymptotic-normality:thm}, \ref{clt:thm} hold analogously for this situation, e.g. when the convergence in (\ref{convergent-sigma:eq}) is now a.s., which can be derived from the strong law of large numbers for non-identically distributed random variables.

In contrast to the usual asymptotics in imaging, where the pixel number $n$ tends to infinity as the discretization level increases, we have to consider here the novel scenario where the total pixel number is fixed while the number of time frames $T'$ tends to infinity.
In SMS nanoscopy, $T'$ typically ranges from 10,000 to 40,000, corresponding to a time resolution of several milliseconds.

\paragraph{The Fourier drift model.}
For the unknown image and its drift we propose a Fourier type cutoff-estimator.
Therefore, in the following it is convenient to rewrite (\ref{few-observations:model}) in terms of the spectral observations, i.e. the discrete two-dimensional Fourier transform at every time point $t\in \T$
\begin{eqnarray}\label{FT-few-observations:model}
	Y^t_k &:=& \frac1{n_t} \sum_{j \in J_t} Z_{j,t}e^{-2\pi i\langle k,x_{j,t}\rangle}~=~f^{\delta_t,t}_k + W_k^t, \quad k\in\mathbb Z^2\,
\end{eqnarray}
with suitable independent complex normal variables $W^t_k$ and $f_k^{\delta_t, t}$ the Fourier coefficients of $f^{\delta_t}(x) = f(x-\delta_t)$.
This allows to exploit the two-dimensional \emph{shift property}
	\begin{equation}
	  \label{shift:prop}
      f^{\delta_t,t}_{k} = \frac1{n_t} \sum_{j \in J_t} e^{-2\pi i \langle k , x_{j,t}\rangle} f(x_{j,t}-\delta_t)
    	= e^{-2\pi i \left\langle k , \delta_t \right\rangle} f^t_{k},\quad  f^t_{k} :=  \frac1{n_t} \sum_{j \in J_t} f(z_{j,t}) e^{-2\pi i\langle k,z_{j,t}\rangle},
	\end{equation}
	with $z_{j,t} \equiv x_{j,t} -\delta_t$ mod $[0,1]^2$.

A common proceeding in SMS microscopy (and in other imaging techniques) is binning (i.e. adding up) subsequent frames. As the total acquisition time is rather long ($T' \geq 10{,}000$), it is then reasonable to assume that we observe a big enough part of the image so that the averages over $T'$ of the Fourier coefficients $f_k^t$ are good approximations to the $f_k = \int_{[0, 1]^2} f(x) e^{-2\pi i\langle k, x\rangle} dx$\label{fkapprox}. This leads typically to $T \in \{20, \dotsc, 2000\}$ binned frames, depending on the bin width driven by the particular application. In our data application (see Figure \ref{data.reconstr:fig} and Section \ref{sec:application}), we used $T = 2000$. Therefore, in the following, we consider binned frames only.
Note that binning leaves model (\ref{few-observations:model}) qualitatively unchanged. In the following, we write $T$ and $\T$ instead of $T'$ and $\T'$, respectively, and we will denote the binned values again with $Z_{j, t}$ in (\ref{few-observations:model}) and so on.

Assuming this, the model (\ref{FT-few-observations:model}) simplifies to the following model underlying all of the theoretical considerations of this paper.
Because the $n_t$ are fixed and observable, we rewrite $\sigma_{j,t} = \tilde\sigma_{j,t}/\sqrt{n_t}$.

\begin{Def}
        For a $[0,1]^2$-periodic image $f$ the \emph{Fourier drift model} of SMS microscopy is given by
	\begin{eqnarray}\label{simple:model}
	 	Y^t_k &=&e^{-2\pi i \left\langle k , \delta_t \right\rangle} f_{k} + \,W_k^t, \quad k\in\Z^2, ~t\in \T
	\end{eqnarray}
with independent \emph{complex normals}
	\begin{eqnarray}\label{complex-Gauss-noise:def}
	  W_k^t &=& \frac1{\sqrt{n_t}} \sum_{j \in J_t} e^{-2\pi i \langle k,x_{j,t}\rangle} \sigma_{j,t}\varep_{j,t}, \quad \sigma_{j, t} > 0
	  \end{eqnarray}
	where we assume that $\varep_{j,t} \stackrel{i.i.d.}{\sim} {\cal N}(0,1)$. Then, the real and imaginary parts of $W^t_{k}$ are independently normally distributed with zero mean and $W^t_k$ is independent of $W^{t'}_{k'}$ unless $t=t'$ and $k=k'$.
\end{Def}


\paragraph{Relation to the literature.}
The asymptotics considered in section \ref{drift-est-general:sec} requires rather involved computations and notably, they are different from various approaches and asymptotics in the literature. Note that $n_t$ in model (\ref{few-observations:model}) is typically small in our setting (as it is the core in SMS microscopy) and does not tend to infinity. Hence, our approach is different from \emph{time dynamic imaging} \cite{Foroosh2002, Huang2005, Papenberg2006, Cuzol2007, Allassonniere2007, Fleet2006, Bruhn2005, Weickert2001, Li2013} where in each time step a (rather) complete sample of the entire image has to be recorded. This is in strict contrast to SMS microscopy which provides only a few markers in each time frame.
Therefore, this situation is also different from \cite{GamboaLoubesMaza2007} as well as from \cite{BigotGamboaVimond2009} although we borrow the idea of a Procrustes type estimator based on minimizing a suitable contrast functional, cf. \cite{Gow}. While the two afore mentioned recent references consider a finite number of images perturbed by Gaussian noise, each of which comes with an individual unknown similarity transform  constant over time (more specifically, translated to our setup, \cite{GamboaLoubesMaza2007} consider one-dimensional images each subject to a one-dimensional translation whereas \cite{BigotGamboaVimond2009} consider two-dimensional images each subject to a two-dimensional similarity transformation), they show that the transformations of interest can be consistently estimated with asymptotic normality when the number of pixel observations $n$ tends to infinity corresponding to an increasing signal-to-noise ratio for each image. Motivated by SMS microscopy, in our work, we have to 
swap the asymptotics as the time $T$ goes to infinity and, while not considering the full similarity group, we additionally allow for a time dependent drift. Since we consider drifts only, in contrast to \cite{BigotGamboaVimond2009}, our
method readily extends to higher dimensions, e.g. to three-dimensional images given by voxel locations.

We note that the shift property of the Fourier transform which has motivated our approach is crucial in many related methods based on FFT \cite{reddy,BigotGamboaVimond2009}.

At this point we conclude the methodological part of the introduction by noting that our work goes far beyond SMS microscopy and can be potentially used for other purposes, such as noisy object or particle tracking, when only small parts of the object are registered at each time step as it is the case for heavily undersampled magnetic resonance imaging \cite{Li2013}. Extensions to nonparametric drifts are possible and will be the topic of subsequent research. Finally, in a sense our work is complementary to the issue of testing in fluorescence microscopy whether a protein structure has significantly changed in time as in \cite{BissantzClaeskensHolzmannMunk2009}.

\paragraph{This paper is organized as follows.} In Section \ref{drift-est-general:sec} we present our main theoretical results. In particular, in Section \ref{basic-assumptions:scn} we provide for the main assumptions on the model, propose an estimator for the parameter of the drift model in Section \ref{sec:est} and derive consistency and asymptotic normality in Section \ref{main:sec} under mild assumptions. In Section \ref{sec:simulations} we illustrate the proposed method in a simulation study. Finally, we apply our method to SMS nanoscopy data  in Section \ref{sec:application} and give a detailed discussion of the results including bootstrapping of confidence regions in Section \ref{Bootstrap:scn}. Most of the proofs are deferred to the Appendix.

\paragraph{Software.} An accompanying software package \emph{R\_ImageDrift} can be found at\\
\texttt{www{.}stochastik{.}math{.}uni-goettingen{.}de/R\_ImageDrift}.

\section{Drift Estimation in a Sequence of Sparse Images: Theory}\label{drift-est-general:sec}
\subsection{Basic Assumptions for the Fourier Drift Model}\label{basic-assumptions:scn}

	In view of the Fourier methods employed we identify the two-dimensional image domain
	$[0,1]^2\subset \mathbb R^2$ with the complex unit square $D:=\{z\in \mathbb C: 0\leq \RE(z),\IM(z)\leq 1\}\subset \mathbb C$. For a point $x\in \mathbb R^2$ we have the real coordinates $(x)_1,(x)_2$ and identify $\big((x)_1,(x)_2\big)$ with $z = (x)_1 + i(x)_2$.

	For a complex vector $z\in \mathbb C^k$, $\RE(z)$ and $\IM(z)\in\mathbb R^k$ denote the corresponding real and imaginary parts and $|z|$ denotes the absolute value of $z$ or equivalently the Euclidean norm of $\mathbb R^{2k}$. For $k=(k_1,k_2)\in \Z^2$ we set
	$|k| := \max(|k_1|,|k_2|)$.
	
	For a point $x\in\mathbb R^d$, $d\in\mathbb N$,  $\|x\|$ denotes the Euclidean norm whereas $\|f\|_2$ denotes the usual norm of $f\in L^2(D)$.
	
	\begin{As}\label{image-drift:as}
	For the drift function $\delta_t: [0,1] \to D, t \mapsto \delta_t$ we assume a parametric model
	$$
	\delta_t = \delta_t^{\vart}, \vart \in \Theta
	$$
	with $\Theta\subseteq\R^d$ being a compact subset. Compactness is assumed for technical reasons. We mention that this can be relaxed with some additional effort. The parameter $\vart_0\in \Theta$ of the true shift $t\mapsto \delta_t^{\vart_0}$ is unknown. In order to avoid boundary effects, we assume that $f$ is supported on a compact subset $\calD \subset (0,1)^2$ and that $\cup_{\vart\in\Theta,t\in[0,1]} \supp f(\cdot-\delta_t^\vart) \subset \mathcal{D}$. Moreover, in order to apply the estimation method below based on Fourier transforms, we assume that $f$ is extended $1$-periodically into the two spatial directions. Also, we assume that $f$ has no period length strictly smaller than $1$.
	\end{As}

	\begin{Ex}\label{polynomial-drift:ex} Clearly, the choice of the parametric drift model is crucial for the model (\ref{simple:model}).
	As a prime example we consider the linear drift model:
	$$
	\delta_t^\vart = \delta_t^{(\alpha,\beta)} = \alpha t + \beta
	$$
	with $\vart=(\alpha,\beta)\in \Theta \subset \R^{2}\times\R^{2}$. This can easily be extended to a polynomial drift model of degree $d$:
	$$
	\delta_t^\vart = \delta_t^{(\alpha_1,\dotsc,\alpha_d,\beta)} = \beta + \alpha_1 t + \cdots + \alpha_d t^d
	$$
	with $\vart=(\alpha_1,\dotsc,\alpha_d,\beta) \in \R^{2d+2}$. In the SMS data presented in Section \ref{sec:application} the linear, quadratic and cubic models ($d \in \{1, 2, 3\}$) will be applied.
	\end{Ex}

	It is easy to see that the drift models as defined above are not identifiable per se, since the intercept $\beta$ can either be made explicit or absorbed into the function $f$.
	Because at the initial time $t=0$ we do not expect any drift we may assume $\delta_{0}(\vart)\equiv 0$. For the drift models proposed this results in the standard restriction $\beta=0$. In general, we require the identifiability of $\vart$ from the parametrized drift $\delta_t^\vart$, i.e. $\delta_t^\vart=\delta_t^{\vart_0}$ for all $t$ implies $\vart=\vart_0$. Moreover, recall that we exclude that $f$ be periodic with period length $< 1$, for otherwise $\delta_t$ were only well defined modulo the period length.

 \subsection{A Fourier Based M-Estimator}\label{sec:est}
	Recall the Fourier drift model (\ref{simple:model})
	with independent complex Gaussian noise $W_k^t$ as in (\ref{complex-Gauss-noise:def}).
	Note that
	\[ \frac{1}{T}\sum_{t \in \T} e^{2\pi i \left\langle k , \delta_t^\vart\right\rangle} {Y}^t_{k}
	  = f_k + \frac{1}{T}\sum_{t \in \T} e^{2\pi i \left\langle k , \delta_t^\vart\right\rangle} W_k^t \]
	converges a.s. to $f_k$ for $T \to \infty$, since the last term on the r{.}h{.}s{.} is the mean of independent centered Gaussian random variables which, under mild assumptions, vanishes asymptotically due to Kolmogorov's law of large numbers (see e.g. \cite[Theorem 2.3.10]{Sen1993}).
    Motivated by this observation (cf. \cite{GamboaLoubesMaza2007} for the case of a fixed $T$ and $n = n_t\to \infty$) we define the \emph{empirical contrast functional}
	\begin{eqnarray}\label{eqn:contrast}
	M_T(\vart) :=  \frac{1}{T} \sum_{|k| < \xi_T} \sum_{t\in\T}  \left|e^{2\pi i \left\langle k , \delta_t^\vart\right\rangle}\,Y^{t}_k - \frac{1}{T}\sum_{t'\in\T} e^{2\pi i \left\langle k , \delta_{t'}^\vart\right\rangle}\,Y^{t'}_k\right|^2.
	\end{eqnarray}
	The  threshold condition $|k| < \xi_T$ with $\xi_T>0$ suitably chosen will ensure
	 convergence of the right hand side of (\ref{eqn:contrast}). Our first result provides for a suitable choice of $\xi_T$.
	We note that more subtly than thresholding, one could follow \cite{GamboaLoubesMaza2007} who sum over all $\mathbb Z^2$ and use suitable spectral weight functions $\omega_T(k,t)$. Introducing the abbreviation
	$$h_k(\delta_t^\vart) := e^{2\pi i \left\langle k , \delta_t^\vart\right\rangle}\, $$
	rewrite
	\begin{eqnarray*}
	 M_T(\vart) &=&  \sum_{|k| < \xi_T} \left(
	\frac{1}{T} \sum_{t\in\T}
	\left| h_k(\delta^\vart_t)\, {Y}^t_{k} \right|^2
	- \left|\frac{1}{T} \sum_{t'\in\T}  h_k(\delta^\vart_{t'})\, {Y}^{t'}_{k}\right|^2 \right)
	~=~M_T^0+\tM_T(\vart)
	\end{eqnarray*}
	with
	\begin{eqnarray}\label{sampleM_T:def}
	M_T^0:=\sum_{|k| < \xi_T} \frac{1}{T}\sum_{t} \big|{Y}^{t}_{k}\big|^2\,,\quad \tM_T(\vart) &:=&  - \sum_{|k| < \xi_T}
	\left|\frac{1}{T} \sum_{t} h_k(\delta_t^\vart) {Y}^{t}_{k}  \right|^2,
	\end{eqnarray}
	where $M_T^0$ does not depend on $\vart$. Similarly, we have the  \emph{population contrast functional}
	\begin{eqnarray*}
	M(\vart)  &:=&  \sum_{k \in \Z^2} \int_0^1
	\left| h_k(\delta^\vart_t-\delta^{\vart_0}_t)\, {f}_{k}
	- \int_0^1 h_k(\delta^\vart_{t'}-\delta^{\vart_0}_{t'})\, {f}_{k}
	dt'\right|^2 dt \\
	&=& \sum_{k \in \Z^2} |{f}_{k}|^2 \left(1- \left| \int_0^1 h_k(\delta^\vart_t-\delta^{\vart_0}_t)\, dt \right|^2 \right)~=~M^0+\tM(\vart)
	\end{eqnarray*}
	with
	\begin{eqnarray}\label{populationM_T:def}
	M^0:=\sum_{k \in \Z^2}|{f}_{k}|^2
	\,,\quad
 	\tM(\vart) &:=&  - \sum_{k \in \Z^2} |f_k|^2 \left| \int_0^1 h_k(\delta^\vart_t-\delta^{\vart_0}_t)\, dt \right|^2
	\end{eqnarray}
	where $M^0$ is a constant in $\vart$. Note that while the population contrast involves the unknown true image $f$ and true parameter $\vart_0$, the empirical contrast only involves the data and the model. This gives rise to the following estimator.
	\begin{Def}\label{estimator:def}
	For given $T>0$ and choice of $\xi_T>0$ define an estimator for the parameter of the drift function by
	$$\hat\vart_T \in \arg \min_{\vart \in \Theta} M_T(\vart)$$
	and the corresponding estimator for the image $f$ as	
	$$\hat f_T(x) := \sum_{|k| < \xi_T} \frac{1}{T} \sum_{t\in\T} h_k(\delta_t^{\hat\vart_T})\,Y^{t}_k\;  e^{2\pi i \left\langle k , x\right\rangle}\,.$$
	\end{Def}
	Obviously, the proposed estimator is closely related to the concept of M-estimators. To derive the asymptotics below, we will equivalently maximize $-\tM_T(\vart)$ as well as $-\tM(\vart)$ since their difference is constant in $\vart$.

\subsection{Main Results}\label{main:sec}

	Recall the definition of a Sobolev space of order $\rho>0$, e.g. \cite[p.245]{Evans1998}.
	$$H^{\rho}([0,1]^2) := \left\{f\in L^1([0,1]^2): \sum_{k \in \Z^2} (1+ |k|^2)^{\rho} |{f}_k|^2 < \infty\right\}\,.$$

\teacher{
	From the Plancherel identity we infer at once that $H^{\rho'}([0,1]^2)\subset H^{\rho}([0,1]^2)\subset L^2([0,1]^2)\subset L^1([0,1]^2)$ for all $\rho'\geq\rho>0$.
}

	Additionally to Assumptions \ref{image-drift:as} on image and drift we require the following assumptions for consistency of the estimator $\hvart_T$.

	The following assumption that there be combinations of indices which are relatively coprime  with non-vanishing Fourier coefficients allows to deduce uniqueness of the minimizer in (\ref{populationM_T:def}).

	\begin{As}
	\label{f-Sobolev-half:as}
	Let $f \in H^{1}([0,1]^2)$ and suppose there exist some $k_1,k_2,k'_1,k'_2,k''_1,k''_2,k'''_1,k'''_2\in \mathbb Z$ such that $k_1k'_2-k_2k'_1$ as well as $k''_1k'''_2-k''_2k'''_1$ are non-zero, have no common divisors  and $|f_k|\neq 0$ for all $k \in \left\{(k_1,k_2),(k'_1,k'_2),(k''_1,k''_2),(k'''_1,k'''_2)\right\}$.
	\end{As}

	\begin{Rms}\label{f-Fourier-coeff:rm}\quad
	\begin{enumerate}
		\item We need the property $f \in H^1([0, 1]^2)$ for the asymptotic normality of the estimator $\hvart_T$. For the consistency of $\hvart_T$ it is sufficient to have $f \in H^{1/2}([0, 1]^2)$, if we additionally assume $\sup_{k\in\mathbb Z^2}|{f}_k|\,|k|<\infty$.
		\item Every $f\in H^1([0,1]^2)$ satisfies $\sup_{k\in\mathbb Z^2}|{f}_k|\,|k|<\infty$ since
		\[ \infty > \sum_{k \in \Z^2} |{f}_k|^2\,(1+ |k|^2) \geq \sup_{k\in\mathbb Z^2}|{f}_k|^2\,|k|^2 = \left(\sup_{k\in\mathbb Z^2}|{f}_k|\,|k|\right)^2\,. \]
	\end{enumerate}
	\end{Rms}

	\begin{As}
	\label{delta-TV:as}
	The map
	$$\Theta \to L^1([0,1],[0,1]^2), \vart \mapsto \delta_t^\vart=\big((\delta_t^{\vart})_1,(\delta_t^{\vart})_2\big)$$
	is injective and continuous {w.r.t.} the norm $|\delta_t^\vart| = \big|(\delta_t^{\vart})_1\big| + \big|(\delta_t^{\vart})_2\big|$. Moreover for each $\vart$ the drift $\delta_t^\vart$ as a function in $t$ is continuous at $t=0$ and of bounded total variation in both components with bound $TV\big((\delta_t^{\vart})_1\big) + TV\big((\delta_t^{\vart})_2\big) < C$ for some constant $C>0$ uniformly in $\vart$, where $TV(h)$ denotes the total variation norm of a real valued function $h:[0,1]\to\R$.
	\end{As}

	\begin{As}
	\label{sigma:as}
	There is a constant $\sigma_{\textup{max}} \in (0, \infty)$ s.t.
	$\sigma_{j,t} \leq \sigma_{\textup{max}}$ for all $t \in \T$, $j \in J_t$.
	\end{As}

	\begin{As}
	\label{delta-Lipschitz:as}
	The drift function $\delta_t$ is locally a uniformly Lipschitz function, i.e.
	for $\vart$ in a neighborhood of $\vart_0$ there is a constant $L>0$, such that
	\[ \sup_{t\in [0,1]} |\delta_t^\vart - \delta_t^{\vart_0}| \leq L \|\vart - \vart_0\|. \]
	\end{As}
	
	\begin{Th}\label{Consistency:Thm}
	Suppose that Assumptions \ref{image-drift:as}, \ref{f-Sobolev-half:as}, \ref{delta-TV:as} and \ref{sigma:as} hold. If we choose $\xi_T$ such that $\xi_T \stackrel{T \to \infty}{\longrightarrow} \infty$ and $\xi_T = o(\sqrt{T})$ then the drift estimator $\hvart_T$ from Definition \ref{estimator:def} is consistent, i.e.
	\begin{equation}\label{consistency1:eq}
	  \hat\vart_T \stackrel{T \to \infty}{\longrightarrow} \vart_0 \; \text{a.s.}
	\end{equation}
	If additionally Assumption \ref{delta-Lipschitz:as} holds, then also
	\begin{eqnarray}\label{eq:thmb}
	\left\|\hat f_T - f \right\|_2 \stackrel{T \to \infty}{\longrightarrow} 0  \mbox{ in probability, and if $\xi_T=o(T^{1/4})$ then }	\left\|\hat f_T - f \right\|_2 \stackrel{T \to \infty}{\longrightarrow} 0 \; \text{a.s.}  
	\end{eqnarray}
	\end{Th}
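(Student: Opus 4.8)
The plan is to run the classical two-step consistency argument for M-estimators. Since $\hvart_T$ minimises $M_T=M_T^0+\tM_T(\vart)$ and $M_T^0$ is free of $\vart$, it suffices to establish (i) that $\vart_0$ is a \emph{well-separated} minimiser of the population contrast, i.e.\ $\inf_{\|\vart-\vart_0\|\ge\varepsilon}\tM(\vart)>\tM(\vart_0)$ for every $\varepsilon>0$, and (ii) that $\sup_{\vart\in\Theta}|\tM_T(\vart)-\tM(\vart)|\longrightarrow 0$ almost surely; then $\tM_T(\hvart_T)\le\tM_T(\vart_0)$ together with (ii) gives $\tM(\hvart_T)\le\tM(\vart_0)+o(1)$ a.s., and (i) forces $\|\hvart_T-\vart_0\|\to0$ a.s., which is \eqref{consistency1:eq}. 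The heart of (i) is identifiability, $\tM(\vart)=\tM(\vart_0)\Rightarrow\vart=\vart_0$: writing $g_t:=\delta^\vart_t-\delta^{\vart_0}_t$ and using that $\big|\int_0^1 h_k(g_t)\,dt\big|\le1$ with equality iff $t\mapsto\langle k,g_t\rangle$ is a.e.\ constant modulo $1$, the equality $\tM(\vart)=\tM(\vart_0)=-M^0$ forces, for each of the four indices $k$ of Assumption~\ref{f-Sobolev-half:as} (which have $f_k\ne0$), that $\langle k,g_t\rangle\equiv c_k\pmod 1$ a.e.; letting $t\to0$, where $g_t\to g_0=0$ by continuity at $t=0$ (Assumption~\ref{delta-TV:as}) and the convention $\delta_0(\cdot)\equiv0$, gives $c_k\in\Z$, hence $\langle k,g_t\rangle\in\Z$ a.e. The two non-vanishing determinants $d_1=k_1k'_2-k_2k'_1$, $d_2=k''_1k'''_2-k''_2k'''_1$ place $g_t$ in $\tfrac1{d_1}\Z^2\cap\tfrac1{d_2}\Z^2=\Z^2$ (their having no common divisor is used here), and the support and minimal-period hypotheses of Assumption~\ref{image-drift:as} exclude the remaining wrap-around values in $\Z^2\setminus\{0\}$, so $g_t=0$ a.e.; injectivity of $\vart\mapsto\delta^\vart_\cdot$ (Assumption~\ref{delta-TV:as}) then yields $\vart=\vart_0$. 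Well-separatedness follows since $\tM$ is continuous on the compact set $\Theta$ (from $L^1$-continuity of $\vart\mapsto\delta^\vart_\cdot$, $|h_k|\equiv1$ and $\sum_k|f_k|^2<\infty$).

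For (ii) I would substitute $h_k(\delta^\vart_t)Y^t_k=f_k\,h_k(g_t)+h_k(\delta^\vart_t)W^t_k$ and abbreviate $A^k_T(\vart)=\tfrac1T\sum_t h_k(g_t)$, $A^k(\vart)=\int_0^1 h_k(g_t)\,dt$, $B^k_T(\vart)=\tfrac1T\sum_t h_k(\delta^\vart_t)W^t_k$, so that $-\tM_T(\vart)+\tM(\vart)$ equals $\sum_{|k|<\xi_T}\big(|f_kA^k_T(\vart)+B^k_T(\vart)|^2-|f_kA^k(\vart)|^2\big)-\sum_{|k|\ge\xi_T}|f_k|^2|A^k(\vart)|^2$. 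The truncation tail is $\le\sum_{|k|\ge\xi_T}|f_k|^2\to0$ uniformly in $\vart$ since $f\in L^2$. In the first sum, $|f_k|^2(|A^k_T|^2-|A^k|^2)$ is a left-Riemann-sum error controlled by $|A^k_T(\vart)-A^k(\vart)|\le TV\big(t\mapsto h_k(g_t)\big)/T\le c\,|k|/T$ thanks to the uniform bounded-variation bound of Assumption~\ref{delta-TV:as}, so $\sum_{|k|<\xi_T}|f_k|^2\big||A^k_T|^2-|A^k|^2\big|\le (c'/T)\sum_k|f_k|^2|k|\to0$ uniformly in $\vart$ by $f\in H^{1}$ (for the consistency of $\hvart_T$ alone it is enough that $f\in H^{1/2}$, cf.\ Remark~\ref{f-Fourier-coeff:rm}); and, by Cauchy--Schwarz and $|A^k_T|\le1$, the remaining cross and quadratic terms are bounded by $2\big(\sum_k|f_k|^2\big)^{1/2}\big(\sup_\vart\sum_{|k|<\xi_T}|B^k_T(\vart)|^2\big)^{1/2}+\sup_\vart\sum_{|k|<\xi_T}|B^k_T(\vart)|^2$. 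Thus (ii) reduces to showing $\sup_{\vart\in\Theta}\sum_{|k|<\xi_T}|B^k_T(\vart)|^2\to0$ almost surely.

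This reduction is the main obstacle. Since $\sigma_{j,t}\le\sigma_{\textup{max}}$ (Assumption~\ref{sigma:as}) one has $\E|W^t_k|^2\le\sigma_{\textup{max}}^2$, hence $\E\sum_{|k|<\xi_T}|B^k_T(\vart)|^2\le 4\xi_T^2\sigma_{\textup{max}}^2/T$, which vanishes exactly when $\xi_T=o(\sqrt T)$; moreover the ``diagonal'' part $\tfrac1{T^2}\sum_{|k|<\xi_T}\sum_t|W^t_k|^2$ does not depend on $\vart$ and, being a sum of independent squared Gaussians, converges to $0$ a.s.\ under the same condition via Chebyshev and Borel--Cantelli. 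The delicate, $\vart$-dependent part is $\tfrac1{T^2}\sum_{|k|<\xi_T}\sum_{t\ne t'}h_k(\delta^\vart_t-\delta^\vart_{t'})W^t_k\overline{W^{t'}_k}$, which I would handle by covering $\Theta$ with a finite net --- using $L^1$-continuity of $\vart\mapsto\delta^\vart_\cdot$ and the uniform $TV$ bound to pass from $L^1$-closeness to closeness of the empirical sums at the price of a factor $|k|$ from the Lipschitz constant of $h_k$ --- bounding each net point by a concentration inequality for this Gaussian quadratic form in $(\varep_{j,t})$, and summing the tail bounds over $T$. It is this Borel--Cantelli step that dictates the admissible growth of $\xi_T$, and where the stronger requirement $\xi_T=o(T^{1/4})$ enters the almost-sure part of \eqref{eq:thmb}; the bookkeeping of these rates is the technically most demanding point.

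Finally, for $\hat f_T$: by Plancherel, $\|\hat f_T-f\|_2^2=\sum_{|k|<\xi_T}\big|f_k(A^k_T(\hvart_T)-1)+B^k_T(\hvart_T)\big|^2+\sum_{|k|\ge\xi_T}|f_k|^2$, whose last sum is the $L^2$-tail and tends to $0$. Using $A^k_T(\vart_0)=1$ and the local uniform Lipschitz bound (Assumption~\ref{delta-Lipschitz:as}), $|A^k_T(\hvart_T)-1|\le2\pi|k|L\|\hvart_T-\vart_0\|$, so $\sum_{|k|<\xi_T}|f_k|^2|A^k_T(\hvart_T)-1|^2\le(2\pi L)^2\|\hvart_T-\vart_0\|^2\sum_k|f_k|^2|k|^2\to0$ a.s.\ by \eqref{consistency1:eq} and $f\in H^1$; and $\sum_{|k|<\xi_T}|B^k_T(\hvart_T)|^2\le\sup_\vart\sum_{|k|<\xi_T}|B^k_T(\vart)|^2$, which tends to $0$ in probability when $\xi_T=o(\sqrt T)$ (by the first-moment bound above) and almost surely when $\xi_T=o(T^{1/4})$ (by the preceding paragraph). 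Combining these via $|a+b|^2\le2|a|^2+2|b|^2$ yields \eqref{eq:thmb}.
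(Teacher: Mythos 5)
Your overall architecture is the same as the paper's: uniqueness of the population minimizer, continuity of $\tM$, uniform a.s.\ convergence $\tM_T\to\tM$, then van der Vaart's argmax argument, and Plancherel plus the Lipschitz bound for $\hat f_T$. Steps I and II are fine (your identifiability argument, passing through the constant phase $c_k$, letting $t\to 0$, and intersecting the lattices $\tfrac1{d_1}\Z^2\cap\tfrac1{d_2}\Z^2=\Z^2$, is in fact spelled out more carefully than in the paper), the deterministic Riemann-sum comparison $A^k_T\to A^k$ via the total-variation bound matches the paper, and handling the cross term by Cauchy--Schwarz from the quadratic noise term is exactly what the paper does.

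The genuine gap is the step you yourself flag as ``the main obstacle'': the almost sure convergence $\sup_{\vart\in\Theta}\sum_{|k|<\xi_T}\bigl|\tfrac1T\sum_{t\in\T}h_k(\delta^\vart_t)W^t_k\bigr|^2\to 0$ under the rate $\xi_T=o(\sqrt T)$ actually asserted in (\ref{consistency1:eq}). You only prove a first-moment bound (which gives convergence in probability at a \emph{fixed} $\vart$, not of the supremum), and for the a.s.\ statement you sketch a net-plus-concentration-plus-Borel--Cantelli strategy without carrying it out, while conceding that this route may force the stronger rate $\xi_T=o(T^{1/4})$. Since your proof of (\ref{consistency1:eq}) hinges on exactly this uniform a.s.\ convergence at the rate $o(\sqrt T)$, the main assertion is not established. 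The paper closes this step differently and without any chaining in $t$ or Borel--Cantelli in $T$: writing $h_k(\delta^\vart_t)W^t_k=U^t_k+iV^t_k$ and $\bar U_{k,T}=T^{-1/2}\sum_t U^t_k$, one observes that $\E(\bar U_{k,T}^2+\bar V_{k,T}^2)=\tfrac1T\sum_t\tfrac1{n_t}\sum_j\sigma_{j,t}^2$ is \emph{independent of $\vart$} (because $\cos^2+\sin^2=1$, the rotation does not change the second moment) and that $\var(\bar U_{k,T}^2)\le 6\sigma_{\textup{max}}^4$ uniformly in $k$ and $\vart$; since the terms are independent across $k$, Kolmogorov's strong law applied to the average over the $O(\xi_T^2)$ indices $\{|k|<\xi_T\}$ shows this average stays a.s.\ bounded, whence $C_T(\vart)=\tfrac1T\sum_{|k|<\xi_T}(\bar U_{k,T}^2+\bar V_{k,T}^2)=O(\xi_T^2/T)\to 0$ a.s.\ --- the vanishing prefactor $\xi_T^2/T$ does all the work, and only boundedness (not smallness) of the $k$-average is needed. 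This is the idea missing from your argument; without it (or a completed chaining argument at the right rate) the a.s.\ part of (\ref{consistency1:eq}), and consequently the a.s.\ part of (\ref{eq:thmb}) which reuses the same bound at $\vart=\hvart_T$, remain unproven.
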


	The proof of this theorem is deferred to the Appendix.

\teacher{
	\noindent{\it Plan of Proof.} To prove consistency of $\hat\vart_T$ we follow a standard three step argument in M-estimation (e.g. \cite{Vaart2000} and \cite{GamboaLoubesMaza2007}). First we show the uniqueness of the population contrast minimizer $\vart_0$. In a second step we establish the continuity of $\vart\to\tM(\vart)$. Thirdly, we verify that $\tM_T(\vart) \to \tM(\vart)$ a.s. uniformly over $\vart\in\Theta$ as $T,\xi_T \to \infty$, $\xi_T=o(\sqrt{T})$. In consequence,  \cite[Theorem 5.7]{Vaart2000} (yielding weak consistency) can be adapted to obtain strong consistency. For convenience, here is the corresponding argument:

	\begin{quote}
	 Let $\varep >0$. If $\vart_0$ is a unique minimizer, by continuity of $\tM$ there is $\eta_\varep>0$ such that  $\tM(\vart)>\tM(\vart_0)+\eta_\varep$ for all $\vart\in\Theta$ with $\|\vart-\vart_0\|\geq \varep$. Hence
	 \begin{eqnarray*}
	P\Big(\mathop{\limsup}_{T\to\infty}\big\{\|\hat\vart_T -\vart_0\|\geq \varep\big\}\Big)
	&\leq & P\Big(\mathop{\limsup}_{T\to\infty}\big\{\tM(\hat\vart_T)>\tM(\vart_0)+\eta_\varep\big\}\Big)	 \\&=&P\Big\{\mathop{\limsup}_{T\to\infty}\tM(\hat\vart_T)>\tM(\vart_0)+\eta_\varep\Big\}~=~0\,,
      \end{eqnarray*}
      because $\tM_T(\hat\vart_T)- \tM(\vart_0)<0$ a.s. for $T$ sufficiently large and hence a.s.
	 \begin{eqnarray*}

	  \mathop{\limsup}_{T\to\infty}\tM(\hat\vart_T) - \tM(\vart_0) &\leq & \mathop{\limsup}_{T\to\infty}\Big(\tM(\hat\vart_T) - \tM_T(\hat\vart_T)\Big)
	  \\
	  &\leq&\mathop{\limsup}_{T\to\infty} \sup_{\vart \in\Theta} \Big(\tM(\vart) - \tM_T(\vart)\Big)~=~0\,.
	 \end{eqnarray*}
	\end{quote}

	In a fourth step, (\ref{eq:thmb}) is then established. The detailed proof comprising all four steps is provided for in the Appendix.\qed

}

\begin{Rm}
  \label{Rm:ChoiceOfCutoff}(Choice of $\xi_T$).  
  The finite sample behaviour of our estimator depends on the choice of $\xi_T$, which should be large enough to capture all important features of the image $f$, but not too large in order to filter out the noise. 
  Theorem \ref{Consistency:Thm} suggests $o(\sqrt{\xi_T})$.  We found numerically that thresholds in a relatively large range  performed equally well. Therefore, we used for simplicity  $\xi_T = \sqrt{T}$ in all our simulations (section \ref{sec:simulations}) and obtained always good results. We mention that comparable results were obtained for $\xi_T$'s,  $\xi_T = c \sqrt{T}$ where $c\in [0.2,1]$ (simulations not displayed) rendering the estimation process as quite robust w.r.t. to this parameter as long as it is not chosen too small.  In the ananlysis of SMS data (section \ref{sec:application}), we have chosen   $\xi_T = 100 = 1/2 \sqrt{T'}$.
\end{Rm}

In the following we show under twice differentiability of the drift asymptotic normality of our estimator.
	\begin{As}
	\label{f-Sobolov-one:as}
	Let $f \in H^1([0, 1]^2)$ and assume that there exists a neighborhood $U\subset \Theta\subset\mathbb R^d$ of $\vart_0$ and some $R>0$ such that $\vart\mapsto \delta_t^\vart$ is  twice differentiable in $U$ for all $t\in [0,1]$ such that for $r=1,2$
	$$\big\|\grad_\vart\big((\delta_t^{\vart})_r\big)\big\|, \big\|\Hess_\vart\big((\delta_t^{\vart})_r\big)\big\|<R$$
	and the second partial derivatives 
	are continuous at $\vart_0$. Also assume that both components of every partial derivative $t\mapsto \partial_{\vart_j}\delta_t^{\vart}$ ($1\leq j\leq d$), $\vart=(\vart_1,\dotsc,\vart_d)$ are of bounded total variation on $[0,1]$ at $\vart=\vart_0$ (cf. Assumption \ref{delta-TV:as}).
	\end{As}

	\begin{As}
	\label{convergent-sigma:as}
	For all $k \in \Z^2$ define
	\begin{align*}
	  (\bar\tau_k^T)^2 &:= \frac1T \sum_{t \in \T} \frac1{n_t} \sum_{j \in J_t} \sigma_{j, t}^2 \cos(-2\pi\langle k, x_{j,t}-\delta_t^{\vart_0}\rangle)^2,\\
	  (\bar\omega_k^T)^2 &:= \frac1T \sum_{t \in \T} \frac1{n_t} \sum_{j \in J_t} \sigma_{j, t}^2 \sin(-2\pi\langle k, x_{j,t}-\delta_t^{\vart_0}\rangle)^2.
	\end{align*}
	We have $(\bar\tau_k^T)^2, (\bar\omega_k^T)^2 > 0$ and there are $\sigma_{A, k}^2, \sigma_{B, k}^2 > 0$ such that
	\begin{equation}
	  \label{convergent-sigma:eq}
	  (\bar\tau_k^T)^2 \to \sigma_{A, k}^2, \quad (\bar\omega_k^T)^2 \to \sigma_{B, k}^2 \quad \text{uniformly in } k \text{ as } T \to \infty.
	\end{equation}
	\end{As}

	If Assumption \ref{f-Sobolov-one:as} is satisfied the following matrices are well defined. Here $\grad_\vart\langle k,\delta_t^{\vart_0}\rangle$ denotes the gradient evaluated at $\vart_0$ and $\grad'_\vart\langle k,\delta_t^{\vart_0}\rangle$ denotes its transpose.
    \[ \Sigma := \sum_{k\in \Z^2}|f_k|^2 q_k, \quad
       \tilde\Sigma := \sum_{k\in \Z^2}\bigl(\sigma_{A, k}^2\RE(f_k)^2+\sigma_{B, k}^2\IM(f_k)^2\bigr) q_k, \]
    where
    \[ q_k := \int_0^1\grad_\vart\langle k,\delta_t^{\vart_0}\rangle\,\grad'_\vart\langle k, \delta_t^{\vart_0}\rangle\,dt
	-
	\mathop{\dint}_{[0,1]^2}\grad_\vart\langle k,\delta_t^{\vart_0}\rangle\,\grad'_\vart\langle k, \delta_{t'}^{\vart_0}\rangle\,dt\,dt'. \]
	Note, that $\tilde\Sigma$ is positive definite iff $\Sigma$ is, because $\bigl(\sigma_{A, k}^2\RE(f_k)^2+\sigma_{B, k}^2\IM(f_k)^2\bigr) \neq 0$ iff $|f_k|^2 \neq 0$.
	
	\begin{Th}\label{asymptotic-normality:thm} Under Assumptions \ref{f-Sobolov-one:as} and \ref{convergent-sigma:as} with the notation from (\ref{sampleM_T:def}) as $T,\xi_T\to \infty$ with $\xi_T = o(T^{1/4})$, we have that
	\begin{enumerate}
	\item[(i)] $\sqrt{T}\,\grad_\vart M_T(\vart_0)$ tends asymptotically to a $d$-variate normally distributed random vector with zero mean and covariance matrix $ 16\pi^2 \tilde\Sigma$.
	\item[(ii)] ${\rm Hess}\tM_T(\vart_0) \to 8\pi^2\Sigma$ a.s.
	\end{enumerate}

	\end{Th}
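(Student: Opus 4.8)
\medskip
\noindent\textit{Outline of a proof.} The plan is to differentiate the contrast at $\vart_0$ explicitly and then combine a law of large numbers with a central limit theorem, keeping track of the growing cutoff $\xi_T$. Since $|h_k(\delta_t^\vart)|=1$, the summand $M_T^0$ in $M_T=M_T^0+\tM_T$ does not depend on $\vart$, so I will work with $\grad_\vart\tM_T(\vart_0)$ and $\Hess\tM_T(\vart_0)$, where $\tM_T(\vart)=-\sum_{|k|<\xi_T}|A_k(\vart)|^2$ with $A_k(\vart):=\frac1T\sum_{t\in\T}h_k(\delta_t^\vart)Y^t_k$. Inserting the model identity $Y^t_k=h_k(\delta_t^{\vart_0})^{-1}f_k+W^t_k$ yields $A_k(\vart_0)=f_k+\tilde W_k$ with $\tilde W_k:=\frac1T\sum_t h_k(\delta_t^{\vart_0})W^t_k$, $\grad_\vart A_k(\vart_0)=2\pi i\,(f_k\bar g_k+G_k)$ with the deterministic real vector $\bar g_k:=\frac1T\sum_t\grad_\vart\langle k,\delta_t^{\vart_0}\rangle\in\R^d$ and $G_k:=\frac1T\sum_t\grad_\vart\langle k,\delta_t^{\vart_0}\rangle\,h_k(\delta_t^{\vart_0})W^t_k$, while $\Hess_\vart A_k(\vart_0)$ decomposes likewise into a deterministic $f_k$-multiple plus an average of centred complex normals. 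Differentiating $-|A_k(\vart)|^2$ once and twice at $\vart_0$ and grouping terms by their degree ($0,1,2$) in the noise $W^t_k$ then gives, for each of $\grad_\vart\tM_T(\vart_0)$ and $\Hess\tM_T(\vart_0)$, a deterministic part, a part linear in the noise, and a part quadratic in it.

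For (i), I would first observe that the deterministic part of $\grad_\vart\tM_T(\vart_0)$ equals $4\pi\sum_{|k|<\xi_T}\IM(|f_k|^2\bar g_k)=0$ because $\bar g_k$ is real --- equivalently, $\vart_0$ is the interior minimiser of the population contrast, so $\grad_\vart\tM(\vart_0)=0$. The quadratic remainder, multiplied by $\sqrt T$, has mean zero (it is the imaginary part of a sum of the real numbers $\E|W^t_k|^2$) and, using mutual independence of the $W^t_k$ over $(k,t)$, $\E|W^t_k|^2\le\sigma_{\textup{max}}^2$ (Assumption~\ref{sigma:as}), $|(\grad_\vart\langle k,\delta_t^{\vart_0}\rangle)_j|\lesssim|k|$ and $\#\{k:|k|<\xi_T\}\asymp\xi_T^2$, variance of order $\xi_T^4/T$; it is therefore $o_P(1)$ precisely because $\xi_T=o(T^{1/4})$. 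The surviving linear part unrolls into $\sqrt T\,\grad_\vart\tM_T(\vart_0)=\sum_{t\in\T}\zeta_t+o_P(1)$, where $\zeta_t\in\R^d$ is centred with $j$-th coordinate $\frac{4\pi}{\sqrt T}\sum_{|k|<\xi_T}\IM\!\bigl[\overline{f_k}\bigl((\grad_\vart\langle k,\delta_t^{\vart_0}\rangle)_j-(\bar g_k)_j\bigr)h_k(\delta_t^{\vart_0})W^t_k\bigr]$, and the $\zeta_t$ are independent across $t\in\T$. Then I would invoke a multivariate Lindeberg--Feller (Lyapunov) central limit theorem for triangular arrays of independent summands, the Lyapunov condition holding by a moment estimate for the Gaussian $\zeta_t$ that uses $\sigma_{j,t}\le\sigma_{\textup{max}}$ and $\xi_T=o(T^{1/4})$, to get (i) with limiting covariance $\lim_{T\to\infty}\sum_{t\in\T}\Cov(\zeta_t)$.

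Identifying this limit with $16\pi^2\tilde\Sigma$ is the hard part. Expanding $\IM[\,\cdot\,]$ through $\RE f_k,\IM f_k$ and $\cos(2\pi\langle k,x_{j,t}-\delta_t^{\vart_0}\rangle),\sin(2\pi\langle k,x_{j,t}-\delta_t^{\vart_0}\rangle)$ and using $\varep_{j,t}\stackrel{i.i.d.}{\sim}{\cal N}(0,1)$ writes $\Cov(\zeta_t)$ as a double sum over $|k|,|k'|<\xi_T$; one must show the off-diagonal terms ($k'\ne\pm k$) contribute nothing in the limit, and use product-to-sum identities to recognise the diagonal terms ($k'=\pm k$) as carrying precisely the factors $\cos^2(2\pi\langle k,x_{j,t}-\delta_t^{\vart_0}\rangle)$ and $\sin^2(2\pi\langle k,x_{j,t}-\delta_t^{\vart_0}\rangle)$ that define $(\bar\tau^T_k)^2$ and $(\bar\omega^T_k)^2$. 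Their limits $\sigma_{A,k}^2,\sigma_{B,k}^2$ from Assumption~\ref{convergent-sigma:as}, together with the Riemann-sum limit $\frac1T\sum_t\grad_\vart\langle k,\delta_t^{\vart_0}\rangle\grad'_\vart\langle k,\delta_t^{\vart_0}\rangle-\bar g_k\bar g_k'\to q_k$ (the bounded-variation hypotheses controlling the per-$k$ error by $O(|k|/T)$), must then be assembled into $16\pi^2\sum_{k\in\Z^2}(\sigma_{A,k}^2\RE(f_k)^2+\sigma_{B,k}^2\IM(f_k)^2)q_k=16\pi^2\tilde\Sigma$. Convergence of this series and the interchange of $\lim_{\xi_T\to\infty}$ with the summation rely on $f\in H^1$ (so $\sum_k|f_k|^2|k|^2<\infty$ and $|q_k|\lesssim|k|^2$), the derivative bounds of Assumption~\ref{f-Sobolov-one:as}, and the uniformity in $k$ of the convergence in~(\ref{convergent-sigma:eq}).

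For (ii), differentiating twice gives $\Hess\tM_T(\vart_0)=-2\sum_{|k|<\xi_T}\RE\!\bigl[(\Hess_\vart A_k(\vart_0))\overline{A_k(\vart_0)}+\grad_\vart A_k(\vart_0)\,\grad_\vart A_k(\vart_0)^{*}\bigr]$, the second summand being the $d\times d$ matrix with $(j,l)$-entry $\partial_{\vart_j}A_k(\vart_0)\,\overline{\partial_{\vart_l}A_k(\vart_0)}$. By Kolmogorov's law of large numbers $\tilde W_k$, $G_k$ and the noise part of $\Hess_\vart A_k(\vart_0)$ vanish a.s., while the deterministic Riemann sums converge, for each fixed $k$, to $A_k(\vart_0)\to f_k$, $\grad_\vart A_k(\vart_0)\to 2\pi i\,f_k\int_0^1\grad_\vart\langle k,\delta_t^{\vart_0}\rangle\,dt$ and $\Hess_\vart A_k(\vart_0)\to f_k\bigl(2\pi i\int_0^1\Hess_\vart\langle k,\delta_t^{\vart_0}\rangle\,dt-4\pi^2\int_0^1\grad_\vart\langle k,\delta_t^{\vart_0}\rangle\grad'_\vart\langle k,\delta_t^{\vart_0}\rangle\,dt\bigr)$ almost surely. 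Taking real parts annihilates the purely imaginary pieces and leaves, for each $k$, the pointwise limit $8\pi^2|f_k|^2q_k$, whose sum over $\Z^2$ is $8\pi^2\Sigma$. Finally I would let the limit pass through $\sum_{|k|<\xi_T}$: the deterministic tail by dominated convergence using $\sum_k|f_k|^2|k|^2<\infty$, and $\sum_{|k|<\xi_T}(\text{noise part})\to0$ --- in probability from expectation and variance bounds that are polynomial in $\xi_T$ over a power of $T$ vanishing when $\xi_T=o(T^{1/4})$, and almost surely by a Borel--Cantelli argument along a dyadic subsequence of $T$ combined with higher-moment control of the increments between successive values of $\xi_T$.
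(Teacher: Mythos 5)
Your decomposition is the same as the paper's: differentiate $\tM_T$ at $\vart_0$, split by degree in the noise, kill the degree-two part using $\xi_T=o(T^{1/4})$ (the paper's bound is exactly your $O_P(\xi_T^2/\sqrt{T})$ after multiplying by $\sqrt T$), observe that the degree-zero part vanishes as the imaginary part of a real quantity, and for (ii) combine a strong law for the noise terms with a Riemann-sum/bounded-variation argument for the deterministic terms. The one genuine divergence is how normality of the degree-one part is obtained. You regroup the sum by $t$ into independent centred vectors $\zeta_t$ and invoke a Lindeberg--Feller/Lyapunov CLT for the triangular array; this works (the $\zeta_t$ are Gaussian, so the Lyapunov condition reduces to each $\zeta_t$ having variance $O(1/T)$, which follows from $\sum_k|k|^2|f_k|^2<\infty$), and it has the advantage of surviving non-Gaussian errors. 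The paper instead exploits the exact Gaussianity of $W^t_k$ in (\ref{complex-Gauss-noise:def}): after a Cram\'er--Wold projection the degree-one part is \emph{exactly} normal for every finite $T$, so no CLT is needed, only convergence of its variance; this variance is computed by writing the coefficient vector $a_k$ as $\alpha_k e+\beta_k b_k$ and rotating so that only two coordinates of the noise survive, with $\beta_k^2$ producing precisely the discrete version of $q_k$. Your route buys generality at the price of a heavier limit theorem; the paper's buys an elementary exact-distribution argument at the price of being tied to Gaussian errors. Two caveats on completeness rather than correctness: the identification of the limiting covariance with $16\pi^2\tilde\Sigma$ --- which you rightly flag as the hard part --- is only described, not executed (the paper carries it out via the $\beta_k^2$ identity and Assumptions \ref{f-Sobolov-one:as}, \ref{convergent-sigma:as}; note also that since the $W^t_k$ are declared mutually independent across $k$ in the model, your worry about off-diagonal $k'\neq\pm k$ terms evaporates); and for the almost-sure statements in (ii) the paper does not need your Borel--Cantelli/dyadic-subsequence device --- it bounds the noise sums by $\xi_T^2/\sqrt T$ times an average over $\{|k|<\xi_T\}$ of i.i.d.-type moduli and applies Kolmogorov's strong law directly, which is simpler and suffices under $\xi_T^4/T\to 0$.
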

	The proof of this theorem is deferred to the Appendix.

	\begin{Th}\label{clt:thm}
	 Under Assumptions \ref{f-Sobolev-half:as}, \ref{delta-TV:as}, \ref{f-Sobolov-one:as} and \ref{convergent-sigma:as} if $\hat{\vart}_T {\to}\vart_0$ a.s. and $\xi_T/T^{1/4}\to 0$, then
	\[\Sigma \sqrt{T}(\hat{\vart}_T-\vart_0) ~{\konvD}~ {\cal N}\left(0,\frac{1}{4\pi^2}\,\tilde\Sigma\right) \quad \text{as } T \to \infty\]
	in distribution. In particular, if $\Sigma$ is of full rank then
	\begin{equation}
        \label{CLT:eq}
        \sqrt{T}(\hat{\vart}_T-\vart_0) ~{\konvD}~ {\cal N}
	\left(0,\frac{1}{4\pi^2}\,\Sigma^{-1}\tilde\Sigma\Sigma^{-1}\right) \quad \text{as } T \to \infty.
	\end{equation}
	\end{Th}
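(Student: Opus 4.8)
\noindent\emph{Plan of proof.} This is the standard M-estimator linearisation resting on Theorem~\ref{asymptotic-normality:thm}, which already supplies the two hard ingredients: the central limit theorem for the score $\sqrt T\,\grad_\vart M_T(\vart_0)$ and the almost sure limit of the Hessian at $\vart_0$ (note $\Hess M_T=\Hess\tM_T$ since $M_T^0$ in~(\ref{sampleM_T:def}) is constant in $\vart$). First I would set up the estimating identity: by Assumption~\ref{f-Sobolov-one:as} there is a neighbourhood $U\subseteq\Theta$ of $\vart_0$ on which $M_T$ is twice differentiable, and since by hypothesis $\hat\vart_T\to\vart_0$ a.s., eventually $\hat\vart_T$ lies in the interior of $U$, hence as a minimiser of $M_T$ (Definition~\ref{estimator:def}) it satisfies $\grad_\vart M_T(\hat\vart_T)=0$ for all large $T$, a.s. Applying the mean value theorem coordinatewise to $\grad_\vart M_T$ yields intermediate points $\bar\vart_T^{(1)},\dots,\bar\vart_T^{(d)}$ on the segment between $\vart_0$ and $\hat\vart_T$ (so each tends to $\vart_0$ a.s.) and a random matrix $H_T$ whose $j$th row is the $j$th row of $\Hess M_T(\bar\vart_T^{(j)})$, with $0=\grad_\vart M_T(\vart_0)+H_T(\hat\vart_T-\vart_0)$.

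The step that is not automatic, and which I expect to be the main obstacle, is to upgrade the pointwise statement Theorem~\ref{asymptotic-normality:thm}(ii) to $\Hess\tM_T(\bar\vart_T)\to 8\pi^2\Sigma$ a.s.\ for every random sequence $\bar\vart_T\to\vart_0$ a.s., which then gives $H_T\to 8\pi^2\Sigma$ a.s. I would obtain this by re-running the proof of part~(ii) with the fixed argument $\vart_0$ replaced by the convergent sequence $\bar\vart_T$: the uniform bounds on, and continuity (at $\vart_0$) of, the first and second derivatives of the drift (Assumption~\ref{f-Sobolov-one:as}) let one pass to the limit in each summand, while $f\in H^1$ together with $\xi_T=o(T^{1/4})$ dominates the tail of the sum over $k$ uniformly, exactly as in the proof of Theorem~\ref{asymptotic-normality:thm}.

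Finally, under Assumption~\ref{f-Sobolev-half:as} the matrix $\Sigma$, and with it $\tilde\Sigma$ (see the remark before the theorem), is positive definite — the coprimality of index pairs carrying non-vanishing Fourier mass prevents the integrated gradient outer products $q_k$ from having a common null vector, transparently so for the polynomial drift of Example~\ref{polynomial-drift:ex}. Hence $H_T$ is invertible for large $T$ with $H_T^{-1}\to(8\pi^2\Sigma)^{-1}$ a.s., and from the linearisation
\[ \sqrt T(\hat\vart_T-\vart_0)=-H_T^{-1}\,\sqrt T\,\grad_\vart M_T(\vart_0)\ \konvD\ -(8\pi^2\Sigma)^{-1}\,\mathcal N\!\bigl(0,16\pi^2\tilde\Sigma\bigr)=\mathcal N\!\Bigl(0,\tfrac1{4\pi^2}\Sigma^{-1}\tilde\Sigma\Sigma^{-1}\Bigr) \]
by Theorem~\ref{asymptotic-normality:thm}(i) and Slutsky, which is~(\ref{CLT:eq}); left-multiplying by the continuous map $x\mapsto\Sigma x$ gives $\Sigma\sqrt T(\hat\vart_T-\vart_0)\konvD\mathcal N(0,\tfrac1{4\pi^2}\tilde\Sigma)$, the first assertion. (Were one to want the first assertion with $\Sigma$ merely positive semidefinite, one would additionally need $\sqrt T$-tightness of $\hat\vart_T-\vart_0$, since one would then pass through the split $\Sigma\sqrt T(\hat\vart_T-\vart_0)=-\tfrac1{8\pi^2}\sqrt T\,\grad_\vart M_T(\vart_0)+\tfrac1{8\pi^2}\bigl(8\pi^2\Sigma-H_T\bigr)\sqrt T(\hat\vart_T-\vart_0)$ whose remainder is $o_P(1)$ only once the last factor is $O_P(1)$; under the stated hypotheses this complication does not arise.)
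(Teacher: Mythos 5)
Your proposal is correct and follows essentially the same route as the paper: a first-order expansion of $\grad_\vart M_T$ around $\vart_0$ at the zero $\grad_\vart M_T(\hat\vart_T)=0$, with Theorem~\ref{asymptotic-normality:thm}(i) supplying the limit law of the score and (ii) plus continuity of the second derivatives controlling the Hessian at the intermediate point(s), followed by Slutsky. Your coordinatewise mean-value formulation and your explicit remark that the degenerate-$\Sigma$ form of the first assertion would require separate $\sqrt T$-tightness are minor refinements of the paper's (somewhat terser) argument, not a different method.
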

	
	\begin{proof}
	Under Assumption \ref{f-Sobolov-one:as}, standard expansion arguments from M-estimation can be used as follows.
	Since $M_T(\vart)$ is twice continuously differentiable for $\vart$ near $\vart_0$ and $\hvart_T$ converges a.s. to $\vart_0$, we have that
	\begin{eqnarray*}
	0 &=& \grad_\vart M_T(\hvart_T) \\&=& \grad_\vart M_T(\vart_0) + \Hess_\vart M_T(\vart_0) (\hvart_T-\vart_0) + \Big(\Hess_\vart M_T(\hvart^*)-\Hess_\vart M_T(\vart_0)\Big)(\hvart_T-\vart_0)
	\end{eqnarray*}
	where $\hvart^*$ lies between $\vart_0$ and $\hvart_T$.
	The continuity of the second derivatives gives that $\hvart_T-\vart_0$ and $\grad_\vart M_T(\vart_0)$ are of the same asymptotic order since $\Hess_\vart M_T(\vart_0) \to 8\pi^2\Sigma$ a.s. holds by (ii) of Theorem \ref{asymptotic-normality:thm}.
	Hence
	$$8\pi^2\Sigma(\hvart_T-\vart_0) = - \,\grad_\vart M_T(\vart_0) + o_P(\| \hvart_T-\vart_0 \|)$$
	which in conjunction with (i) of Theorem \ref{asymptotic-normality:thm}, yields both asymptotic assertions.
	\end{proof}

	\begin{Ex}[Linear drift]
	For linear drift $\delta_t^{\vart} = \vart t$, we have $\grad_\vart \langle k, \delta_t^{\vart}\rangle = kt$. Thus,
	      \[ \Sigma = \frac1{12}\sum_{k \in \Z^2} |f_k|^2 \left(\begin{array}{cc}
k_1^2 & k_1 k_2 \\ 
k_1 k_2 & k_2^2
\end{array} \right), \quad \det(\Sigma) = \frac1{144} \sum_{k, l \in \Z^2} |f_k|^2 |f_l|^2 (k_1^2 l_2^2 - k_1 k_2 l_1 l_2). \]
          If $\det(\Sigma) \neq 0$, i.e. $\Sigma > 0$ (which is the case iff $f$ is not constant in any direction; see Lemma 7.2 in the Appendix), we can calculate $\Sigma^{-1}$ and get (\ref{CLT:eq}) with
          \begin{align*}
\Sigma^{-1} &= \frac{1}{12\det(\Sigma)}\sum_{k \in \Z^2} |f_k|^2 \left(\begin{array}{cc}
k_2^2 & -k_1 k_2 \\ 
-k_1 k_2 & k_1^2
\end{array} \right), \\
\tilde\Sigma &= \frac1{12}\sum_{k \in \Z^2} \bigl(\sigma_{A, k}^2\RE(f_k)^2+\sigma_{B, k}^2\IM(f_k)^2\bigr) \left(\begin{array}{cc}
k_1^2 & k_1 k_2 \\ 
k_1 k_2 & k_2^2
\end{array} \right).
          \end{align*}
        \end{Ex}

        \begin{Rms}\label{rem:2.15}\quad \begin{enumerate}
        \item\label{rem:2.15a}
	Although the estimate $\hvart_T$ does not rely on knowledge of the local variances $\sigma_{j, t}^2$, they occur in the limiting variance of Theorem \ref{asymptotic-normality:thm}.
	To estimate the variance $\sigma^2$ in case of constant $\sigma_{j,t}^2 \equiv \sigma^2$ in (\ref{few-observations:model}) one can use simple difference based estimates (see \cite{Munk2005} and the references given there).
	Note that in the Poisson model underlying (\ref{few-observations:model}) an approximately constant variance can be achieved by employing a variance stabilizing transformation, e.g. $\sqrt{Z_{j, t}+1/4}$ (see e.g. \cite[page 378]{Frick2013} for a careful description).
	In the case of nonconstant $\sigma_{j,t}^2$, sufficiently smooth and bounded away from zero, an estimator of these quantities can be obtained in general from a nonparametric variance estimator (see \cite{BrownLevine2007} and the references given therein). However, note that for the limiting variances $\sigma_{A, k}^2$, $\sigma_{B, k}^2$ simpler estimators can be employed, based on proper spatial differences along similar lines as in \cite{Dette1998}.
	\item In particular, $\hat\vart_T - \vart_0$ has the parametric rate $T^{-1/2}$ if $\xi_T$ is chosen to be fixed, although the nuisance parameter $f$ (see Assumption \ref{f-Sobolev-half:as}) is infinite dimensional. This can be interpreted in the sense that a finite number of Fourier coefficients are sufficient for detection of the drift parameter, which reduces the problem to finite dimensions, a well known phenomenon from semiparametric estimation \cite{Schick2008, Bickel1998, Vaart2000}.
	It is a challenging problem to derive the semiparametric efficient estimator for $\delta_t$ in model \ref{few-observations:model} and its asymptotic distribution. One reason for the Fourier based approach adopted here is that a time shift simply results in a multiplication in the Fourier domain (see (\ref{shift:prop})) which simplifies the statistical analysis. However, we expect that our estimator will not be semiparametrically efficient, although the $\sqrt{T}$ rate of convergence appears to be optimal.
	\end{enumerate}
	\end{Rms}

\section{Simulations}\label{sec:simulations}

To investigate the finite sample properties of the proposed method we conduct a simulation study\footnote{An R-package providing the software for the simulations as well as the application to SMS data is available at \texttt{www{.}stochastik{.}math{.}uni-goettingen{.}de/R\_ImageDrift}.} with images of size $n=N^2$ pixels with $N=256$. We opt for $T\in \{20,50,100\}$ in order to reduce computational time, as our implementation requires several minutes to compute $\hvart_T$ on a $256\times 256$ image for $T=1000$, say.
In order to make this simulations comparable to the data in section \ref{sec:application} we choose drift parameters $\theta_0$ such that the total drift (i.e. the pixel shift between the first an the last image) has comparable scale to the ones observed in our SMS data.

We consider the model (\ref{few-observations:model}) with four different drift types: linear, quadratic, and cubic drift, as well as a piecewise linear drift with a jump at unknown time. Note that the drift with jump violates the Lipschitz property in Assumption \ref{delta-Lipschitz:as}. To ensure that the multiplication in the Fourier domain corresponds to an integer valued shift of pixels in the image domain we consider rounded versions of the drift function $\tilde\delta_t(\vart)_1 = \lfloor N \delta_t(\vart)_1 + 0.5\rfloor/N$ and $\tilde\delta_t(\vart)_2 = \lfloor N \delta_t(\vart)_2 + 0.5\rfloor/N$.

We use the test image displayed in Figure \ref{fig:testimage}, with image intensity $f$ ranging from zero to one (the average image intensity is about 0.045), and apply three error models.

\begin{figure}[htbp]
  \centering
\ifthenelse{\equal{\user}{alex}}{
  \includegraphics[width=7.5cm]{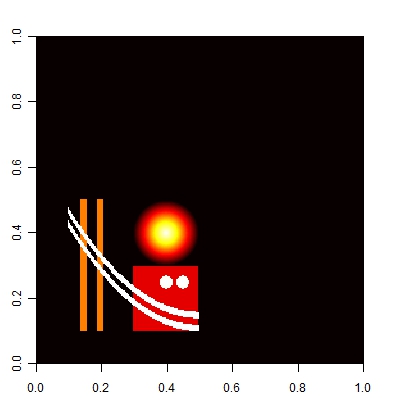}
  }{}
\ifthenelse{\equal{\user}{stephan}}{
  \includegraphics[width=7.5cm]{../IMG/trueimage.jpg}
  }{}
  \caption{\it Test image $f$ with grey scale values (rescaled to the unit interval), represented by colours ranging from black (0) over red and yellow to white (1).}
  \label{fig:testimage}
\end{figure}

We aim to apply our method to SMS microscopy and therefore, following the model (\ref{few-observations:model}), introduce a randomness of the selected pixel locations at each time point, such that every pixel of the original image contributes information exactly once. For every pixel location $x_j = \bigl((x_j)_1, (x_j)_2)$, $j \in \{1, \dotsc, N^2\}$, we randomly select a time point $t_j$ via the uniform distribution on $\T = \{0, 1/T, \dotsc, (T-1)/T\}$ (such that the $t_j$ are independent). Then we observe the (noisy) value $f(x_j)$ at time $t$ and location $x_j+\tilde\delta_t(\vart_0)$ if and only if $t=t_j$, otherwise we observe nothing (i.e. the value 0) at that time and pixel location. Note that $\T$ is defined in such a way that the whole time interval is always $[0, 1]$, i.e. if $T=100$, the time between two subsequent frames is exactly half as long as if $T$ were only 50.

First, a Gaussian error model (see (\ref{few-observations:model}))
\[ Z_{j, t} := Z\bigl((x_j)_1+\tilde\delta_t(\vart_0)_1, (x_j)_2+\tilde\delta_t(\vart_0)_2\bigr) 
  := \begin{cases} f(x_j) + \sigma\varep_{j,t} & \text{if } t=t_j,\\ 0 & \text{if } t\neq t_j \end{cases} \]
with $\sigma > 0$ and i.i.d. standard normal random variables $\varep_{j,t}$.

Secondly, in order to illustrate robustness of our estimation method against outliers, we assume that the $\varep_{j, t}$ are i.i.d. $t$-distributed with 2 degrees of freedom.

Finally, we simulate a Poisson error model, where the $Z_{j, t}$ are mutually independent and (given that $t=t_j$) Poisson distributed with intensity $f(x_j)$.
As mentioned in Remark \ref{rem:2.15}{.}\ref{rem:2.15a}, we use a variance stabilizing transformation $\sqrt{Z_{j, t}+1/4}$.
We minimize the discretized version of the contrast functional (\ref{eqn:contrast}) and use fast Fourier transform (FFT)
\teacher{
which here takes the form
$$
\tM_T(\vart) =
\sum_{k_1 =-\left\lfloor \xi_T /2\right\rfloor}^{\left\lfloor \xi_T /2\right\rfloor}
\sum_{k_2 =-\left\lfloor \xi_T /2\right\rfloor}^{\left\lfloor \xi_T /2\right\rfloor}
\left|\frac{1}{T} \sum_{t=0}^{(T-1)/T} \hat{Y}^{t}_{(k_1,k_2)} e^{2\pi i ( \delta_{t}(\vart)_1 k_1/N_1 + \delta_{t}(\vart)_2 k_2/N_2)} \right|^2,
$$
where
\[ \hat{Y}^{t}_{(k_1,k_2)} = \frac{1}{N_1 N_2} \sum_{x_1=1/N_1}^{1} \sum_{x_2=1/N_2}^{1} y_{(x_1, x_2), t}
 e^{-2\pi i ( x_1 k_1 + x_2 k_2)} \]
denotes the discrete Fourier transform of the noisy images,
}
which can be performed in $O(N^2 \cdot 2\log(N))$ steps.
The unique minimizer (cf. Step I of the proof of Theorem \ref{Consistency:Thm} which is detailed in the Appendix) is evaluated by a standard Nelder-Mead-type algorithm as implemented in the statistical software \textbf{R}.
We specify the parameters as follows: $\sigma = 0.1$ (Gaussian and $t_2$-distributed errors), $\xi_T = \sqrt{T}$.
As start value for the minimization algorithm we choose $0 \in \R^d$, where $d$ is the dimension of the drift parameter $\vart_0$.

\begin{sidewaystable}
\begin{center}
{\footnotesize
\begin{tabular}{lr|c|c|c|c}
  \hline
 & & linear drift & quadratic drift & cubic drift & drift with jump \\
  \hline
true parameter & $\vart_0$ & $( 0.195 , 0.117 )$ & $( 0.195 , 0.039 , 0 , 0.078 )$ & $( 0.195 , 0 , 0.039 , 0 , 0.039 , 0.195 )$ & $( 0.312 , 0.312 , 0.156 , 0.312 , 0.156 , 0.234 , 0.5 )$ \\
   \hline
\hline
error type & $T$ & $\hvart_T$ & $\hvart_T$ & $\hvart_T$ & $\hvart_T$ \\
   \hline
Gaussian & 20 & $( 0.191 , 0.115 )$ & $( 0.179 , 0.053 , 0.022 , 0.054 )$ & $( 0.135 , 0.108 , -0.021 , -0.018 , 0.091 , 0.153 )$ & $( 0.377 , 0.277 , 0.183 , 0.285 , 0.059 , 0.241 , 0.5 )$ \\ 
  & 50 & $( 0.195 , 0.121 )$ & $( 0.201 , 0.039 , 0.001 , 0.083 )$ & $( 0.191 , 0.015 , 0.027 , -0.006 , 0.056 , 0.184 )$ & $( 0.329 , 0.326 , 0.16 , 0.337 , 0.228 , 0.224 , 0.53 )$ \\ 
  & 100 & $( 0.2 , 0.119 )$ & $( 0.188 , 0.049 , 0.011 , 0.069 )$ & $( 0.225 , -0.064 , 0.085 , 0.003 , 0.022 , 0.204 )$ & $( 0.32 , 0.285 , 0.162 , 0.313 , 0.2 , 0.222 , 0.48 )$ \\ 
   \hline
$t$-distr. & 20 & $( 0.189 , 0.119 )$ & $( 0.169 , 0.062 , 0.053 , 0.016 )$ & $( 0.144 , 0.084 , 0.002 , -0.018 , 0.097 , 0.145 )$ & $( 0.302 , 0.376 , 0.147 , 0.286 , 0.023 , 0.266 , 0.54 )$ \\ 
  & 50 & $( 0.193 , 0.123 )$ & $( 0.186 , 0.046 , 0.029 , 0.056 )$ & $( 0.194 , 0.034 , 0.004 , 0.015 , 0.083 , 0.134 )$ & $( 0.321 , 0.304 , 0.159 , 0.313 , 0.152 , 0.241 , 0.51 )$ \\ 
  & 100 & $( 0.203 , 0.114 )$ & $( 0.168 , 0.072 , 0.052 , 0.022 )$ & $( 0.205 , -0.041 , 0.078 , 0.017 , 0.07 , 0.146 )$ & $( 0.342 , 0.384 , 0.141 , 0.3 , 0.152 , 0.234 , 0.5 )$ \\ 
   \hline
Poisson & 20 & $( 0.183 , 0.127 )$ & $( 0.197 , 0.03 , 0.016 , 0.075 )$ & $( 0.116 , 0.157 , -0.054 , 0.01 , 0.086 , 0.141 )$ & $( 0.268 , 0.352 , 0.148 , 0.337 , 0.094 , 0.263 , 0.54 )$ \\ 
  & 50 & $( 0.203 , 0.11 )$ & $( 0.172 , 0.062 , 0.002 , 0.076 )$ & $( 0.181 , 0.04 , 0.013 , -0.004 , 0.02 , 0.212 )$ & $( 0.361 , 0.293 , 0.182 , 0.318 , 0.109 , 0.245 , 0.53 )$ \\ 
  & 100 & $( 0.193 , 0.124 )$ & $( 0.151 , 0.081 , 0.031 , 0.047 )$ & $( 0.147 , 0.071 , 0.009 , -0.006 , 0.061 , 0.179 )$ & $( 0.285 , 0.317 , 0.155 , 0.325 , 0.192 , 0.226 , 0.5 )$ \\ 
   \hline
\end{tabular}
}
\caption{\it Displaying the estimated $\hvart_T$ for one simulation in different drift models. We have considered image sequences with $T\in \{20,50,100\}$ time points as well as Gaussian and Student-$t_2$ error models with variance $0.1^2$ and a Poisson model as explained in detail in the text.}
\label{lin.quadratic.cubic.reconstr.tab}
\end{center}

    \vspace*{0.2cm}

\begin{center}
{\footnotesize
\begin{tabular}{lr|c|c|c|c}
  \hline
 &  & linear drift & quadratic drift & cubic drift & drift with jump \\
  \hline
true parameter & $\vart_0$ & $( 0.195 , 0.117 )$ & $( 0.195 , 0.039 , 0 , 0.078 )$ & $( 0.195 , 0 , 0.039 , 0 , 0.039 , 0.195 )$ & $( 0.312 , 0.312 , 0.156 , 0.312 , 0.156 , 0.234 , 0.5 )$ \\
   \hline
\hline
error type & $T$ & mean of est's & mean of est's & mean of est's & mean of est's \\
   \hline
Gaussian & 20 & $( 0.196 , 0.116 )$ & $( 0.179 , 0.056 , 0.027 , 0.051 )$ & $( 0.151 , 0.081 , -0.001 , 0.003 , 0.064 , 0.166 )$ & $( 0.311 , 0.316 , 0.161 , 0.314 , 0.162 , 0.235 , 0.522 )$ \\ 
  & 50 & $( 0.195 , 0.117 )$ & $( 0.182 , 0.052 , 0.019 , 0.06 )$ & $( 0.177 , 0.037 , 0.017 , -0.004 , 0.074 , 0.162 )$ & $( 0.314 , 0.311 , 0.16 , 0.316 , 0.164 , 0.234 , 0.51 )$ \\ 
  & 100 & $( 0.195 , 0.117 )$ & $( 0.178 , 0.056 , 0.015 , 0.064 )$ & $( 0.168 , 0.037 , 0.026 , -0.001 , 0.07 , 0.164 )$ & $( 0.321 , 0.31 , 0.159 , 0.305 , 0.16 , 0.231 , 0.5 )$ \\ 
   \hline
$t$-distr. & 20 & $( 0.195 , 0.114 )$ & $( 0.177 , 0.056 , 0.028 , 0.05 )$ & $( 0.154 , 0.071 , 0.005 , -0.011 , 0.085 , 0.155 )$ & $( 0.305 , 0.311 , 0.171 , 0.303 , 0.166 , 0.236 , 0.517 )$ \\ 
  & 50 & $( 0.195 , 0.117 )$ & $( 0.182 , 0.052 , 0.022 , 0.056 )$ & $( 0.177 , 0.034 , 0.021 , -0.001 , 0.07 , 0.163 )$ & $( 0.312 , 0.312 , 0.16 , 0.313 , 0.161 , 0.234 , 0.509 )$ \\ 
  & 100 & $( 0.196 , 0.116 )$ & $( 0.176 , 0.058 , 0.016 , 0.063 )$ & $( 0.167 , 0.046 , 0.018 , -0.001 , 0.068 , 0.166 )$ & $( 0.311 , 0.309 , 0.159 , 0.316 , 0.157 , 0.235 , 0.506 )$ \\ 
   \hline
Poisson & 20 & $( 0.196 , 0.116 )$ & $( 0.174 , 0.06 , 0.021 , 0.057 )$ & $( 0.157 , 0.063 , 0.012 , 0.001 , 0.075 , 0.156 )$ & $( 0.311 , 0.317 , 0.162 , 0.314 , 0.155 , 0.237 , 0.524 )$ \\ 
  & 50 & $( 0.195 , 0.117 )$ & $( 0.174 , 0.06 , 0.021 , 0.057 )$ & $( 0.171 , 0.045 , 0.017 , 0 , 0.077 , 0.154 )$ & $( 0.322 , 0.31 , 0.164 , 0.313 , 0.156 , 0.235 , 0.514 )$ \\ 
  & 100 & $( 0.196 , 0.117 )$ & $( 0.176 , 0.058 , 0.024 , 0.055 )$ & $( 0.172 , 0.033 , 0.028 , -0.006 , 0.082 , 0.155 )$ & $( 0.312 , 0.317 , 0.157 , 0.314 , 0.159 , 0.233 , 0.506 )$ \\ 
   \hline
\end{tabular}
}
\caption{\it Setting as in Table \ref{lin.quadratic.cubic.reconstr.tab}. Displaying the means of the estimators $\hvart_T$ from 100 simulations each.}
\label{lin.quadratic.cubic.mean.tab}
\end{center}
\end{sidewaystable}

\begin{sidewaystable}
\begin{center}
\begin{tabular}{r|ccc|ccc|ccc}
  \hline
 &\multicolumn{3}{c|}{\mbox{Gaussian noise}}&\multicolumn{3}{c|}{\mbox{$t_2$ noise}}&\multicolumn{3}{c}{\mbox{Poisson model}}\\
  \hline
 & $T= 20 $ & $T= 50 $ & $T= 100 $ & $T= 20 $ & $T= 50 $ & $T= 100 $ & $T= 20 $ & $T= 50 $ & $T= 100 $ \\
  \hline
Linear drift & 6 & 5 & 5 & 26 & 6 & 8 & 9 & 8 & 7 \\ 
  Quadratic drift & 63 & 48 & 44 & 66 & 54 & 55 & 65 & 59 & 61 \\ 
  Cubic drift & 138 & 121 & 133 & 172 & 130 & 175 & 142 & 141 & 144 \\ 
  Drift with jump & 79 & 71 & 67 & 174 & 80 & 83 & 87 & 90 & 86 \\ 
    \hline
\end{tabular}
\caption{\it Thousandfold of the root of the mean squared error $\E ||\hvart_T - \vart_0||^2$ of the estimators $\hvart_T$ from 100 simulations each.}
\label{rmse:tab}

    \vspace*{1.5cm}

\begin{tabular}{lr|ccc|ccc|ccc}
  \hline
 &&\multicolumn{3}{c|}{\mbox{Gaussian noise}}&\multicolumn{3}{c|}{\mbox{$t_2$ noise}}&\multicolumn{3}{c}{\mbox{Poisson model}}\\
  \hline
 && $T= 20 $ & $T= 50 $ & $T= 100 $ & $T= 20 $ & $T= 50 $ & $T= 100 $ & $T= 20 $ & $T= 50 $ & $T= 100 $ \\
  \hline
  SI & Linear drift & $0.067$ & $0.050$ & $0.006$ & $-0.009$ & $-0.009$ & $-0.013$ & $0.011$ & $0.012$ & $-0.053$ \\ 
  & Quadratic drift & $-0.005$ & $0.011$ & $-0.019$ & $0.032$ & $-0.009$ & $-0.039$ & $-0.031$ & $-0.003$ & $-0.108$ \\ 
  & Cubic drift & $-0.024$ & $0.015$ & $-0.073$ & $-0.001$ & $0.002$ & $0.008$ & $-0.016$ & $0.048$ & $-0.041$ \\ 
  & Drift with jump & $0.013$ & $-0.034$ & $0.029$ & $0.007$ & $-0.015$ & $-0.015$ & $0.016$ & $0.031$ & $-0.055$ \\ 
  \hline
  $\hat{f}_T$ & Linear drift & $-0.679$ & $-0.842$ & $-0.707$ & $-0.205$ & $-0.102$ & $-0.192$ & $-0.387$ & $-0.318$ & $-0.338$ \\ 
  & Quadratic drift & $-0.411$ & $-0.447$ & $-0.432$ & $-0.147$ & $-0.060$ & $-0.128$ & $-0.205$ & $-0.188$ & $-0.179$ \\ 
  & Cubic drift & $-0.686$ & $-1.045$ & $-0.710$ & $-0.215$ & $-0.112$ & $-0.218$ & $-0.375$ & $-0.358$ & $-0.514$ \\ 
  & Drift with jump & $-0.201$ & $-0.326$ & $-0.582$ & $-0.096$ & $-0.217$ & $-0.072$ & $-0.078$ & $-0.123$ & $-0.289$ \\ 
   \hline
\end{tabular}
\caption{\it Blur measure values of the superimposed images (SI) and the estimated images $\hat{f}_T$. The corresponding estimators $\hvart_T$ are reported in Table \ref{lin.quadratic.cubic.reconstr.tab}. The images for cubic drift, drift with jump and $T \in \{20, 50\}$ are shown in Figures \ref{fig:sim.cubic} and \ref{fig:sim.jump}.}
\label{sim.blur.tab}
\end{center}
\end{sidewaystable}

\begin{figure}[htbp]
  \centering
\ifthenelse{\equal{\user}{alex}}{
  \includegraphics[width=14cm]{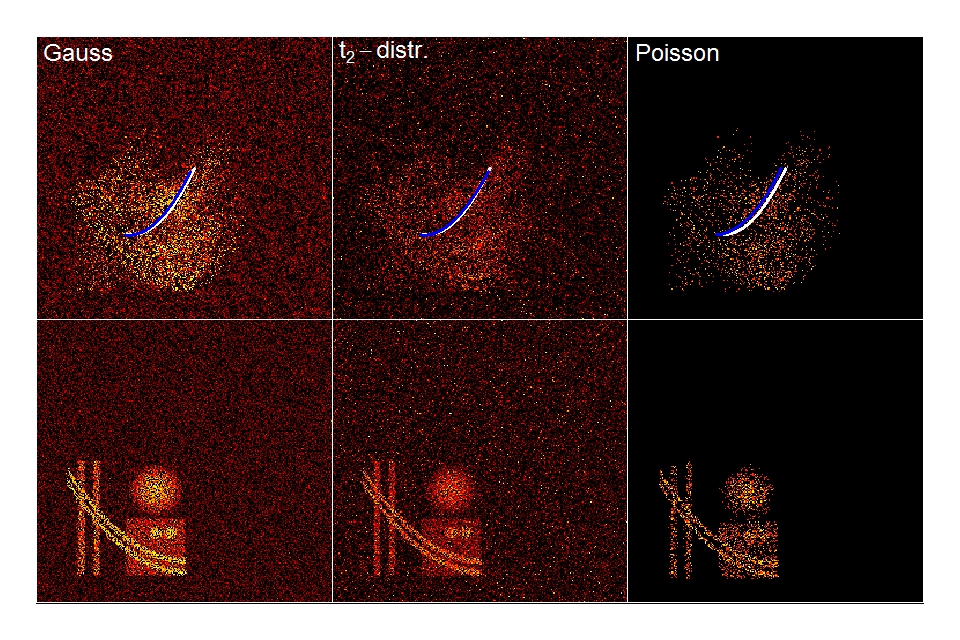}
  \includegraphics[width=14cm]{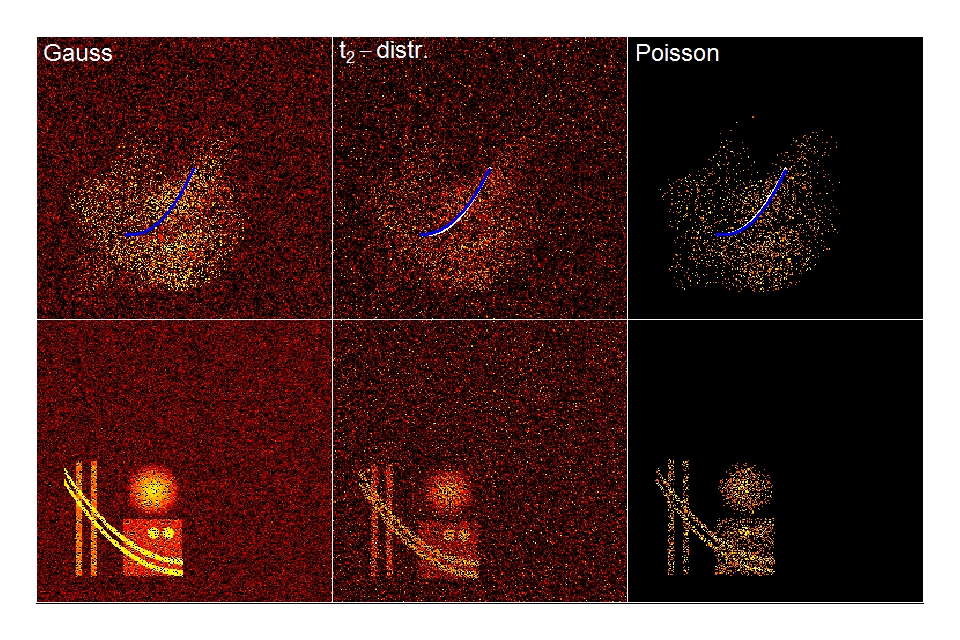}
  }{}
\ifthenelse{\equal{\user}{stephan}}{
  \includegraphics[width=14cm]{../IMG/gauss_t2_poisson_cubic_20.jpg}
  \includegraphics[width=14cm]{../IMG/gauss_t2_poisson_cubic_50.jpg}
  }{}
  \caption{\it The first row shows the superimposed images of sequences of $T=20$ noisy images subject to cubic drift (from left to right: Gaussian noise, Student-$t_2$ noise, Poisson model). The true drift curve of a single pixel is shown as a white curve segment on top of which we plot the estimated drift in blue. The true and the estimated parameters are reported in Table \ref{lin.quadratic.cubic.reconstr.tab}.
  Third row: The same with $T=50$ noisy shifted images. The second and fourth rows show the correspondingly reconstructed images.}
  \label{fig:sim.cubic}
\end{figure}

\begin{figure}[htbp]
  \centering
\ifthenelse{\equal{\user}{alex}}{
  \includegraphics[width=14cm]{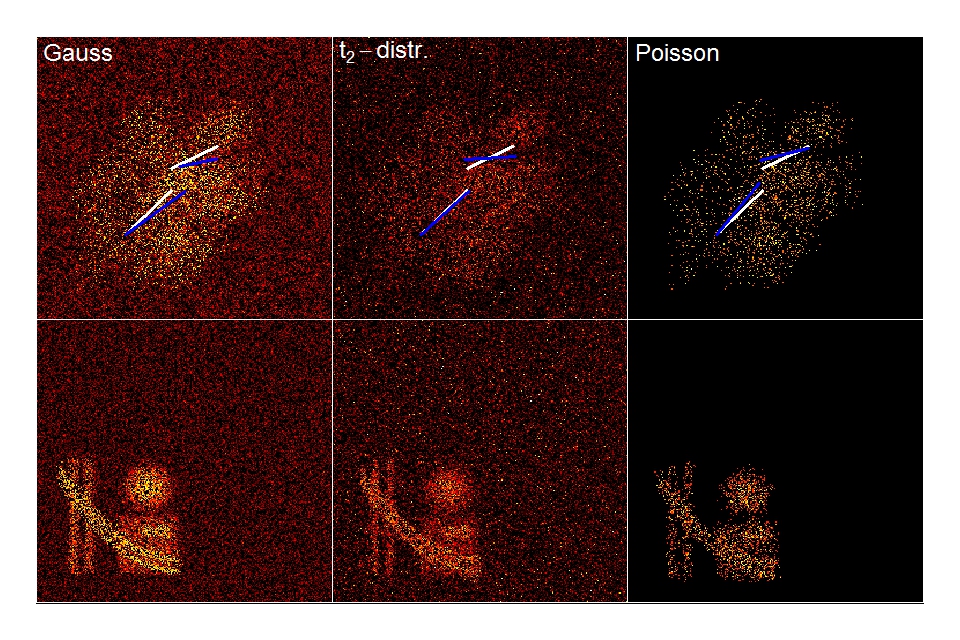}
  \includegraphics[width=14cm]{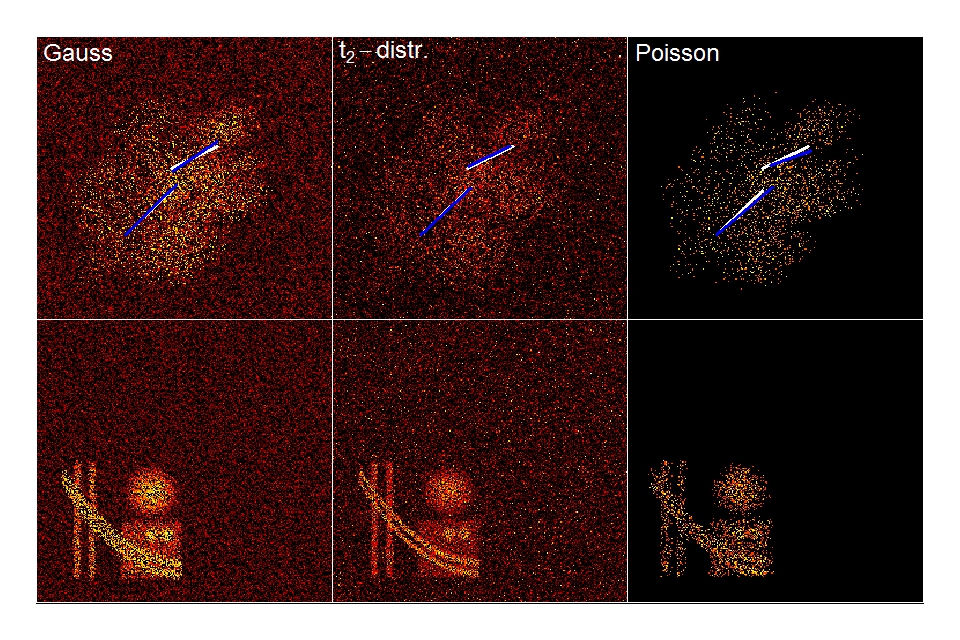}
  }{}
\ifthenelse{\equal{\user}{stephan}}{
  \includegraphics[width=14cm]{../IMG/gauss_t2_poisson_jump_20.jpg}
  \includegraphics[width=14cm]{../IMG/gauss_t2_poisson_jump_50.jpg}
  }{}
  \caption{\it The first row shows the superimposed images of sequences of $T=20$ noisy images subject to a piecewise linear drift with jump (from left to right: Gaussian noise, Student-$t_2$ noise, Poisson model). The true drift curve of a single pixel is shown as a white curve segment on top of which we plot the estimated drift in blue. The true and the estimated parameters are reported in Table \ref{lin.quadratic.cubic.reconstr.tab}.
  Third row: The same with $T=50$ noisy shifted images. The second and fourth rows show the correspondingly reconstructed images.}
  \label{fig:sim.jump}
\end{figure}

\paragraph{Polynomial drift models}
 have been described in Example  \ref{polynomial-drift:ex}. In the linear drift model we have $\delta_t(\vart) = \vart t$.
\teacher{
with $\vart=(\vart_1,\vart_2) \in \Theta \subset \R^2$ for a large closed square $\Theta$.
}
For the $x_1$-direction we choose $\vart_1= 50/256$, in $x_2$-direction $\vart_2=30/256$, i.e. the image is shifted by 50 pixels in $x_1$-direction and by 30 pixels in $x_2$-direction over the time interval $[0, 1]$ which, for $T=20$, translates to velocities of 2.5 and 1.5 pixels per frame, respectively, and so on.

In the quadratic drift model we set $\delta_t(\vart) = (\vart_{11}, \vart_{21})' t + (\vart_{12}, \vart_{22})' t^2$.
\teacher{
with $\vart=(\vart_{11},\vart_{12}, \vart_{21}, \vart_{22}) \in \Theta\subset\R^4$ with a large closed cube $\Theta$.
}
For the $x_1$-direction we choose $(\vart_{11}, \vart_{12}) = (50/256, 10/256)$, in $x_2$-direction $(\vart_{21}, \vart_{22}) = (0, 20/256)$.

Similarly we employ the cubic drift model $\delta_t(\vart) = (\vart_{11}, \vart_{21}) t + (\vart_{12}, \vart_{22}) t^2 + (\vart_{13}, \vart_{23}) t^3$.
\teacher{
with $\vart=(\vart_{11},\vart_{12}, \vart_{13}, \vart_{21}, \vart_{22}, \vart_{23}) \in  \Theta \subset \R^6$ with a large closed cube $\Theta$.
}
For the $x_1$-direction we choose $(\vart_{11}, \vart_{12}, \vart_{13}) = (50/256, 0, 10/256)$, in $x_2$-direction $(\vart_{21}, \vart_{22}, \vart_{23}) = (0, 10/256, 50/256)$.

The results of one estimate are reported in Table \ref{lin.quadratic.cubic.reconstr.tab}, the averages of 100 simulations in Table \ref{lin.quadratic.cubic.mean.tab}. As recorded in Table \ref{rmse:tab}, with increasing degree of the polynomials, the mean squared error increases. Still for the cubic drift model, visual inspection of the estimated images in Figure \ref{fig:sim.cubic} exhibits good reconstruction quality.

To evaluate our drift correction we use a version of the motion blur measure $m_2$ proposed in \cite{Xu2013} which is based on the work of \cite{Chen2010}. It is defined as
\begin{equation}
  \label{blur.measure:eq}
  m_2 := \log\left(\frac{J(\varphi_{\textup{max}})}{J(\varphi_{\textup{min}})}\right).
\end{equation}
Here, $J(\varphi) := \sum_{j=1}^{N^2} \Bigl(\Delta I\bigl((x_j)_1, (x_j)_2\bigr)_\varphi\Bigr)^2$ is the average squared directional derivative of an image $I$ in direction $\bigl(\cos(\varphi), \sin(\varphi)\bigr)'$, $\varphi \in [0, 2\pi)$, $\varphi_{\textup{min}}$ is the motion direction, and $\varphi_{\textup{max}}$ is the direction perpendicular to $\varphi_{\textup{min}}$. Note, that $J(\varphi) = 0$ iff $I$ is constant in direction $\varphi$. An advantage of $m_2$ is that it does not depend on the scale of the image. In \cite{Chen2010}, $\varphi_{\textup{min}}$ is selected as a minimizer of the functional $J$. The idea is that the image is blurred in the direction of the motion and thus the image intensity $f$ changes little in this direction (on average), while it varies much more in the perpendicular direction. The minimizer is obtained as follows:

Rewrite $J(\varphi) = \bigl(\cos(\varphi), \sin(\varphi)\bigr) D \bigl(\cos(\varphi), \sin(\varphi)\bigr)'$, where
\[ D = \left(\begin{array}{cc}
d_{11} & d_{12} \\
d_{12} & d_{22}
\end{array} \right), \quad d_{rs} := \sum_{j=1}^{N^2} \frac{\partial I}{\partial (x)_r}\bigl((x_j)_1, (x_j)_2\bigr)\cdot\frac{\partial I}{\partial (x)_s}\bigl((x_s)_j, (x_j)_2\bigr). \]
Then, $J(\varphi) = d_{11} + d_{12}\sin(2\varphi) + (d_{22} - d_{11})\bigl(\sin(\varphi)\bigr)^2$. We get the minimum value of $J$ by setting $dJ(\varphi)/d\varphi = d_12\cos(2\varphi) + (d_{22}-d_{11})\sin(2\varphi) = 0$, which yields $\varphi = \varphi_m + (r\pi)/2$, $r \in \Z$, with $\varphi_m = \arctan\bigl(2d_{12}/(d_{11}-d_{22})\bigr)/2$. The motion direction is then determined by
\[ \varphi_{\textup{min}} := \begin{cases}
                         \varphi_m & \text{if } J(\varphi_m) \leq J(\varphi_m + \pi/2),\\
                         \varphi_m + \pi/2 & \text{if } J(\varphi_m) > J(\varphi_m + \pi/2).
                       \end{cases} \]

The $J(\varphi_{\textup{max}})$ also keeps the blur measure value low in the case of an image that is (almost) constant over wide areas (where the directional derivative is small in any direction). Since we already know the true drift $\delta_t(\vart)$, we choose the average drift direction $\int_0^1 \partial \delta_t(\vart)/\partial t \, d t = \delta_1(\vart)$ as the motion direction (after normalization). Hence, in our context (where $I$ is either $\hat{f}_T$ or the superimposed image, see Table \ref{sim.blur.tab}) we get the motion blur measure
\begin{equation}
  \label{blur.measure:eq2}
  \tilde m_2 := \log\left(\frac{\sum_{j=1}^{N^2} \bigl\langle\grad_x I\bigl((x_j)_1, (x_j)_2\bigr), \textup{Rot}_{\pi/2}\delta_1(\vart)/|| \delta_1(\vart)||_2\bigr\rangle^2}{\sum_{j=1}^{N^2} \bigl\langle\grad_x I\bigl((x_j)_1, (x_j)_2\bigr), \delta_1(\vart)/||\delta_1(\vart)||_2\bigr\rangle^2}\right),
\end{equation}
where $||\cdot ||_2$ is the Euclidean norm and
\[ \textup{Rot}_{\pi/2} := \left(\begin{array}{rr}
\cos(\pi/2) & -\sin(\pi/2) \\
\sin(\pi/2) & \cos(\pi/2)
\end{array} \right) \]
is the rotation through $\pi/2$. We calculated an approximation of $\grad_x I$ as follows (see e.g. \cite{Gonzalez2002}).

Let $I$ be a pixel image of size $M \times N$. For every pixel location $(i, j)$, $i \in \{1, \dotsc, M\}$, $j \in \{1, \dotsc, N\}$, the gradient of $I$ is defined as $\nabla I(i, j) := \bigl(G_x(i, j), G_y(i, j)\bigr)'$ with
\[ G_x(i, j) := \sum_{i', j' = -1}^1 S_x(i'+2, j'+2) I(i+i', j+j'), \quad
   G_y(i, j) := \sum_{i', j' = -1}^1 S_y(i'+2, j'+2) I(i+i', j+j'), \]
where we extend the image periodically, i.e. $I(0, j) := I(M, j)$, $I(M+1, j) := I(1, j)$, $I(i, 0) := I(i, N)$, and $I(i, N+1) := I(i, 1)$ and so on. Here, $S_x$ and $S_y$ are the Sobel masks
\[ S_x := \frac18\left(\begin{array}{rrr}
-1 & 0 & 1 \\
-2 & 0 & 2 \\
-1 & 0 & 1
\end{array} \right), \quad S_y := \frac18\left(\begin{array}{rrr}
-1 & -2 & -1 \\
0 & 0 & 0 \\
1 & 2 & 1
\end{array} \right). \]
Often, especially if $I$ is noisy, it is beneficial to smooth the image first, e.g. with a Gauss kernel
\[ K := \frac1{16}\left(\begin{array}{rrr}
1 & 2 & 1 \\
2 & 4 & 2 \\
1 & 2 & 1
\end{array} \right). \]
This means that we replace every $I(i, j)$ with the weighted average
\[ \bar I(i, j) := \sum_{i', j' = -1}^1 K(i'+2, j'+2) I(i+i', j+j') \]
of the $3\times 3$ pixel area centred on it. Because our images are very noisy, we repeat that procedure once more.

\teacher{
This combination of smoothing and discrete differentiation can be done simultaneously by applying the Sobel type masks of size $7\times 7$,
\[ \bar S_x := \frac1{1280}\left(\begin{array}{rrrrrrr}
-1 & -4 & -5 & 0 & 5 & 4 & 1 \\
-6 & -24 & -30 & 0 & 30 & 24 & 6 \\
-15 & -60 & -75 & 0 & 75 & 60 & 15 \\
-20 & -80 & -100 & 0 & 100 & 80 & 20 \\
-15 & -60 & -75 & 0 & 75 & 60 & 15 \\
-6 & -24 & -30 & 0 & 30 & 24 & 6 \\
-1 & -4 & -5 & 0 & 5 & 4 & 1
\end{array} \right), \quad \bar S_y := \bar S'_x. \]
}

The motion blur values of the superimposed images and the corresponding estimated images are reported in Table \ref{sim.blur.tab}. The estimated image (i.e. with drift correction) is always less blurry than the superimposed image.

\teacher{
{\bf The following is not a good blur measure for our problem. Use the motion blur measure above!}
To evaluate our drift correction we use a standard blur measure from computer vision.
The following is a simplified version of the edge blur measure proposed in \cite{Gonzalez2002}.
Let $I$ be a pixel image of size $M \times N$. For every pixel location $(i, j)$, $i \in \{2, \dotsc, M-1\}$, $j \in \{2, \dotsc, N-1\}$, we calculate the corresponding gradient magnitude $G(i, j) := \sqrt{G_x(i, j)^2+G_y(i, j)^2}$, where
\[ G_x(i, j) := \sum_{i', j' = -1}^1 S_x(i'+2, j'+2) I(i+i', j+j'), \quad
   G_y(i, j) := \sum_{i', j' = -1}^1 S_y(i'+2, j'+2) I(i+i', j+j'), \]
with the Sobel masks
\[ S_x := \left(\begin{array}{ccc}
-1 & 0 & 1 \\
-2 & 0 & 2 \\
-1 & 0 & 1
\end{array} \right), \quad S_y := \left(\begin{array}{ccc}
-1 & -2 & -1 \\
0 & 0 & 0 \\
1 & 2 & 1
\end{array} \right). \]
If the image $I$ is blurry at $(i, j)$, i.e. if there is no sharp edge, then the gradient magnitude at $(i, j)$ is small. Otherwise, if the image has a sharp edge at $(i, j)$, the gradient magnitude is big. Thus, the multiplicative inverse of the average over all gradient magnitudes, the \emph{edge blur}
\begin{equation}
\label{blur.measure.old:eq}
\mu_e(I) := \left(\frac1{(M-2)(N-2)} \sum_{i=2}^{M-1} \sum_{j=2}^{N-1} G(i, j)\right)^{-1},
\end{equation}
is big for a blurry image and small for a clear image. The blur measure values of the superimposed images and the corresponding estimated images are reported in Table \ref{sim.blur.tab}. In the case of additive Gaussian or $t_2$ noise, the estimated image (i.e. with drift correction) is usually less blurry than the superimposed image. In contrast, in the case of the Poisson model, the blur measure value of the estimated image is always much bigger than the corresponding value of the superimposed image. This is due to the fact that in our Poisson model there is no noise present outside of the support of the image, i.e. even the noisy image is for the most part constant. This leads to a small average gradient altitude and in consequence to a large blur measure value. Since the area of constant image value of the corrected image is obviously much bigger than the one of the superimposed image, the blur measure falsely reports the first to be blurrier than the second.
}

\teacher{
\paragraph{Drift model with changepoint.}
We also consider a piecewise linear drift model with (unknown) changepoint
\[ \delta_t(\vart) = \begin{cases}
                        (\vart_{11}, \vart_{21})' t & \text{if } t \leq t_0 \\
                        (\vart_{12}, \vart_{22})' t + (\vart_{11} - \vart_{12}, \vart_{22} - \vart_{21})' t_0 & \text{if } t > t_0
                      \end{cases} \]
with $\vart_0=(\vart_{11}, \vart_{12}, \vart_{21}, \vart_{22}, t_0) \in \Theta\subset\R^5$, i.e. the drift changes direction at the unknown time point $t_0$.

For the simulation, we choose $(\vart_{11}, \vart_{21}) = (50/256, 10/256)$, $(\vart_{12}, \vart_{22}) = (30/256, 70/256)$, and $t_0=0.35$. We estimate $\vart_0$ and $t_0$ by the estimator with the smallest contrast value.
Once again, we use the Gaussian noise, the $t$-distributed noise with 2 degrees of freedom, and the Poisson model. The overlaid shifted images as well as their reconstructions are visualized in Figure \ref{fig:sim.change}. In the ultimate column of Table \ref{lin.quadratic.cubic.reconstr.tab} the estimation results are summarized.
}

\paragraph{Drift model with jump.}
Finally, in order to analyse the robustness of our method, e.g. when a smooth drift abruptly jumps due to an external shock, we consider a piecewise linear drift model with a jump at an unknown time.
\[ \delta_t(\vart) = \begin{cases}
                        (\vart_{11}, \vart_{21})' t & \text{if } t \leq t_0 \\
                        (\vart_{12}, \vart_{22})' (t - t_0) + (\vart_{13}, \vart_{23})' & \text{if } t > t_0
                      \end{cases} \]
with $\vart_0=(\vart_{11}, \vart_{12}, \vart_{13}, \vart_{21}, \vart_{22}, \vart_{23}, t_0) \in \Theta\subset\R^7$, i.e. the drift function jumps to the point $(\vart_{13}, \vart_{23})'$ at the unknown time point $t_0$.
As mentioned before, this type of drift does not meet our assumptions, e.g. the Lipschitz property in Assumption \ref{delta-Lipschitz:as} is not fulfilled as one can easily see by perturbing the parameter $t_0$.

For the simulation, we choose $(\vart_{11}, \vart_{21}) = (80/256, 80/256)$, $(\vart_{12}, \vart_{22}) = (80/256, 40/256)$, $(\vart_{13}, \vart_{23}) = (40/256, 60/256)$, and $t_0=0.5$.
We estimate $\vart_0$ and $t_0$ by the estimator with the smallest contrast value.
Once again, we use the Gaussian noise, the $t$-distributed noise with 2 degrees of freedom, and the Poisson model. The overlaid shifted images as well as their reconstructions are visualized in Figure \ref{fig:sim.jump}. In the ultimate column of Table \ref{lin.quadratic.cubic.reconstr.tab} the estimation results are summarized.

Note that the average drift direction used to determine the motion blur (\ref{blur.measure:eq2}) in the case of a drift function with jump is (before normalization) $t_0 \delta_{t_0}(\vart) + (1-t_0) \bigl(\delta_1(\vart) - \lim_{t\searrow t_0}\delta_t(\vart)\bigr)$ instead of just $\delta_1(\vart)$. The resulting blur values are reported in Table \ref{sim.blur.tab}.

\paragraph{Computational time.}
For polynomial drift, simulating a sample and computing the estimates required between 2 and 7 seconds on an Intel Core i7-4800MQ with 2.7 GHz. For the drift with jump, we considered jump times $\hat t_0$ on the grid $\{ 2/100, \dotsc, 98/100 \}$ and, given $\hat t_0$, minimized the contrast functional w.r.t. $\vart$ to find the estimator for $(\vart_0, t_0)$ with overall minimal contrast. This leads to higher computational times between about 21 seconds (Gaussian noise, $T=20$) and 3 minutes (Poisson model, $T=100$).

\quad

Our simulations show that the proposed estimation method works well and significantly reduces blurring. This has been demonstrated for a polynomial drift even if we observe just a small part of the shifted image at every time point. It also behaves robust to non-normality. We have obtained good results for reconstruction with $t_2$-distributed noise and in a Poisson model. Finally, we studied the case of a piecewise linear drift with a jump at an unknown time point, i.e. a discontinuous drift. Although Assumption \ref{delta-Lipschitz:as} is not satisfied in this case, we found that even in this setting our estimator performs quite well.

\section{SMS Nanoscopy Data}\label{sec:application}
We demonstrate how the estimation method proposed in Section \ref{sec:est} can be used to process SMS nanoscopy data. In particular, we address suitable choices for the drift model $\delta_t(\vart)$ as well as computational issues.

We used a standard SMS-setup for this study (modified from \cite{Geisler2012}) which was equipped with a home-built stable sample holder ensuring that the sample drift is well below the expected average localization accuracy. The excitation and switching light beams were provided by continuous wave lasers  running at 532 nm (HB-Laser, Germany) and 371 nm (Cube 375-16C, Coherent Inc, USA). The excitation power density in the sample plane was 5 $kW/cm^2$. If necessary the power density of the switching light was ramped up from  0 to a few 10 $W/cm^2$.  The objective was a NA 1.2 60x water immersion lens (UPLSAPO 60XW, Olympus Deutschland GmbH, Germany) and the camera was an EMCCD camera (Ixon X3 897, Andor Technology, Northern Ireland).
The microtubule network ($\beta$-tubulin) of fixed Vero-cells was immuno-labelled with the caged fluorescent dye Abberior Cage 552 (Abberior GmbH, Germany) according to standard protocols. The fiducial markers (FluoSpheres, Invitrogen, USA) were incorporated into the sample by spincoating a polyvinyl alcohol-fluosphere solution.
For image acquisition a series of  $T'=40,000$ frames was taken with a frame exposure time of 15 ms, resulting in a total image acquisition time of about 10 minutes. During this time an experimental drift was applied by moving the sample with respect to the objective lens with two linear positioners (SLC-17, SmarAct GmbH, Germany) in steps of 1 nm in a controlled manner.

The lateral positions of the fluorescent markers were then calculated from the single frames by a mask-fitting of the respective Airy spot \cite{Thompson2002}. These locations were tabulated together with the respective time of detection $t'\in\{1,\dotsc,T'\}$.

We analyse two data sets (networks I and II) from Abberior Cage 552-labeled $\beta$-tubulin networks in fixed vero-cells. The position histogram of the first data set is shown in Figure \ref{data.reconstr:fig}. It contains 1,077,909 positions recorded in 40,000 frames, which are distributed over an area of about $(9.5~\mu\textup{m})^2$. The second dataset (see Figure \ref{fig:raw}) contains 5,373,442 positions recorded in 40,000 frames, of which 80,629 positions were assigned to two fiducial markers. As nearby fluorosphores cannot be discerned from the fiducials, this number slightly deviates from one registered position per frame per fiducial. The data of this set are distributed over an area of about $51~\mu\textup{m} \times 24~\mu\textup{m}$. The positions of the fiducials were used to compare the quality of our method to the current state of the art of drift correction.

To analyse the data with our method we create $T=2000$ position histograms of $n=N^2=256^2$ bins of the first data set and $T=2000$ position histograms of $n=N^2=512^2$ bins of the second data set, i.e. in both cases, we look at $T=2000$ position histograms which are composed of the data points of $T'/T = 20$ frames each (cf. Figure \ref{single.images:fig}). Note that, in particular, we made the positions histograms of the second dataset quadratic, compressing them in the $x_1$-direction. Our empirical analysis shows that the estimates are not strongly influenced by the choices of $T$ and $N$, however too small values circumvent the registration of small movements and for large values computational problems arise in terms of speed. This is in accordance with our previous simulation results.

\begin{figure}[htbp]
  \centering
\ifthenelse{\equal{\user}{alex}}{
  \includegraphics[width=15cm]{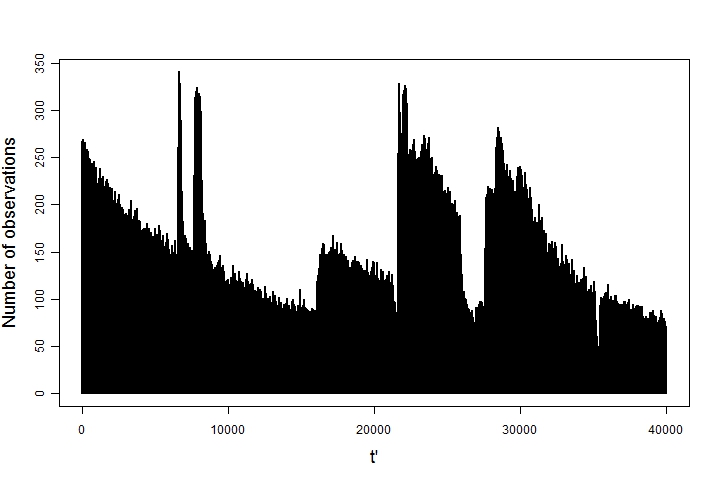}
  }{}
\ifthenelse{\equal{\user}{stephan}}{
  \includegraphics[width=15cm]{../IMG/Cage552_PVA_YG200_01_linear-quadratic_weights.jpg}
  }{}
  \caption{\it Number of data points (registered markers) $n'_{t'}$ per frame (network II)}
  \label{fig:weights}
\end{figure}

As exemplarily demonstrated in Figure \ref{fig:weights} for the second data set shown in Figure \ref{fig:raw}, the number of recorded markers $n'_{t'}$ varies as the experiment continues, possibly temporarily following a truncated exponential distribution.
Since switched on markers bleach after emitting light, one has to increase the switching laser intensity occasionally to get a roughly constant number of observations per frame. This is why the $n'_{t'}$ in Figure \ref{fig:weights} increase drastically every now and then.
The variation of the number of recorded positions implies that the uncertainty in the position histograms varies over time. Our method can easily account for this fact by maximizing a weighted version of $\tM_T(\vart)$,
\begin{eqnarray}\label{min2}
\tM_T^w(\vart) =
\sum_{k_1 =-\left\lfloor \xi_T/2\right\rfloor}^{\left\lfloor \xi_T/2\right\rfloor}
\sum_{k_2 =-\left\lfloor \xi_T/2\right\rfloor}^{\left\lfloor \xi_T/2\right\rfloor}
\left|\sum_{t=0}^{T-1} \omega_t\; \hat{Y}^{t}_{k_1,k_2} e^{2\pi i ( \tilde\delta_{t}(\vart)_1 k_1/N + \tilde\delta_{t}(\vart)_2 k_2/N)} \right|^2
\end{eqnarray}
with weights $\omega_t = n_t/\sum_t n_t$ for $t\in \{1,\dotsc,T\}$ and $\xi_T=0.2 N$.

\teacher{
Again the minimizer is evaluated by a standard Nelder-Mead-type algorithm with zero as starting value.
}

Note that experiments have been performed such that a fiducial marker has been included into the sample, i.e. a persistent fluorescence source, which enables us to track the drift easily, for testing purposes. We stress that this is currently state of the art technology to align SMS images over time (see the Introduction). In order to investigate the validity of our method, we delete the data originating from the fiducial markers from the observed sample and use it for verification only.
The visual inspection of the fiducial indicates a slight overall upward drift by 15 pixels i.e. about 715 nm (cf. Figure \ref{fig:raw}).

\begin{figure}[htbp]
  \centering
\ifthenelse{\equal{\user}{alex}}{
  \includegraphics[width=15cm]{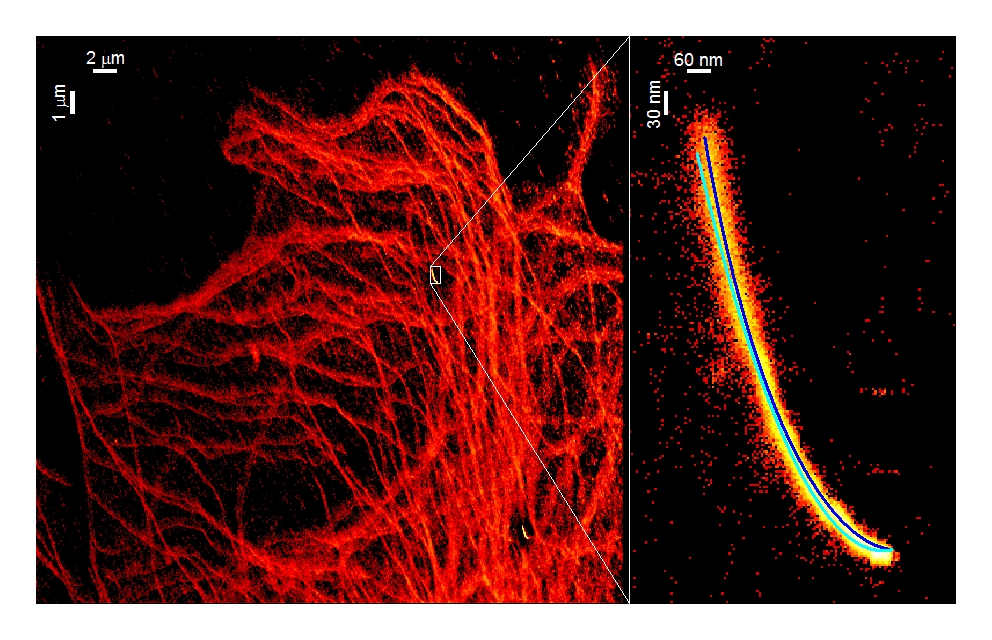}
  }{}
\ifthenelse{\equal{\user}{stephan}}{
  \includegraphics[width=15cm]{../IMG/Cage552_PVA_YG200_01_raw_image_and_bead.jpg}
  }{}
  \caption{\it Drift blurred position histogram of network II with $n=512^2$ bins (left) and close up on the area with the fiducial
  with a third order polynomial fit of its motion in blue and the estimated linear-quadratic drift curve in cyan (right).}
  \label{fig:raw}
\end{figure}

As reported in Tables \ref{tab:puredrift} and \ref{tab:app1}, to both datasets we apply four different drift models and choose the one with the smallest motion blur $m_2$ (cf. (\ref{blur.measure:eq})) to work with.
The time required for the computation of $\hvart_T$ depends on the considered drift type and may last up to several minutes (on a Core AMD Opteron with 2.6 GHz), depending on the bin width. If computational time is a major issue, we recommend for practical purposes to split the image in several domains and perform the drift estimation separately. The final estimator can be obtained by averaging.

Since we do not know the true drift function (as was the case in the simulation study), we determine the motion direction via minimization of $J$ (see Section \ref{sec:simulations} for details).
In particular in Table \ref{tab:app1} we report the $m_2$-value for the correction via fiducial tracking, too.
We track the fiducial marker by estimating its location at time $t \in \{1,\dotsc,T\}$ with the average of its data in the $t$-th position histogram.
The result indicates that our estimation method is at least competitive with tracking of the fiducial movement provided the motion is not severely misspecified (like linear/linear in Table \ref{tab:app1}). The reconstructions of the image for fiducial tracking and linear-quadratic drift are compared with one another in Figure \ref{fig:corr}.

\begin{table}[htbp]
\begin{center}
{\footnotesize
\begin{tabular}{ll||rrr|rrr|rr}
  \hline
\multicolumn{2}{c||}{drift models} & \multicolumn{3}{c|}{$x_1$-dir.} &\multicolumn{3}{c|}{$x_2$-dir.} & contrast & motion blur \\
$x_1$-dir.& $x_2$-dir.& $\hat\vart_{T;13}$ & $\hat\vart_{T;12}$ & $\hat\vart_{T;11}$ & $\hat\vart_{T;23}$ & $\hat\vart_{T;22}$ & $\hat\vart_{T;21}$ & $M_T$ & $m_2$ \\
  \hline
 linear & linear & - & - & -0.044 & - & - & 0.044 & 6.4679e-3 & 0.714 \\
 linear & quadratic & - & - & -0.047 & - & 0.059 & 0.002 & 6.4598e-3 & 0.582 \\
 quadratic & quadratic & - & -0.006 & -0.041 & - & 0.063 & -0.001 & 6.4594e-3 & 0.589 \\
 cubic & cubic & -0.002 & -0.002 & -0.043 & 0.051 & 0.003 & 0.015 & 6.4607e-3 & 0.593 \\
  \hline
 \multicolumn{2}{c||}{superimposed image} &\multicolumn{6}{c|}{~}& 6.4876e-3 & 0.830 \\ \hline
\end{tabular}
}
\caption{\it Estimation results for the  $\beta$-tubulin network I shown in Figure \ref{data.reconstr:fig} for several drift models.
}\label{tab:puredrift}
\end{center}
\end{table}

\begin{table}[htbp]
\begin{center}
{\footnotesize
\begin{tabular}{ll||rrr|rrr|rr}
  \hline
\multicolumn{2}{c||}{drift models} & \multicolumn{3}{c|}{$x_1$-dir.} &\multicolumn{3}{c|}{$x_2$-dir.} & contrast & motion blur \\
$x_1$-dir.& $x_2$-dir.& $\hat\vart_{T;13}$ & $\hat\vart_{T;12}$ & $\hat\vart_{T;11}$ & $\hat\vart_{T;23}$ & $\hat\vart_{T;22}$ & $\hat\vart_{T;21}$ & $M_T$ & $m_2$ \\
  \hline
 linear & linear & - & - & -0.009 & - & - & 0.014 & 8.7699e-3 & 0.546 \\
 linear & quadratic & - & - & -0.009 & - & 0.022 & -0.002 & 8.7443e-3 & 0.343 \\
 quadratic & quadratic & - & 0.001 & -0.01 & - & 0.022 & -0.002 & 8.7442e-3 & 0.344 \\
 cubic & cubic & 0.002 & -0.004 & -0.008 & 0.025 & -0.005 & 0.005 & 8.7464e-3 & 0.368 \\
  \hline
 \multicolumn{2}{c||}{fiducial tracking} &\multicolumn{6}{c|}{~}& 8.9203e-3 & 0.351 \\
 \multicolumn{2}{c||}{superimposed image} &\multicolumn{6}{c|}{~}& 8.8649e-3 & 0.897
\\ \hline
\end{tabular}
}
\caption{\it Estimation results for the $\beta$-tubulin network II with fiducial markers, cf. Figure \ref{fig:raw}, for several drift models. The displayed motion blur values are for the respective images with fiducial markers removed.}
\label{tab:app1}
\end{center}
\end{table}

\begin{figure}[htbp]
  \centering
\ifthenelse{\equal{\user}{alex}}{
  \includegraphics[width=\textwidth]{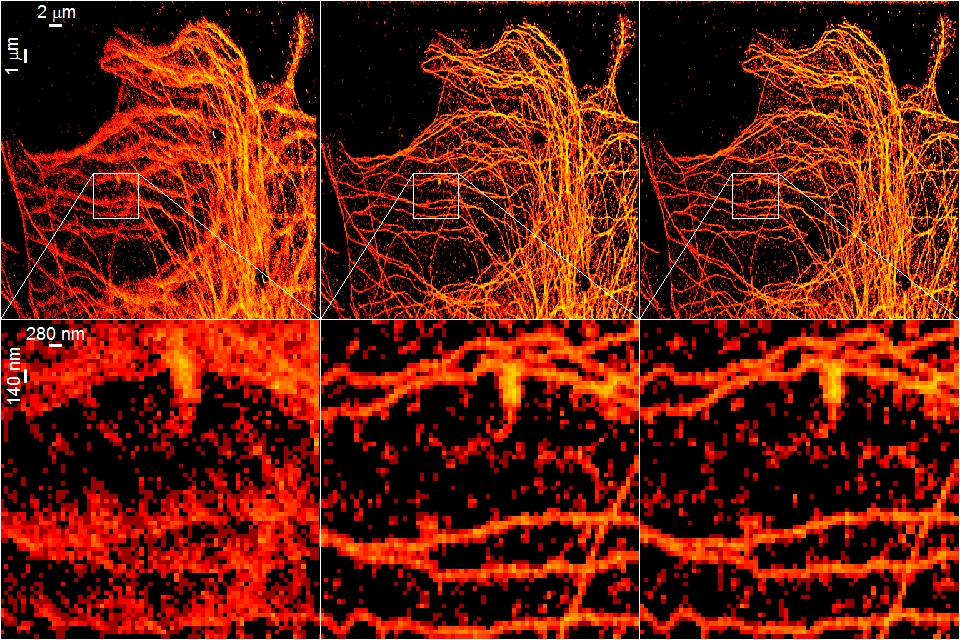}
  }{}
\ifthenelse{\equal{\user}{stephan}}{
  \includegraphics[width=\textwidth]{../IMG/Cage552_PVA_YG200_01_linear-quadratic_details}
  }{}
  \caption{\it Drift blurred network II (top left), by fiducial marker tracking corrected image (top center) and with the assumption of a linear-quadratic drift estimated image (top right), as well as detailed views inside the white boxes (bottom row)}
  \label{fig:corr}
\end{figure}

\section{Bootstrap confidence bands}\label{Bootstrap:scn}

\begin{figure}[htbp]
  \centering
\ifthenelse{\equal{\user}{alex}}{
  \includegraphics[width=15cm]{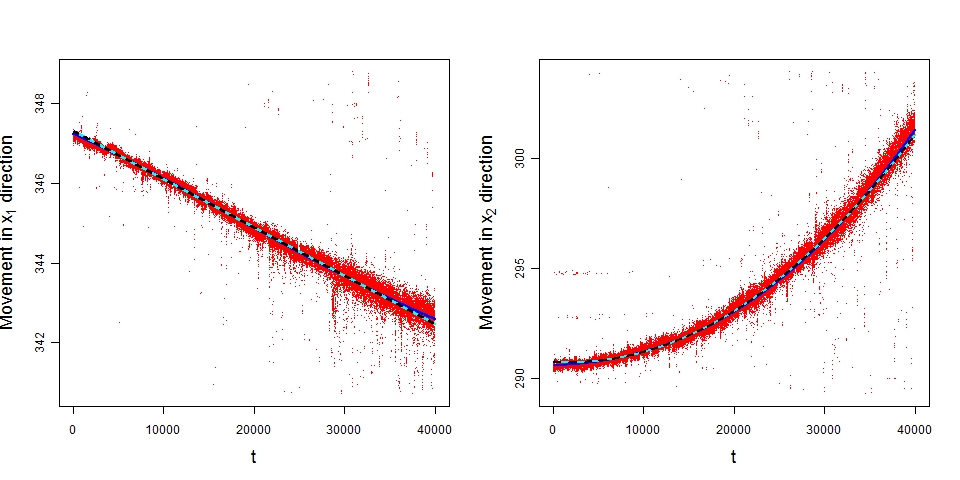}
  }{}
\ifthenelse{\equal{\user}{stephan}}{
  \includegraphics[width=15cm]{../IMG/Cage552_PVA_YG200_01_linear-quadratic_konfidenz.jpg}
  }{}
  \caption{\it Fiducial marker data in $x_1$- and $x_2$-direction with fitted third order polynomials (blue), estimated drift functions (cyan) and confidence bands (dashed). The movement axes are labelled in pixels, i.e. the fiducial data extends over an area of about $423\textup{ nm} \times 738\textup{ nm}$. Because we use the entire image, the confidence band is much sharper than the (few) fiducial marker data.}
  \label{fig:konf}
\end{figure}

\begin{figure}[htbp]
  \centering
\ifthenelse{\equal{\user}{alex}}{
  \includegraphics[width=7.5cm]{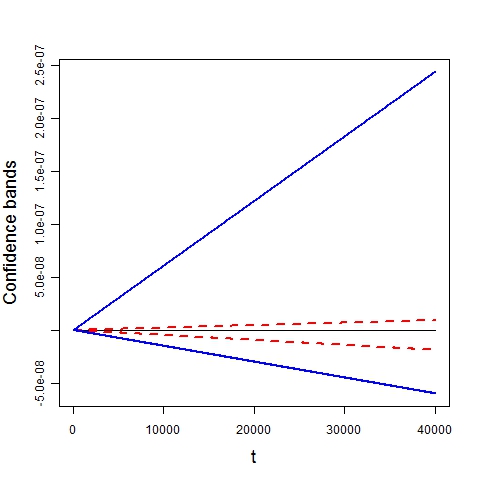}
  }{}
\ifthenelse{\equal{\user}{stephan}}{
  \includegraphics[width=7.5cm]{../IMG/Cage552_PVA_YG200_01_linear-quadratic_konfbaender_xy.jpg}
  }{}
  \caption{\it Bootstrap confidence bands for the drift functions in $x_1$-direction (dashed red) and $x_2$-direction (blue) with the respective estimated drift functions subtracted (cf. Figure \ref{fig:konf}). Because the distribution of the residuals has a high skewness of about 35 and because we chose confidence bands with minimized vertical width, the one in $x_2$-direction is highly non-symmetric.}
  \label{fig:konf2}
\end{figure}

\begin{figure}[htbp]
  \centering
\ifthenelse{\equal{\user}{alex}}{
  \includegraphics[width=7.5cm]{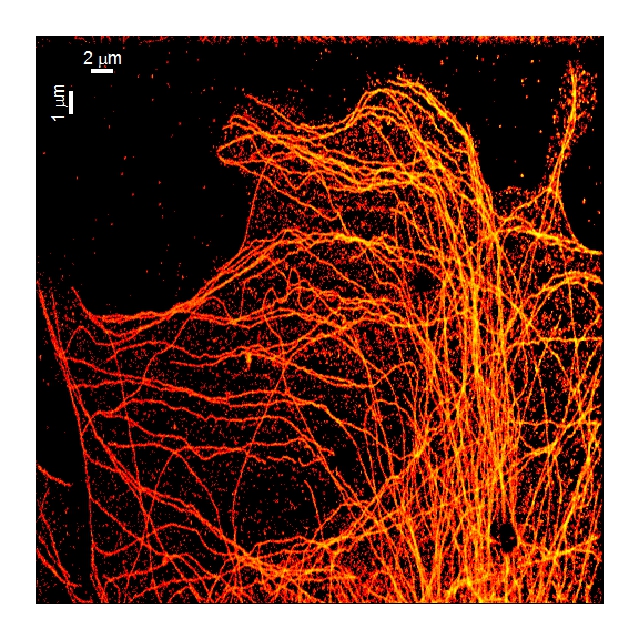}
  }{}
\ifthenelse{\equal{\user}{stephan}}{
  \includegraphics[width=7.5cm]{../IMG/Cage552_PVA_YG200_01_Bootstrapbild.jpg}
  }{}
  \caption{\it Average of the bootstrap replicates $\hat{f}_T^{(b_1)}, \dotsc, \hat{f}_T^{(b_m)}$ of the estimated image $\hat{f}_T$ corresponding to the $m = \lceil(1-\alpha)B\rceil$ drift curves $(t \mapsto \delta_t^{\hat{\vart}_T^{(b_j)}})_{j=1}^m$ nearest to the estimator $t \mapsto \delta_t^{\hvart_T}$ with respect to the supremum norm distance.}
  \label{fig:bootstrapbild}
\end{figure}

Given the estimators $\hat\vart_T$ and $\hat f_T$ from Definition \ref{estimator:def} and thus an estimator $\delta_t^{\hat\vart_T}$ for the drift function $\delta_t^{\vart_0}$, we can construct bootstrap confidence bands for the component functions $\delta_{t, i}^{\vart_0}$, $i \in \{1, 2\}$ using the method described in \cite{HallPittelkow} which we found particularly useful in our context. Here, we give a short summary of that method with our application to drift functions in mind. For notational simplicity following (\ref{few-observations:model}), we index the spatial location by a single index $j\in \{1,\dotsc, n\}$.

Note that this method assumes homoscedasticity (i.e. $\sigma^2_{j, t} \equiv \sigma^2 > 0$) and we use it for simplicity's sake. To account for heteroscedasticity, one could make use of a wild bootstrap procedure (see, e.g. \cite{Wu1986, Liu1988, Mammen1993}).

First, we consider the standardized difference $\Delta_t := (\delta_t^{\hat\vart_T}-\delta_t^{\vart_0})/\hat\sigma$, where $\hat\sigma$ is the empirical standard deviation of the \emph{residuals}
\begin{eqnarray}\label{residuals:eq}r_{j, t} &:=& Z_{j, t} - \hat f_T(x_j + \delta_t^{\hat\vart_T}),~ 1\leq j\leq n,0\leq t\leq T\end{eqnarray}
and thus an estimator for the standard deviation of the errors $\varep_{j, t} = Z_{j, t} - f(x_j + \delta_t(\vart_0))$. Obviously, constructing a confidence band for $\delta_t$ is equivalent to constructing one for $\Delta_t$.
Next we choose the shape of the confidence band in terms of two functions $g_+, g_- \colon [0, 1] \to [0, \infty)$ such that $\delta_t^{\hat\vart_T} + \hat\sigma u_+ g_+(t)$ and $\delta_t^{\hat\vart_T} - \hat\sigma u_- g_-(t)$ represent the upper and lower border, respectively, of the confidence band for $\delta_t^{\vart_0}$, with appropriate positive numbers $u_+, u_-$. For a confidence level $\alpha \in (0, 1)$ we minimize $u_++u_-$ under the constraint
\[ P\big(\delta_t^{\vart_0} \in [\delta_t^{\hat\vart_T} + \hat\sigma u_+ g_+(t), \delta_t^{\hat\vart_T} - \hat\sigma u_- g_-(t)] \text{ for all } t \in [0, 1]\big) \geq 1-\alpha, \]
or, equivalently under
\begin{equation*}
  P\big(\Delta_t \in [-u_+ g_+(t), u_- g_-(t)] \text{ for all } t \in [0, 1]\big) \geq 1-\alpha\,.
\end{equation*}
Since the distribution of $\Delta_t$ is unknown we approximate it by bootstrapping $B$ times from the residuals (\ref{residuals:eq}). For every $b \in \{1, \dotsc, B\}$ and every $1\leq j\leq n,0\leq t\leq T$ draw $\varep_{j, t}^{(b)}$ independently with replacement from the set of all residuals $\{r_{j',t'}: 1\leq j'\leq n, 0\leq t'\leq T\}$. Thus obtain
\[ Z_{j, t}^{(b)} := \hat f_T(x_j - \delta_t^{\hat\vart^{(b)}_T}) + \varep_{j, t}^{(b)}\,. \]
Applying our estimation method to the $Z_{j, t}^{(b)}$ we obtain bootstrap replicates $\hat\vart_T^{(b)}$ and thus replicates $\hat f_T^{(b)}$, $1\leq b\leq B$. This in turn leads to bootstrap replicates
\begin{eqnarray*}
r_{j, t}^{(b)} &:=& Z_{j, t}^{(b)} - \hat f_T^{(b)}(x_j + \delta_t^{\hat\vart_T^{(b)}})\,,\\
\hat\sigma^{(b)} &:=& \sqrt{\frac1{nT}\sum_{j, t}\left(r_{j, t}^{(b)} - \frac1{nT}\sum_{j', t'} r_{j', t'}^{(b)}\right)^2}\,,\\
\Delta_t^{(b)} &:=& (\delta_t^{\hat\vart_T^{(b)}}-\delta_T^{\hat\vart_T})/\hat\sigma^{(b)}
\end{eqnarray*}
which allow for minimization of $u_+ + u_-$ such that
\[ \#\big\{b \in \{1, \dotsc, B\}\mid\Delta_t^{(b)} \in [-u_+ g_+(t), u_- g_-(t)] \text{ for all } t \in [0, 1]\big\} \geq (1-\alpha)B. \]
Because we assume the drift to be zero at time $t=0$, we can employ a confidence band which has width zero at $t=0$. Since, in our application, we look at polynomial drift functions only, and since on $[0, 1]$ the linear part dominates the others in the sense that $t \geq t^p$ for all $p > 1$ and $t \in [0, 1]$, we will choose $g_+(t) = g_-(t) = t$.

The thus obtained confidence bands for the above data set of Figure \ref{fig:corr}, $B = 200$ and $\alpha = 0.05$ are shown in Figure \ref{fig:konf}, together with the data of the fiducial marker for comparison. For a better view of the very narrow confidence bands see Figure \ref{fig:konf2}, where we subtracted the respective estimated drift functions.

To further visualize the confidence statement we take a look at the average of (most of) the bootstrap replicats $\hat{f}_T^{(b)}$ of the estimator $\hat{f}_T$.
For that we choose the 0.95-proportion of the bootstrap replicates $\delta_t^{\hat{\vart}_T^{(b)}}$, $b \in \{1, \dotsc, B\}$, with the smallest supremum norm distances $\sup_{t \in [0, 1]} |\delta_t^{\hat{\vart}_T^{(b)}} - \delta_t^{\hvart_T}|$ to the original drift function estimator $\delta_t^{\hvart_T}$. We denote the corresponding indices with $b_1, \dotsc, b_m$, where $m = \lfloor (1-\alpha)B\rfloor$. The convex hull of the corresponding drift curves resembles a two-dimensional bootstrap confidence band. Figure \ref{fig:bootstrapbild} shows the average of the images $\hat{f}_T^{(b_1)}, \dotsc, \hat{f}_T^{(b_m)}$ which thus ``contains'' the true image with a probability of about 0.95. Remarkably, due to the small diameter of the confidence band (about three hundredths of the physical resolution of the data), this image is almost identical with the original estimator shown in Figure \ref{fig:corr}.

\section{Discussion and Outlook}

We proposed a method for drift estimation and correction in sparse dynamic imaging and derived its asymptotic distributional properties. On the one hand, sparse acquisition is beneficial for improved spatial resolution and an important feature of any SMS microscopy. On the other hand, we have seen that this provides a significant burden as it is well known that the specimens drift over time due to thermal inhomogeneity inside the sample and external systematic movements of the optical device. This raises a particular challenge for image registration as sparse acquisition and time drift provide a conflicting situation. Currently, this is solved by technically incorporating a bright fiducial marker into the specimen and registering its track (drift). We claim that this can be completely discarded in many applications and it is sufficient to apply the proposed statistical method to estimate the drift and finally to obtain the image from simply correcting the data by this drift. The proposed method has been 
investigated in simulations 
and in real world examples from SMS microscopy and for some examples even shown to outperform fiducial tracking. In general, reconstructions are quite satisfying.
In particular, the results show a certain degree of stability w{.}r{.}t{.} parameter choices, e.g. the threshold  $\xi_T$. Consistency and asymptotic normality of the proposed estimator has been established which allows to qualify the statistical error of the drift estimate and the final image. To this end, simple bootstrap methods can be used.

It remains to further work to investigate higher order properties of the proposed estimator as well as properties of nonparametric estimators, for example if $\delta_t$ is estimated by a spline. We believe that also semiparametric kernel based methods could be adapted to this problem as in single-index-modelling. Also an alteration of our method could be beneficial if one switches from the regression to a density viewpoint by looking at the Fourier transforms of the observed positions directly. Note that the proposed method can in principle be applied to higher dimensions, in particular three dimensional measurements. However, computationally this appears to be much more demanding.

\section*{Acknowledgements}
The authors acknowledge support from the Deutsche Forschungsgemeinschaft grant SFB 755 and the Volkswagen Foundation.
Stephan Huckemann acknowledges support from DFG HU 1275/2-1
and Axel Munk from DFG FOR 916. Finally, the authors would like to thank Timo Aspelmeier, Carsten Gottschlich, Thomas Hotz, and Yuri Golubev for helpful discussions.
\section{Appendix}

Recall the notations defined in Subsection 2.2.

\subsection{Proof of Theorem 2.9}

	\noindent{\it Plan of Proof.} We start with a proof of (9), which follows a standard three step argument in M-estimation (e.g. \cite{Vaart2000} and \cite{GamboaLoubesMaza2007}), although the details are quite elaborate. First we show the uniqueness of the population contrast minimizer $\vart_0$. In a second step we establish the continuity of $\vart\to\tM(\vart)$. Thirdly, we verify that $\tM_T(\vart) \to \tM(\vart)$ a.s. uniformly over $\vart\in\Theta$ as $T,\xi_T\to \infty$, $\xi_T=o(\sqrt{T})$. In consequence,  \cite[Theorem 5.7]{Vaart2000} (yielding weak consistency) can be adapted to obtain strong consistency. For convenience, here is the corresponding argument:

Since $\hat\vart_T$ is defined as a minimizer of $\tM_T$ (hence $\tM_T(\hat\vart_T) \leq \tM_T(\vart_0)$) and $\tM_T(\vart_0) \to \tM(\vart_0)$ a.s., we have a.s. that
\[ \mathop{\limsup}_{T\to\infty}\bigl(\tM_T(\hat\vart_T)- \tM(\vart_0)\bigr)
	= \mathop{\limsup}_{T\to\infty}\bigl(\tM_T(\hat\vart_T)- \tM_T(\vart_0)\bigr) + \lim_{T \to \infty} \bigl(\tM_T(\vart_0)-\tM(\vart_0)\bigr)
	\leq 0. \]
It follows that
	 \begin{eqnarray}
	  \mathop{\limsup}_{T\to\infty}\tM(\hat\vart_T) - \tM(\vart_0) &\leq & \mathop{\limsup}_{T\to\infty}\Big(\tM(\hat\vart_T) - \tM_T(\hat\vart_T)\Big)
	   \notag \\ \label{eq:PlanOfProofLimsup}
	  &\leq&\mathop{\limsup}_{T\to\infty} \sup_{\vart \in\Theta} \Big|\tM(\vart) - \tM_T(\vart)\Big|~=~0 \; \text{a.s.}
	 \end{eqnarray}
Because of the uniqueness of the minimizer $\vart_0$, the continuity of $\tM$ and the compactness of $\Theta$, we have that for every $\varep >0$ there is $\eta_\varep>0$ such that  $\tM(\vart)>\tM(\vart_0)+\eta_\varep$ for all $\vart\in\Theta$ with $\|\vart-\vart_0\|\geq \varep$. Hence
	 \begin{eqnarray*}
	P\Big(\mathop{\limsup}_{T\to\infty}\big\{\|\hat\vart_T -\vart_0\|\geq \varep\big\}\Big)
	&\leq & P\Big(\mathop{\limsup}_{T\to\infty}\big\{\tM(\hat\vart_T)>\tM(\vart_0)+\eta_\varep\big\}\Big)	 \\
	&\leq & P\Big\{\mathop{\limsup}_{T\to\infty}\tM(\hat\vart_T) \geq \tM(\vart_0)+\eta_\varep\Big\}~=~0\,,
      \end{eqnarray*}
where the last equality follows from (\ref{eq:PlanOfProofLimsup}).

	\paragraph{Step I: uniqueness of the contrast minimizer $\vart_0$.} First note that $\tM(\vart) \geq - \sum_{k \in \Z^2} |{f}_{k}|^2$ for all $\vart$ with equality for $\vart = \vart_0$. If this minimum is attained for some $\vart$ then for each $k$ with $|{f}_{k}|^2\ > 0$
	$$\left| \int_0^1 h_k(\delta^\vart_{t}-\delta^{\vart_0}_{t})\, dt \right|^2 = 1$$
	since $|\int_0^1h_k\,dt| \leq \int_0^1|h_k|\,dt=1$. This
	implies that $h_k(\delta^\vart_{t}-\delta^{\vart_0}_{t})=1$, i.e.
	$$2\pi \left\langle k , \delta_t^\vart-\delta_{t}^{\vart_0}\right\rangle \equiv 0 \,\textnormal{ mod }\, 2 \pi$$
	By Assumption 2.4 
	this holds for $k \in \left\{(k_1,k_2),(k'_1,k'_2)\right\}$ with $k_1k'_2-k_2k'_1\neq 0$.
	Hence, we can treat each dimension separately and obtain  $\delta_t^\vart\equiv \delta_{t}^{\vart_0}$ mod $2\pi$ a.e. Since this holds also for $k \in \left\{(k''_1,k''_2),(k'''_1,k'''_2)\right\}$ with $k''_1k'''_2-k''_2k'''_1\neq 0$, due to the part of the Assumption on non-common divisors we obtain  $\delta_t^\vart=\delta_{t}^{\vart_0}$ a.e. and hence $\vart=\vart_0$.

	\paragraph{Step II: continuity of $\tM$.} For $\vart,\vart'\in \Theta$ we have that
	\begin{eqnarray*}
	|\tM(\vart) - \tM(\vart')|  &\leq &  \sum_{k \in \Z^2} |{f}_{k}|^2 \left|
	\left| \sint h_k(\delta^\vart_{t}-\delta^{\vart_0}_{t})\, dt \right|^2
	- \left| \sint h_k(\delta^{\vart'}_{t}-\delta^{\vart_0}_{t})\, dt \right|^2 \right|\\
	&\leq& 2 \sum_{k \in \Z^2}  |{f}_{k}|^2  \left|
 	\sint \left(e^{2\pi i \left\langle k , \delta_t^\vart-\delta_{t}^{\vart_0}\right\rangle} - e^{2\pi i \left\langle k , \delta_t^{\vart'}-\delta_{t}^{\vart_0}\right\rangle}\right) dt \right| \\
	&\leq& 2 \sum_{k \in \Z^2}  |{f}_{k}|^2  \sint \left|1-e^{2\pi i\left\langle k , \delta_t^{\vart'}-\delta_{t}^{\vart}\right\rangle}\right| \,dt \\
	&\leq&  4\pi \sum_{k \in \Z^2} |k| |{f}_{k}|^2 \sint\left\| \delta_t^\vart-\delta_{t}^{\vart'} \right\| dt\,,
	\end{eqnarray*}
	where we use
	\begin{eqnarray}\label{eq:complex}
	|a|^2-|b|^2 \leq 2|a-b|
	\end{eqnarray}
	for $a,b \in \C$ with $|a|,|b| < 1$ in the second inequality and $|1-e^{ix}|^2=2-2\cos x \leq x^2$
	in the fourth one. By Assumptions 2.4, 2.6, 
	this implies the continuity of $\tM(\vart)$.

	\paragraph{Step III: $\tM_T \to \tM$ uniformly in $\vart$ a.s.}
	Recall from model (4) 
	that
	\[ Y_k^t = h_k(-\delta_t^{\vart_0}){f}_k + W^t_k \]
	with the true and unknown parameter $\vart_0\in \Theta$. Hence with (7) 
	we have that
	\begin{eqnarray*}
	\tM_T(\vart) &=&  - \sum_{|k| < \xi_T}
	\left|\frac{1}{T} \sum_{t\in\T}  \Big(h_k(\delta_t^\vart-\delta_t^{\vart_0}){f}_{k} + h_k(\delta_t^\vart) {W}^{t}_{k}\Big)\right|^2
	~=~
	A_T(\vart) - B_T(\vart) - C_T(\vart)
	\end{eqnarray*}
	with
	\begin{eqnarray*}
	A_T(\vart) &:=&  - \sum_{|k| < \xi_T}
	\left|\frac{1}{T} \sum_{t\in\T} h_k(\delta_t^\vart-\delta_t^{\vart_0}) {f}_{k} \right|^2, \\
	B_T(\vart) &:=& \sum_{|k| < \xi_T}  2 \RE \left(\Bigl( \frac{1}{T} \sum_{t\in\T} h_k(\delta_t^\vart-\delta_t^{\vart_0}) {f}_{k} \Bigr) \Bigl(\frac{1}{T} \sum_{t'\in\T}   h_k(-\delta_{t'}^\vart) \overline{{W}^{t'}_{k}}  \Bigr)\right), \\
	C_T(\vart) &:=& \sum_{|k| < \xi_T} \left|\frac{1}{T} \sum_{t\in\T}   h_k(\delta_t^\vart) {W}^{t}_{k}\right|^2.
	\end{eqnarray*}

	To derive the desired uniform convergence we will show for the deterministic part that $A_T \to \tM$ uniformly in $\vart$ while the random parts $B_T$ and $C_T$ converge to zero uniformly a.s.
	Considering
	\begin{eqnarray*}
	|A_T(\vart) - \tM(\vart)| &\leq& \sum_{|k| < \xi_T} |{f}_{k}|^2
	\left|\left|\frac{1}{T} \sum_{t\in\T} h_k(\delta_t^\vart-\delta_t^{\vart_0}) \right|^2  - \left| \sint h_k(\delta_t^\vart-\delta_t^{\vart_0})\, dt \right|^2\right|
	\\&&
	+  \sum_{|k| \geq \xi_T} |{f}_{k}|^2 \left| \sint h_k(\delta_t^\vart-\delta_t^{\vart_0})\, dt \right|^2\,, \end{eqnarray*}
	and applying (\ref{eq:complex}) again to the first sum while noting that the second is bounded by $\sum_{|k| \geq \xi_T} |{f}_{k}|^2 = o(1)$ ($\xi_T \to \infty$ by hypothesis and $\sum_k |{f}_{k}|^2 < \infty$ by Assumption 2.4) 
	gives
	\begin{eqnarray*}
	|A_T(\vart) - \tM(\vart)| &\leq&\sum_{|k| < \xi_T} 2 |{f}_{k}|^2
	\left|\frac{1}{T} \sum_{t\in\T} h_k(\delta_t^\vart-\delta_t^{\vart_0}) - \sint h_k(\delta_t^\vart-\delta_t^{\vart_0})\, dt \right| + o(1).
	\end{eqnarray*}
	Since the total variation of $t\mapsto h_k(\delta_t^\vart-\delta_t^{\vart_0})$ is bounded by a constant times $|k|$ uniformly in $\vart$ (Assumption 2.6), 
	we have for some constant $C$ that
	$$\left|\frac{1}{T} \sum_{t\in\T} h_k(\delta_t^\vart-\delta_t^{\vart_0}) - \sint h_k(\delta_t^\vart-\delta_t^{\vart_0})\, dt \right| < \frac{|k| \, C}{T}.
	$$
	In consequence of $\sum_k |k| |{f}_{k}|^2 < \infty$ (Assumption 2.4) 
	this implies that
	$$|A_T(\vart) - \tM(\vart)|= O(1/T)\,,$$
 	uniformly in $\vart$ as desired.
	Next, we show
	\begin{eqnarray}\label{supCTOxisqOverT:eq}
	\sup_{\vart\in\Theta} C_T(\vart) &=& \sup_{\vart\in\Theta} \sum_{|k| < \xi_T} \left|\frac{1}{T} \sum_{t\in\T}  h_k(\delta_t^\vart) {W}^{t}_{k}\right|^2  = o\left(\frac{\xi_T^2}{T}\right) \; \text{a.s.}
	\end{eqnarray}

	Since $h_k(\delta^\vart_t)$ acts as a rotation, $h_k(\delta^\vart_t)W_k^t =: U_k^t+iV_k^t$ ($t\in \T,|k| < \xi_T$) are again independently complex normally distributed; in particular, every $U_k^t=\RE(h_k(\delta^\vart_t)W_k^t)$ is independent of $V_k^t=\IM(h_k(\delta^\vart_t)W_k^t)$. Let
	\[ \bar U_{k, T} = \frac1{\sqrt{T}} \sum_{t \in \T} U_k^t, \quad \bar V_{k, T} = \frac1{\sqrt{T}} \sum_{t \in \T} V_k^t. \]
	Because of $E(\epsilon_{j,t}^4) = 3$ and Assumption 2.7 
	we have
	\begin{eqnarray*}
	  \var(\bar U_{k, T}^2) &\leq & E(\bar U_{k, T}^4) \\
	  &=& \frac3{T^2}\sum_{t\in\T}\frac1{n_t^2} \sum_{j \in J_t}\sigma_{j,t}^4\cos(-2\pi\langle k, x_{j,t}-\delta_t^\vart\rangle)^4 \\
	  && + \frac3{T^2}\sum_{t\neq t'}\frac1{n_t n_{t'}} \sum_{j \in J_t}\sum_{j' \in J_{t'}}\sigma_{j,t}^2\sigma_{j',t'}^2\cos(-2\pi\langle k, x_{j,t}-\delta_t^\vart\rangle)^2\cos(-2\pi\langle k, x_{j',t'}-\delta_{t'}^\vart\rangle)^2 \\
	  &\leq & 3\sigma_{\textup{max}}^4\left(\frac1{T^2}\sum_{t \in \T}\frac1{n_t} + 1\right) \leq 6 \sigma_{\textup{max}}^4,
	\end{eqnarray*}
	and similarly $\var(\bar V_{k, T}^2) \leq 6 \sigma_{\textup{max}}^4$. Again by Assumption 2.7,
	\begin{eqnarray*}
	  E(\bar U_{k, T}^2 + \bar V_{k, T}^2)
	  &=& \frac1{T}\sum_{t\in\T}\frac1{n_t} \sum_{j \in J_t}\sigma_{j,t}^2\left(\cos(-2\pi\langle k, x_{j,t}-\delta_t^\vart\rangle)^2 + \sin(-2\pi\langle k, x_{j,t}-\delta_t^\vart\rangle)^2\right) \\
	  &=& \frac1{T}\sum_{t\in\T}\frac1{n_t} \sum_{j \in J_t}\sigma_{j,t}^2 \leq \sigma_{\textup{max}}^2.
	\end{eqnarray*}
	In consequence, Kolmogorov's strong law (see e.g. \cite[Theorem 2.3.10]{Sen1993}) yields that
	\begin{eqnarray*}
	&& \left\vert \frac{1}{\#\left\{|k| < \xi_T\right\}}\sum_{|k|<\xi_T}\left|\frac{1}{\sqrt{T}}\sum_{t\in\T}h_k(\delta_t^\vart)W^t_k\right|^2 - \frac1{T}\sum_{t\in\T}\frac1{n_t} \sum_{j \in J_t}\sigma_{j,t}^2 \right\vert \\
	&= & \left\vert \frac{1}{\#\left\{|k| < \xi_T\right\}}\sum_{|k|<\xi_T}(\bar U_{k, T}^2 + \bar V_{k, T}^2) - \frac1{T}\sum_{t\in\T}\frac1{n_t} \sum_{j \in J_t}\sigma_{j,t}^2 \right\vert \\
	&\to & 0 \; \text{a.s.}, \quad T \to \infty.
	\end{eqnarray*}
	Since $\#\left\{|k| < \xi_T\right\} = O(\xi_T^2)$ this yields (\ref{supCTOxisqOverT:eq}).
	Finally,
	$$ \sup_{\vart } |B_T(\vart)|^2 = o(1) \; \text{a.s.}$$
	follows at once from  $ |A_T(\vart)|\leq \sum_k |{f}_{k}|^2$ by definition, (\ref{supCTOxisqOverT:eq}) and the observation that
 	$|B_T(\vart)|^2 \leq  2 | A_T(\vart)| \,|C_T(\vart)|$. This concludes the proof of Step III.

	\paragraph{The proof of (10).}
	Observe that, using the Plancherel equality, we have
	\begin{eqnarray} \nonumber
	\left\| \hat f_T - f \right\|_2^2 &=& \sum_{|k| < \xi_T} \left|\frac{1}{T} \sum_{t\in\T}   h_k(\delta_t^{\hvart_T}) {Y}^{t}_{k} - {f}_k \right|^2 +
	\sum_{|k| \geq \xi_T} \left| {f}_k \right|^2
	\\  \nonumber &=& 
	\sum_{|k| < \xi_T} \left|\frac{1}{T} \sum_{t\in\T} \big( h_k(\delta_t^{\hvart_T}-\delta_t^{\vart_0}) {f}_k + h_k(\delta_t^{\hvart_T}) {W}^{t}_{k}\big) - {f}_k \right|^2
	+ o(1)
	\end{eqnarray}
	\begin{eqnarray} \nonumber
	&=&
 	\sum_{|k| < \xi_T} |{f}_k|^2\frac{1}{T^2}\sum_{t,t'\in\T} \big(h_k(\delta_t^{\hvart_T}-\delta_t^{\vart_0}) - 1\big)\big(h_k(-\delta_{t'}^{\hvart_T}+\delta_{t'}^{\vart_0}) - 1\big)
	\\  \nonumber &&+ \, \sum_{|k| < \xi_T} \left| \frac{1}{T} \sum_{t\in\T} h_k(\delta_t^{\hvart_T})  {W}^{t}_{k} \right|^2
	\\  \nonumber &&~~ + 2\sum_{|k| < \xi_T}\frac{1}{T^2}\sum_{t,t'\in\T}\big(h_k(\delta_t^{\hvart_T}-\delta_t^{\vart_0})-1\big) {f}_k
	h_k(-\delta_{t'}^{\hvart_T}) \overline{{W}^{t'}_{k}}
 	+ o(1)
	\\  \label{eq:thmb:proof} &\leq& 
 	4\pi L\|\hvart_T-\vart_0\|\,\sum_{|k| < \xi_T} \left(
	|{f}_k|^2 \,|k| + |{f}_k|\,|k|\,\frac{1}{\sqrt{T}}  |{G}^{T}_{k}|\right) + o(1) \; \text{a.s.}
	\end{eqnarray}
	with $G_k^T$ defined below, by (\ref{supCTOxisqOverT:eq}),
	since $|h_k(\delta_{t}^{\hvart_T}-\delta_{t}^{\vart_0}) - 1|\leq 2$ as well as (recalling the argument following display (\ref{eq:complex}))
	$$\Big|h_k(\delta_{t}^{\hvart_T}-\delta_{t}^{\vart_0}) - 1\Big| \leq 2\pi |k|\|\delta_{t}^{\hvart_T}-\delta_{t}^{\vart_0}\| \leq 2\pi L |k| \|\hvart_T - \vart_0\|$$
	with the constant $L>0$ from Assumption 2.8 
	and the following argument. Setting
	$$G^T_k := \frac{1}{\sqrt{T}}\sum_{t'\in\T}h_k(-\delta_{t'}^{\hvart_T}) \overline{{W}^{t'}_{k}}\,,$$
	we obtain complex normal deviates independent in $k$ with the property
      \begin{eqnarray*}\frac{1}{T^2}\sum_{t,t'\in\T}\big(h_k(\delta_t^{\hvart_T}-\delta_t^{\vart_0})-1\big) {f}_k
	h_k(-\delta_{t'}^{\hvart_T}) \overline{{W}^{t'}_{k}}&=& \frac{f_k}{\sqrt{T}}\left(\frac{1}{T}\sum_{t\in\T}\big(h_k(\delta_t^{\hvart_T}-\delta_t^{\vart_0})-1\big)\right) G_k^T.
      \end{eqnarray*}
	Now (\ref{eq:thmb:proof}) yields indeed $\|\hat f_T - f \|_2^2 \to 0$ a.s. if $\xi^2_T/\sqrt{T} \to 0$ since
	$\|\hvart_T-\vart_0\|\to 0$ a.s. as shown in the proof of the first part of Theorem 2.9, 
	$\sup_{k\in \mathbb Z}|{f}_k|\,|k|<\infty\,$ by Remark 2.5 
	and $\sum_{|k| < \xi_T} |{f}_k|^2 \,|k|< \infty\,$
	by Assumption 2.4. 
	The same argument that led to (\ref{supCTOxisqOverT:eq}) shows that the variance of
	\[ \frac{1}{\sqrt{T}}\sum_{|k| < \xi_T} |{f}_k|\,|k|\,  |{G}^{T}_{k}| \]
	is of order $o(1)$ in case of $\xi_T/\sqrt{T} \to 0$, which gives convergence of $\|\hat f_T-f\|_2\to 0$ in probability, completing the proof.
	\qed

\subsection{Proof of (i) of Theorem 2.13}

	With the $d$-dimensional real vector $\ba_{k,t}^\vart := 2\pi \grad_\vart \langle k,\delta_t^\vart\rangle$ verify that
	\begin{eqnarray}\nonumber\label{gradYMproof1:eq}
	 \grad_\vart\left(\sum_{t\in\T}h_k(\delta_t^\vart)Y_k^t\,\sum_{t'\in\T}\overline{h_k(\delta_{t'}^\vart)Y_k^{t'}}\right)
	  &=&
		 2\RE\left(\sum_{t,t'\in\T}\grad_\vart\Big(h_k(\delta_t^\vart)Y_k^t\Big)\,\overline{h_k(\delta_{t'}^\vart)Y_k^{t'}}\right)\\
	  &=&-2\IM\left(\sum_{t,t'\in\T}\ba_{k,t}^\vart h_k(\delta_t^\vart)Y_k^t \,\overline{h_k(\delta_{t'}^\vart)Y_k^{t'}}\right)\,.
	\end{eqnarray}
	Moreover, with the true parameter $\vart_0\in \Theta$ and arbitray $\vart\in\Theta$ recall from (2) 
	that
	$$h_k(\delta_t^\vart)Y_k^t = h_k(\delta_t^\vart-\delta_t^{\vart_0}) f_k + h_k(\delta_t^\vart)W_k^t\,.$$
	At $\vart=\vart_0$ the right hand side is just $f_k + h_k(\delta_t^{\vart_0})W_k^t$.
	In consequence we have for $\tM_T$ from (7) 
	that
	\begin{eqnarray}\label{gradMTsumHTk:eq}
	 \grad_\vart \tM_T(\vart_0) &=& \sum_{|k|\leq \xi_T} H^T_k
	\end{eqnarray}
	where $\ba_k^t = \ba_{k,t}^{\vart_0}$, $f_k = e_k+ig_k$, $h_k(\delta_t^{\vart_0})W_k^t = \tau_k^t A_k^t+i \omega_k^t B_k^t$ with standard deviations
	\begin{align*}
	  \tau_k^t &:= \sqrt{\frac1{n_t}\sum_{j \in J_t} \sigma_{j, t}^2 \cos(-2\pi\langle k, x_{j,t}-\delta_t^{\vart_0}\rangle)^2}, \\
	  \omega_k^t &:= \sqrt{\frac1{n_t}\sum_{j \in J_t} \sigma_{j, t}^2 \sin(-2\pi\langle k, x_{j,t}-\delta_t^{\vart_0}\rangle)^2},
	\end{align*}
	and
	\begin{eqnarray*}
	  H^T_k &:=& \frac{2}{T^2}\IM\left(\sum_{t,t'\in\T} \ba_k^t\Big(|f_k|^2 + f_k \overline{h_k(\delta_{t'}^{\vart_0})W_k^{t'}}+ h_k(\delta_t^{\vart_0})W_k^{t} \overline{f_k}+ h_k(\delta_t^{\vart_0})W_k^t\,\overline{h_k(\delta_{t'}^{\vart_0})W_k^{t'}}\Big)\right)\\
	  &=& \frac{2}{T^2}\sum_{t,t'\in\T}\ba_k^t \Big(g_k \tau_k^{t'} A_k^{t'} - e_k \omega_k^{t'} B_k^{t'} +e_k \omega_k^t B_k^t-g_k \tau_k^t A_k^t + \tau_k^{t'} \omega_k^t A_k^{t'}B_k^t - \tau_k^t \omega_k^{t'} A_k^tB_k^{t'} \Big)\,.
	\end{eqnarray*}
	Note that $A^t_k,B^t_k \sim {\cal N}(0,1)$ ($k \in \Z^2$, $t \in \T$) are all mutually independent, and for $k=(0, 0)$ we have $\omega_{(0, 0)}^t \equiv 0$.
	
	To determine the limit distribution of $\sqrt{T}\grad_\vart M_T(\vart)$ we look at its projections $\sqrt{T} \langle x,\grad_\vart M_T(\vart)\rangle$ with arbitrary but fixed $0\neq x=(x_1,\dotsc,x_d)\in \mathbb R^d$.
	To this end denote by $H_k^T(j)$ and $\ba_k^t(j)$ the $j$-th component of $H_k^T$ and $\ba_k^t$, respectively, $j\in\{1,\dotsc,d\}$, and set
	\begin{eqnarray}\label{GTk-def:eq} G_k^T := \sum_{j=1}^d x_j H_k^T(j),~~a_k^t :=  \sum_{j=1}^d x_j\ba_k^t(j)\,.\end{eqnarray}
	Introducing the independent normal vectors $A_k:=(\tau_k^t A_k^t/\bar\tau_k^T)_{t\in\T}$, $B_k:=(\omega_k^t B_k^t/\bar\omega_k^T)_{t\in\T}$ with (cf. Assumption 2.12)
	\[ \bar\tau_k^T = \sqrt{\frac1T\sum_{s \in \T}(\tau_k^s)^2} > 0, \quad
	   \bar\omega_k^T = \sqrt{\frac1T\sum_{s \in \T}(\omega_k^s)^2} > 0, \]
	each with independent components as well as the unit vector $e := (1)_{t\in\T}/\sqrt{T}$ and the vector $a_k = (a_k^t)_{t\in\T}$ and denoting the transpose of $a_k$ by $a'_k$ etc., we obtain
	\begin{eqnarray*}
	  G_k^T &=&\frac{2\bar\tau_k^T\bar\omega_k^T}{T^{3/2}}\Big( a'_kB_k A'_k e -  e' B_kA'_ka_k\Big)\\
	  && +\frac{2}{T}\Big(\bar\tau_k^T g_ka'_kee'A_k -\bar\omega_k^T e_ka'_kee'B_k+\bar\omega_k^T e_ka'_kB_k-\bar\tau_k^T g_ka'_kA_k\Big)\,.
	\end{eqnarray*}
	To tackle the first term introduce a unit vector $b_k$ orthogonal to $e$ such that $a_k =\alpha_k e + \beta_k b_k$, $\alpha_k,\beta_k \in \mathbb R$ and define a matrix $U=U_k \in SO(T)$ having $e$ and $b_k$ as the first two columns. Then, with the independent normal vectors $\wA_k = U'A_k$, $\wB_k = U'B_k$ with independent components, each with zero mean,
	\begin{eqnarray*}  a'_kB_k A'_k e -  e' B_kA'_ka_k &=& A'_k(ea'_k-a_ke')B_k\\
	&=&A'_kUU'( e a'_k - a_k e') UU'B_k  \\
	&=& A'_kU(e,b_k,*)' \Big(e (\alpha_k e+ \beta_k b_k)' -(\alpha_k e+ \beta_k b_k)e'\Big) (e,b_k,*)U'B_k \\
	&=& \wA'_k\Big( (1,0,\dotsc,0)' (\alpha_k,\beta_k,0,\dotsc,0) - (\alpha_k,\beta_k,0,\dotsc,0)' (1,0,\dotsc,0)\Big)\wB_k \\
	&=&\wA'_k\,\beta_k\left(\begin{array}{ccccc}0&1&0&\cdots&0\\
	          -1&0&0&\cdots&0\\
		  0&0&0&\cdots&0\\
	     \vdots&\vdots&\vdots&\ddots& \vdots\\
		  0&0&0&\cdots&0
	         \end{array}\right)\wB_k\,.
	\end{eqnarray*}
	In consequence, with the first components $\wA_k^{(1)}$, $\wB_k^{(1)}$ and second components $\wA_k^{(2)}$, $\wB_k^{(2)}$ of $\wA_k$ and $\wB_k$,
	\begin{eqnarray*}
	G_k^T&=& \frac{2\bar\tau_k^T\bar\omega_k^T\beta_k}{T^{3/2}} \Big(\wA_k^{(1)}\wB_k^{(2)}-\wA_k^{(2)}\wB_k^{(1)}\Big) \\
	&& + \frac{2}{T}\Big(\bar\tau_k^Tg_k\alpha_k\wA_k^{(1)}-\bar\omega_k^Te_k\alpha_k\wB_k^{(1)} + \bar\omega_k^Te_k(\alpha_k \wB_k^{(1)} +\beta_k\wB_k^{(2)}) -\bar\tau_k^Tg_k(\alpha_k\wA_k^{(1)}+ \beta_k \wA_k^{(2)})\Big)\,.
	\end{eqnarray*}
	At this point we note that
	\begin{eqnarray}\label{betaksq:eq}\beta_k^2&=&\|a_k - \alpha_k e\|^2 =\sum_{t\in\T} \left( a_k^t-\frac{1}{T}\sum_{t'\in\T} a_k^{t'}\right)^2
	  ~=~
	\sum_{t\in\T}(a_k^t)^2 - \frac{1}{T}\left(\sum_{t\in\T}a_k^t\right)^2\,
	\end{eqnarray}
	whence $\beta_k =O(|k|\sqrt{T})$ from the definition of $a_k^t$ and Assumption 2.11. 
	Furthermore, by Assumption 2.12, $\bar\tau_k^T \to \sigma_{A, k}$ and $\bar\omega_k^T \to \sigma_{B, k}$ uniformly in $k$ as $T \to \infty$.
	Hence, the variance of the first term of $G_k^T$  scales with $|k|^2/T^2$, thus
	\begin{equation}\label{tight-limit:eq} (G^T)_1 := \sum_{|k|\leq \xi_T}\frac{2\bar\tau_k^T\bar\omega_k^T\beta_k}{T^{3/2}} \Big(\wB_k^{(1)}\wA_k^{(2)}-\wB_k^{(2)}\wA_k^{(1)}\Big)= O_p\left(\sqrt{\sum_{|k|< \xi_T} \frac{|k|^2}{T^2}}\right)=  O_p(\xi_T^{2}/T)\,
	\end{equation}
	i.e. with the hypothesis $\xi^4_T/T\to 0$, we obtain
	\begin{eqnarray}\label{CLT-zero-part:eq}\sqrt{T}\,(G^T)_1 ~\to~ 0\mbox{ in probability.}\end{eqnarray}
	Let us further note at this point for future use in case of $\xi_T\to\infty$ with $\xi^4_T/T\to 0$ due to $\beta_k \leq C|k|\sqrt{T}$ with a suitable constant $C>0$, we have also
	that
	\begin{eqnarray}\label{CLT-zero-part-as:eq}
	 |(G^T)_1|&\leq& \xi_T^2 \,\frac{1}{\xi_T^2} \sum_{|k|< \xi_T}\frac{2\bar\tau_k^T\bar\omega_k^TC\xi_T}{T}\Big|\wB_k^{(1)}\wA_k^{(2)}-\wB_k^{(2)}\wA_k^{(1)}\Big|~\to~0 \; \text{a.s.}
	\end{eqnarray}

	The second term of $G_k^T$ reduces to
	\[ \frac{2}{T}\Big(\bar\omega_k^Te_k\beta_k \wB_k^{(2)}  - \bar\tau_k^Tg_k\beta_k \wA_k^{(2)}\Big)\, \]
	which is normally distributed with zero mean and variance
	\begin{eqnarray*}
	 \lefteqn{\frac{4}{T^2}\beta^2_k\bigl((\bar\tau_k^Tg_k)^2+(\bar\omega_k^Te_k)^2\bigr)}\\
	 &=& \frac{16\pi^2\bigl((\bar\tau_k^Tg_k)^2+(\bar\omega_k^Te_k)^2\bigr)}{T^2}\,\left[\sum_{t\in\T}\left\langle k,\sum_{j=1}^dx_j\partial_{\vart_j}\delta_t^\vart\right\rangle^2-\frac{1}{T}\left(\sum_{t\in\T}\left\langle k,\sum_{j=1}^dx_j\partial_{\vart_j}\delta_t^\vart)\right\rangle\right)^2\right]\,,
      \end{eqnarray*}
	for $\vart=\vart_0$ cf. (\ref{betaksq:eq}).
	Since the normal random deviates in
	\[ (G^T)_2 := \sum_{|k|< \xi_T}\frac{2}{T}\Big(\bar\omega_k^Te_k\beta_k \wB_k^{(2)}-\bar\tau_k^Tg_k\beta_k \wA_k^{(2)}\Big) \]
	 are independent in $k$, we have that $\sqrt{T}\,(G^T)_2$ is normally distributed with zero mean and variance converging to
	\begin{align}\label{CLT-variance:eq}
	\notag &16\pi^2\sum_{k\in \Z^2}\bigl((\sigma_{A, k}g_k)^2+(\sigma_{B, k}e_k)^2\bigr)\left[\int_0^1\Big\langle k,(\grad_\vart \delta_t^{\vart_0})' x\Big\rangle^2\,dt-\left\langle k, \int_0^1(\grad_\vart \delta_t^{\vart_0})'x\,dt\right\rangle^2\right]\\
	&=:\sigma^2_{x} < \infty
	\end{align}
	if $f\in H^1\big([0,1]\big)$.
	Recalling the notation of (\ref{gradMTsumHTk:eq}), (\ref{GTk-def:eq}) and $\sum_{|k|<\xi_T} G_k^T=(G^T)_1+(G^T)_2=\langle x,\grad_\vart M_T(\vart)\rangle$ as well as collecting the results of (\ref{CLT-zero-part:eq}) and (\ref{CLT-variance:eq}) we have thus shown that for any $0\neq x\in \mathbb R^d$
	$$ \sqrt{T} \langle x,\grad_\vart M_T(\vart)\rangle \to {\cal N}(0,\sigma_x^2)$$
	whenever $T,\xi_T\to \infty$ with $\xi_T$ of rate $o(T^{1/4})$.
	Since this holds true for every $x$, the joint distribution of $\sqrt{T}\grad_\vart M_T(\vart)$ at $\vart=\vart_0$ is asymptotically multivariate normal with covariance matrix as asserted in Theorem 2.13.

	In view of use below we note here that 
	we obtain with suitable constants $C,C'>0$ ($C'$ due to Remark 2.5), 
	$\sigma_{\textup{max}}$ from Assumption 2.7 and independent standard normal $C_k$ ($k\in\mathbb Z$) that
	\begin{eqnarray}\nonumber\label{CLT-normal-part-as:eq}
	 |(G^T)_2|&=&\left| \frac{2}{T}  \sum_{|k|< \xi_T}\beta_k \sqrt{(\bar\tau_k^Tg_k)^2+(\bar\omega_k^Te_k)^2} C_k\right|
	~\leq~
	\frac{2\sigma_{\textup{max}} C}{\sqrt{T}}  \sum_{|k|< \xi_T}|f_k||k||C_k|
	\\
		&\leq &
	\frac{2\sigma_{\textup{max}} CC'\xi^2_T}{\sqrt{T}} \frac{1}{\xi_T^2} \sum_{|k|< \xi_T}|C_k|
	\to 0~ \; \text{a.s.}\mbox{ if $\xi_T \to \infty$ and $\xi^4_T/T = O(1)$}.
	\end{eqnarray}

	\qed

\begin{Rm}
As shown above, asymptotic normality of the second part $\sqrt{T}\,(G^T)_2$ of $\sqrt{T}\,\grad_\vart\tM_T(\vart_0)$ holds regardless of the rate of $\xi_T$.
	If we relax $\xi^4_T/T\to 0$ to  $C_1T^{1/4}\leq \xi_T \leq C_2 T^{1/4}$ with suitable constants $C_1,C_2>0$, the first part $\sqrt{T}\,(G^T)_1$ will no longer converge to zero but will be tight, cf. (\ref{tight-limit:eq}). Since then also $\hat\vart \to \vart_0$ by Theorem 2.9, 
	although the $(G^T)_1$ and $(G^T)_2$ will be dependent for this rate of $\xi_T$, we expect that asymptotic normality still holds. The corresponding covariance matrix, however, will have a more complicated structure than being a multiple of $\tilde\Sigma$.

\teacher{
	\item In a case of $\xi^T = T^{1/2-\varep}$, $0<\varep < 1/4$ we still have convergence $\hat\vart \to \vart_0$ a.s. but possibly with a non-tight limiting distribution.
}

\end{Rm}

\subsection{Proof of (ii) of Theorem 2.13}

	Here we build on the proof of (i) of Theorem 2.13 
	within the preceding section and use the notation there. In addition let $\bb_{k,t}^\vart:=2\pi{\Hess}_\vart \langle k,\delta_t^\vart\rangle$. Then we obtain at once from (\ref{gradYMproof1:eq})
	\begin{eqnarray}\nonumber
	{\rm Hess}_\vart\left(\sum_{t\in\T}h_k(\delta_t^\vart)Y_k^t\,\sum_{t'\in\T}\overline{h_k(\delta_{t'}^\vart)Y_k^{t'}}\right)
      &=&D_k^T + F_k^T
      \end{eqnarray}
      with
      	\begin{eqnarray*}
      D_k^T &:=&-2\IM
		\left(\sum_{t,t'\in\T}\bb_{k,t}^\vart
      h_k(\delta_t^\vart)Y_k^t\,\overline{h_k(\delta_{t'}^\vart)Y_k^{t'}}\right)
      \\
      F_k^T &:=&
      -2\RE\left(\sum_{t,t'\in\T}\ba_{k,t}^\vart(\ba_{k,t}^\vart - \ba_{k,t'}^\vart)'h_k(\delta_t^\vart)Y_k^t\,\overline{h_k(\delta_{t'}^\vart)Y_k^{t'}}\right)\,.
	\end{eqnarray*}
      In particular, in consequence of (7)
      \begin{eqnarray}\label{HessMT-Decomp.eq}
      {\rm Hess}_\vart \tM_T(\vart)
      &=& -\,\frac{1}{T^2}\sum_{|k|< \xi_T}  (D_k^T + F_k^T) \,.
      \end{eqnarray}
      Note that $E(D^T_k)=0$. Setting $\vart=\vart_0$ observe that the argument of the previous section (using the matrices $\bb_{k,t}^\vart$ instead of the vectors $\ba_{k,t}^\vart$) that led to (\ref{CLT-zero-part-as:eq}) and (\ref{CLT-normal-part-as:eq}) gives at once
      \begin{eqnarray}\label{HessMT-D-conv.eq}
      \frac{1}{T^2}\sum_{|k|< \xi_T}  D_k^T &\to& 0 \; \text{a.s.} \mbox{ if } T,\xi_T\to\infty\mbox{ and }\xi^4_T/T \to 0\,.
      \end{eqnarray}
      Likewise, the same follows for the random part of $F_k^T$.
      More precisely for $\vart=\vart_0$:
      \begin{eqnarray*}
       F_k^T&=&-2\sum_{t,t'\in\T}\ba_{k,t}^{\vart_0} (\ba_{k,t}^{\vart_0} - \ba_{k,t'}^{\vart_0})'\\&&~~
      \RE\Big(|f_k|^2 + f_k \overline{h_k(\delta_{t'}^{\vart_0})W_k^{t'}}+ h_k(\delta_t^{\vart_0})W_k^{t} \overline{f_k}+ h_k(\delta_t^{\vart_0})W_k^t\,\overline{h_k(\delta_{t'}^{\vart_0})W_k^{t'}}\Big)\\
      &=&-2\sum_{t,t'\in\T}|f_k|^2\ba_{k,t}^{\vart_0}(\ba_{k,t}^{\vart_0} - \ba_{k,t'}^{\vart_0})' + \tF_k^T
      \end{eqnarray*}
      with
      \begin{eqnarray*}
       \lefteqn{\tF_k^T:}\\&=& -2\sum_{t,t'\in\T}\ba_{k,t}^{\vart_0} (\ba_{k,t}^{\vart_0}- \ba_{k,t'}^{\vart_0})'\RE\Big(f_k \overline{h_k(\delta_{t'}^{\vart_0})W_k^{t'}}+ h_k(\delta_t^{\vart_0})W_k^{t} \overline{f_k}+ h_k(\delta_t^{\vart_0})W_k^t\,\overline{h_k(\delta_{t'}^{\vart_0})W_k^{t'}}\Big)
      \end{eqnarray*}
      yields
      \begin{eqnarray}\label{HessMT-tF-conv.eq}
       E(\tF^T_k)~=~0\mbox{ and }\frac{1}{T^2}\sum_{|k|< \xi_T}  \tF_k^T &\to& 0 \; \text{a.s.}\mbox{ if } T,\xi_T\to\infty\mbox{ and }\xi^4_T/T \to \infty\,.
      \end{eqnarray}
      Since we have the deterministic limit  	
      \begin{eqnarray*}
       \sum_{|k|< \xi_T}\frac{2}{T^2}\sum_{t,t'\in\T}|f_k|^2\ba_{k,t}^{\vart_0}(\ba_{k,t}^{\vart_0} - \ba_{k,t'}^{\vart_0})'&\to&2\sum_{k\in\mathbb Z^2}|f_k|^2\dint_{[0,1]^2} \ba_{k,t}^{\vart_0}(\ba_{k,t}^{\vart_0} - \ba_{k,t'}^{\vart_0})'\,dtdt'
      \end{eqnarray*}
       as $T,\xi_T\to \infty$ due to Assumption 2.11 
       on bounded total variation of first $\vart$-derivatives, in conjunction with (\ref{HessMT-Decomp.eq}), (\ref{HessMT-D-conv.eq}) and (\ref{HessMT-tF-conv.eq}) the definition of $\ba^{\vart_0}_{k,t}$ yields the assertion (ii) of Theorem 2.13.

\subsection{Ad Example 2.15}

\begin{Lem}
       In the situation of Example 2.15, $\det(\Sigma) = 0$ iff there is $x \in \R^2\setminus\{0\}$ s.t.
       \begin{equation}
          \label{eq:fconstant}
          f(y+rx)=f(y) \quad \text{for all } y \in \R^2, r \in \R,
       \end{equation}
       where $f$ is $[0, 1]^2$-periodic.
\end{Lem}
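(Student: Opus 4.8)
The plan is to read off the nullity of $\Sigma$ from the fact that, by the computation in Example 2.15, $\Sigma$ is a sum of positive semidefinite rank-one matrices, and then to translate the resulting condition on the support of the Fourier coefficients of $f$ into the geometric condition (\ref{eq:fconstant}).

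First I would note that for every $x=(x_1,x_2)'\in\mathbb R^2$,
\[
x'\Sigma x \;=\; \frac1{12}\sum_{k\in\mathbb Z^2}|f_k|^2\,\langle k,x\rangle^2 \;\ge\; 0,
\]
the series being absolutely convergent because $f\in H^1([0,1]^2)$ by Assumption \ref{f-Sobolov-one:as} (indeed $\sum_k|f_k|^2|k|^2<\infty$). Hence $\Sigma$ is positive semidefinite, so $\det(\Sigma)=0$ if and only if $x'\Sigma x=0$ for some $x\ne 0$, which happens if and only if there is $x\ne 0$ with $\langle k,x\rangle=0$ for every $k\in\mathbb Z^2$ such that $f_k\ne 0$; equivalently, $f_k=0$ for all $k$ with $\langle k,x\rangle\ne 0$.

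Second, I would prove that the latter Fourier condition is equivalent to (\ref{eq:fconstant}). For ``$\Rightarrow$'', if (\ref{eq:fconstant}) holds then the substitution $z=y+rx$ together with the $\mathbb Z^2$-periodicity of $y\mapsto e^{-2\pi i\langle k,y\rangle}$ gives, for every $r\in\mathbb R$,
\[
f_k=\int_{[0,1]^2}f(y)e^{-2\pi i\langle k,y\rangle}\,dy
=\int_{[0,1]^2}f(y+rx)e^{-2\pi i\langle k,y\rangle}\,dy
=e^{2\pi i r\langle k,x\rangle}f_k,
\]
so $f_k=0$ whenever $\langle k,x\rangle\ne 0$. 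Conversely, if $f_k=0$ for all $k$ with $\langle k,x\rangle\ne 0$, then $f=\sum_{k:\langle k,x\rangle=0}f_k\,e^{2\pi i\langle k,\cdot\rangle}$ in $L^2$, and since $e^{2\pi i\langle k,y+rx\rangle}=e^{2\pi i\langle k,y\rangle}$ for each surviving index $k$, we get $f(\cdot+rx)=f(\cdot)$ (as an $L^2$, hence a.e., identity for each $r$; for continuous $f$ this is pointwise). Combining the two equivalences proves the Lemma, and in particular justifies the parenthetical claim in Example 2.15 that $\Sigma>0$ exactly when $f$ is not constant in any direction.

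The argument is short; the only places calling for a little care are the change of variables in the first display of the second step (one works over a fundamental domain of $\mathbb Z^2$ and uses periodicity of both $f$ and the character $e^{-2\pi i\langle k,\cdot\rangle}$) and the convention under which (\ref{eq:fconstant}) is to be read for merely $L^2$ functions. I do not expect a genuine obstacle here.
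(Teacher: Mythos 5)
Your proof is correct and follows essentially the same route as the paper: positive semidefiniteness of $\Sigma$ via the quadratic form $x'\Sigma x = \frac{1}{12}\sum_k |f_k|^2\langle k,x\rangle^2$, reduction of $\det(\Sigma)=0$ to the support condition on the Fourier coefficients, and the translation/Fourier-coefficient equivalence in both directions. The only (welcome) refinements over the paper's argument are your explicit change of variables for $f^{rx}_k$ and the remark on the $L^2$ versus pointwise reading of the identity.
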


\begin{proof}
       Since for $x \in \R^2\setminus\{0\}$ we have
       \[ x'\Sigma x = \frac1{12} \sum_{k \in \Z^2} |f_k|^2 \langle k, x\rangle^2 \geq 0, \]
       the matrix $\Sigma$ is positive semidefinite. Hence, $\det(\Sigma)=0$ iff there is an $x \in \R^2\setminus\{0\}$ s.t. $x'\Sigma x = 0$. This is the case iff
       \begin{equation}
          \label{eq:zerononzero}
          |f_k|^2 \neq 0 \text{ implies } \langle k, x\rangle^2 = 0 \quad \text{for all } k \in \Z^2.
       \end{equation}
       
       If this implication holds, we have for all $y \in \R^2$ and $r \in \R$ that
       \[ f(y+rx) = \sum_{k \in \Z^2} f_k e^{2\pi i\langle k, y+rx\rangle}
          = \sum_{k \in \Z^2} f_k e^{2\pi i\langle k, y\rangle} e^{2\pi i r\langle k, x\rangle}
          = \sum_{k \in \Z^2} f_k e^{2\pi i\langle k, y\rangle}
          = f(y), \]
       i.e. (\ref{eq:fconstant}). If, on the other hand, (\ref{eq:fconstant}) holds, then the two functions $f$ and $f^{rx}(\, \cdot \, ) := f(\, \cdot \, + rx)$ are identical. Subsequently, their respective Fourier coefficients $f_k$ and $f^{rx}_k = e^{2\pi i r\langle k, x\rangle} f_k$ are also the same, i.e. (\ref{eq:zerononzero}) holds.
\end{proof}

\singlespacing

\ifthenelse{\equal{\user}{alex}}{
  \bibliographystyle{abbrv}
  \bibliography{shape,drift} 
  }{}
\ifthenelse{\equal{\user}{stephan}}{
  \bibliographystyle{../../BIB/abbrv}
  \bibliography{../../BIB/shape,drift} 

\begin{thebibliography}{10}

\bibitem{Allassonniere2007}
S.~Allassonni\`ere, Y.~Amit, and A.~Trouv\'e.
\newblock {Towards a Coherent Statistical Framework for Dense Deformable
  Template Estimation}.
\newblock {\em Journal of the Royal Statistical Society, Series B},
  69(1):3--29, 2007.

\bibitem{Antoniadis2006}
A.~Antoniadis and J.~Bigot.
\newblock {Poisson Inverse Problems}.
\newblock {\em Annals of Statistics}, 34(5):2132--2158, 2006.

\bibitem{Aspelmeier2014}
T.~Aspelmeier, A.~Egner, and A.~Munk.
\newblock {Modern Statistical Challenges in High Resolution Fluorescence
  Microscopy}.
\newblock {\em {Annual Review of Statistics and its Application}}, 2, 2014.
\newblock 80pp. To appear.

\bibitem{Babcock2012}
H.~Babcock, Y.~M. Sigal, and X.~Zhuang.
\newblock {A High-density 3D Localization Algorithm for Stochastic Optical
  Reconstruction Microscopy}.
\newblock {\em {Optical Nanoscopy}}, 1(6), 2012.

\bibitem{Berning2012}
S.~Berning, K.~I. Willig, H.~Steffens, P.~Dibaj, and S.~W. Hell.
\newblock {Nanoscopy in a Living Mouse Brain}.
\newblock {\em {Science}}, 335(6068):551, 2012.

\bibitem{Betzig2006}
E.~Betzig, G.~H. Patterson, R.~Sougrat, O.~W. Lindwasser, S.~Olenych, J.~S.
  Bonifacino, M.~W. Davidson, J.~Lippincott-Schwartz, and H.~F. Hess.
\newblock {Imaging Intracellular Fluorescent Proteins at Nanometer Resolution}.
\newblock {\em {Science}}, 313(5793):1642--1645, 2006.

\bibitem{Bickel1998}
P.~J. Bickel, C.~A.~J. Klaassen, Y.~Ritov, and J.~A. Wellner.
\newblock {\em {Efficient and Adaptive Estimation for Semiparametric Models}}.
\newblock {Springer}, 1998.

\bibitem{Bigot2013}
J.~Bigot, S.~Gadat, T.~Klein, and C.~Marteau.
\newblock {Intensity Estimation of Non-homogeneous Poisson Processes from
  Shifted Trajectories}.
\newblock {\em {Electronic Journal of Statistics}}, 7(2013):881--931, 2013.

\bibitem{BigotGamboaVimond2009}
J.~Bigot, F.~Gamboa, and M.~Vimond.
\newblock {Estimation of Translation, Rotation, and Scaling Between Noisy
  Images Using the Fourier-Mellin Transform}.
\newblock {\em {SIAM Journal on Imaging Sciences}}, 2(2):614--645, 2009.

\bibitem{BissantzClaeskensHolzmannMunk2009}
N.~Bissantz, G.~Claeskens, H.~Holzmann, and A.~Munk.
\newblock {Testing for Lack of Fit in Inverse Regression — With Applications
  to Biophotonic Imaging}.
\newblock {\em Journal of the Royal Statistical Society, Series B},
  71(1):25--48, 2009.

\bibitem{BrownLevine2007}
L.~D. Brown and M.~Levine.
\newblock {Variance Estimation in Nonparametric Regression via the Difference
  Sequence Method}.
\newblock {\em Annals of Statistics}, 35(5):2219--2232, 2007.

\bibitem{Bruhn2005}
A.~Bruhn, J.~Weickert, and C.~Schn\"orr.
\newblock {Lucas/Kanade Meets Horn/Schunck: Combining Local and Global Optic
  Flow Methods}.
\newblock {\em International Journal of Computer Vision}, 61(3):211--231, 2005.

\bibitem{Cavalier2002}
L.~Cavalier and J.~Y. Koo.
\newblock {Poisson Intensity Estimation for Tomographic Data Using a Wavelet
  Shrinkage Approach}.
\newblock {\em {IEEE Transactions on Information Theory}}, 48(10):2794--2802,
  2002.

\bibitem{Chen2010}
X.~Chen, J.~Yang, Q.~Wu, and J.~Zhao.
\newblock {Motion Blur Detection Based on Lowest Directional High-frequency
  Energy}.
\newblock In {\em {Proceedings of 2010 IEEE 17th International Conference on
  Image Processing}}, pages 2533--2536, 2010.

\bibitem{Cox2012}
S.~Cox, E.~Rosten, J.~Monypenny, T.~Jovanovic-Talisman, D.~T. Burnette,
  J.~Lippincott-Schwarz, G.~E. Jones, and R.~Heintzmann.
\newblock {Bayesian Localization Microscopy Reveals Nanoscale Podosome
  Dynamics}.
\newblock {\em {Nature Methods}}, 9(2):195--200, 2012.

\bibitem{Cuzol2007}
A.~Cuzol, P.~Hellier, and E.~Memin.
\newblock {A Low Dimensional Fluid Motion Estimator}.
\newblock {\em International Journal of Computer Vision}, 75(3):329--349, 2007.

\bibitem{Deschout2014}
H.~Deschout, F.~C. Zanacchi, M.~Mlodzianoski, A.~Diaspro, J.~Bewersdorf, S.~T.
  Hess, and K.~Braeckmans.
\newblock {Precisely and Accurately Localizing Single Emitters in Fluorescence
  Microscopy}.
\newblock {\em {Nature Methods}}, 11(3):253--266, 2014.

\bibitem{Dette1998}
H.~Dette and A.~Munk.
\newblock {Testing Heteroscedasticity in Nonparametric Regression}.
\newblock {\em Journal of the Royal Statistical Society, Series B},
  60(4):693--708, 1998.

\bibitem{Egner2007}
A.~Egner, C.~Geisler, C.~von Middendorff, H.~Bock, D.~Wenzel, R.~Medda,
  M.~Andresen, A.~Stiel, S.~Jakobs, C.~Eggeling, A.~Schoenle, and S.~W. Hell.
\newblock {Fluorescence Nanoscopy in Whole Cells by Asynchronous Localization
  of Photoswitching Emitters}.
\newblock {\em {Biophysical Journal}}, 93(9):3285--3290, 2007.

\bibitem{Evans1998}
L.~C. Evans.
\newblock {\em {Partial Differential Equations (Graduate Studies in Mathematics
  vol 19)(Providence, {RI}: American Mathematical Society)}}.
\newblock {Oxford University Press}, 1998.

\bibitem{Fleet2006}
D.~J. Fleet and Y.~Weiss.
\newblock {\em {Optical Flow Estimation}}, pages 237--257.
\newblock Springer, 2006.

\bibitem{Foroosh2002}
H.~Foroosh, J.~Zerubia, and M.~Berthod.
\newblock {Extension of Phase Correlation to Subpixel Registration}.
\newblock {\em {IEEE Transactions on Image Processing}}, 11(3):188--200, 2002.

\bibitem{Frick2013}
K.~Frick, M.~Marnitz, and A.~Munk.
\newblock {Statistical Multiresolution Estimation for Variational Imaging: With
  an Application in {P}oisson-Biophotonics}.
\newblock {\em {Journal of Mathematical Imaging and Vision}}, 46(3):370--387,
  2013.

\bibitem{GamboaLoubesMaza2007}
F.~Gamboa, J.-M. Loubes, and E.~Maza.
\newblock {Semi-parametric Estimation of Shifts}.
\newblock {\em {Electronic Journal of Statistics}}, 1:616--640, 2007.

\bibitem{Geisler2012}
C.~Geisler, T.~Hotz, A.~Schoenle, S.~W. Hell, A.~Munk, and A.~Egner.
\newblock {Drift Estimation for Single Marker Switching Based Imaging Schemes}.
\newblock {\em {Optics Express}}, 20(7):7274--7289, 2012.

\bibitem{Geisler2007}
C.~Geisler, A.~Schoenle, C.~von Middendorff, H.~Bock, C.~Eggeling, A.~Egner,
  and S.~W. Hell.
\newblock {Resolution of ${\lambda/10}$ in Fluorescence Microscopy Using Fast
  Single Molecule Photo-switching}.
\newblock {\em {Applied Physics A}}, 88(2):223--226, 2007.

\bibitem{Gonzalez2002}
{Gonzalez, R.C. and Woods, R.E.}
\newblock {\em {Digital Image Processing}}.
\newblock {Prentice Hall}, 2 edition, 2002.

\bibitem{Gow}
J.~C. Gower.
\newblock Generalized {P}rocrustes analysis.
\newblock {\em Psychometrika}, 40(1):33--51, 1975.

\bibitem{Gustafsson2005m}
M.~G.~L. Gustafsson.
\newblock {Nonlinear Structured-illumination Microscopy: Wide-field
  Fluorescence Imaging with Theoretically Unlimited Resolution}.
\newblock {\em {Proceedings of the National Academy of Sciences of the United
  States of America}}, 102(37):13081--13086, 2005.

\bibitem{Hafi2014}
N.~Hafi, M.~Grunwald, L.~S. van~den Heuvel, T.~Aspelmeier, J.-H. Chen,
  M.~Zagrebelsky, O.~M. Sch\"utte, C.~Steinem, M.~Korte, A.~Munk, and P.~J.
  Walla.
\newblock {Fluorescence Nanoscopy by Polarization Modulation and Polarization
  Angle Narrowing}.
\newblock {\em {Nature Methods}}, 11(5):579--584, 2014.

\bibitem{HallPittelkow}
P.~Hall and Y.~E. Pittelkow.
\newblock {Simultaneous Bootstrap Confidence Bands in Regression}.
\newblock {\em {Journal of Statistical Computation and Simulation}},
  37(1--2):99--113, 1990.

\bibitem{Heintzmann2002}
R.~Heintzmann, T.~M. Jovin, and C.~Cremer.
\newblock {Saturated Patterned Excitation Microscopy--A Concept for Optical
  Resolution Improvement}.
\newblock {\em {Journal of the Optical Society of America A: Optics, Image
  Science, and Vision}}, 19(8):1599--1609, 2002.

\bibitem{Hell2003}
S.~W. Hell.
\newblock {Toward Fluorescence Nanoscopy}.
\newblock {\em {Natural Biotechnology}}, 21(11):1347--1355, 2003.

\bibitem{Hell2007}
S.~W. Hell.
\newblock {Far-field Optical Nanoscopy}.
\newblock {\em {Science}}, 316(5828):1153--1158, 2007.

\bibitem{Hell2009b}
S.~W. Hell.
\newblock {Microscopy and its Focal Switch}.
\newblock {\em {Nature Methods}}, 6(1):24--32, 2009.

\bibitem{Hell1994}
S.~W. Hell and J.~Wichmann.
\newblock {Breaking the Diffraction Resolution Limit by Stimulated Emission:
  Stimulated-emission-depletion Fluorescence Microscopy}.
\newblock {\em {Optics Letters}}, 19(11):780--782, 1994.

\bibitem{Hess2006}
S.~T. Hess, T.~P.~K. Girirajan, and M.~D. Mason.
\newblock {Ultra-high Resolution Imaging by Fluorescence Photoactivation
  Localization Microscopy}.
\newblock {\em {Biophysical Journal}}, 91(11):4258--4272, 2006.

\bibitem{Hofmann2005}
M.~Hofmann, C.~Eggeling, S.~Jakobs, and S.~W. Hell.
\newblock {Breaking the Diffraction Barrier in Fluorescence Microscopy at Low
  Light Intensities by Using Reversibly Photoswitchable Proteins}.
\newblock {\em {Proceedings of the National Academy of Sciences of the United
  States of America}}, 102(49):17565--17569, 2005.

\bibitem{Holden2011}
S.~J. Holden, S.~Uphoff, and A.~N. Kapanides.
\newblock {DAOSTORM: An Algorithm for High-density Super-resolution
  Microscopy}.
\newblock {\em {Nature Methods}}, 8(4):279--280, 2011.

\bibitem{Huang2013}
F.~Huang, T.~M.~P. Hartwich, F.~E. Rivera-Molina, Y.~Lin, W.~C. Duim, J.~J.
  Long, P.~D. Uchil, J.~R. Myers, M.~A. Baird, W.~Mothes, M.~W. Davidson,
  D.~Toomre, and J.~Bewersdorf.
\newblock {Video-rate Nanoscopy Using sCMOS Camera-specific Single-molecule
  Localization Algorithms}.
\newblock {\em {Nature Methods}}, 10(7):653--658, 2013.

\bibitem{Huang2005}
J.-Z. Huang, T.-N. Tan, L.~Ma, and Y.-H. Wang.
\newblock {Phase Correlation Based Iris Image Registration Model}.
\newblock {\em {Journal of Computer Science and Technology}}, 20(3):419--425,
  2005.

\bibitem{Jones2011}
S.~A. Jones, S.-H. Shim, J.~He, and X.~Zhuang.
\newblock {Fast, Three-dimensional Super-resolution Imaging of Live Cells}.
\newblock {\em {Nature Methods}}, 8(6):499--508, 2011.

\bibitem{Klar2000}
T.~A. Klar, S.~Jakobs, M.~Dyba, A.~Egner, and S.~W. Hell.
\newblock {Fluorescence Microscopy with Diffraction Resolution Barrier Broken
  by Stimulated Emission}.
\newblock {\em {Proceedings of the National Academy of Sciences of the United
  States of America}}, 97(15):8206--8210, 2000.

\bibitem{Li2013}
H.~Li, M.~Haltmeier, S.~Zhang, J.~Frahm, and A.~Munk.
\newblock {Aggregated Motion Estimation for Image Reconstruction in Real-Time
  MRI}.
\newblock {\em {Magnetic Resonance in Medicine, to appear}}, 2013.

\bibitem{Liu1988}
R.~Y. Liu.
\newblock {Bootstrap Procedures under some Non-i.i.d. Models}.
\newblock {\em Annals of Statistics}, 16(4):1696--1708, 1988.

\bibitem{Mammen1993}
E.~Mammen.
\newblock {Bootstrap and Wild Bootstrap for High Dimensional Linear Models}.
\newblock {\em Annals of Statistics}, 21(1):255--285, 1993.

\bibitem{Munk2005}
A.~Munk, N.~Bissantz, T.~Wagner, and G.~Freitag.
\newblock {On Difference-based Variance Estimation in Nonparametric Regression
  when the Covariate is High Dimensional}.
\newblock {\em Journal of the Royal Statistical Society, Series B},
  67(1):19--41, 2005.

\bibitem{Nowak2000}
R.~D. Nowak and E.~D. Kolaczyk.
\newblock {A Statistical Multiscale Framework for Poisson Inverse Problems}.
\newblock {\em {IEEE Transactions in Information Theory}}, 46(5):1811--1825,
  2000.

\bibitem{Papenberg2006}
N.~Papenberg, A.~Bruhn, T.~Brox, S.~Didas, and J.~Weickert.
\newblock {Highly Accurate Optic Flow Computation with Theoretically Justified
  Warping}.
\newblock {\em International Journal of Computer Vision}, 67(2):141--158, 2006.

\bibitem{Quan2011}
T.~Quan, H.~Zhu, X.~Liu, Y.~Liu, J.~Ding, S.~Zeng, and Z.-L. Huang.
\newblock {High-density Localization of Active Molecules Using Structured
  Sparse Model and Baysian Information Criterion}.
\newblock {\em {Optics Express}}, 19(18):16963--16974, 2011.

\bibitem{reddy}
B.~Reddy and B.~Chatterji.
\newblock {An FFT-based Technique for Translation, Rotation, and
  Scale-invariant Image Registration}.
\newblock {\em {IEEE Transactions on Image Processing}}, {5}({8}):{1266--1271},
  {1996}.

\bibitem{Rust2006}
M.~J. Rust, M.~Bates, and X.~W. Zhuang.
\newblock {Sub-diffraction-limit Imaging by Stochastic Optical Reconstruction
  Microscopy (STORM)}.
\newblock {\em {Nature Methods}}, 3(10):793--795, 2006.

\bibitem{Schick2008}
A.~Schick and W.~Wefelmeyer.
\newblock {Some Developments in Semiparametric Models}.
\newblock {\em {Journal of Statistical Theory and Practice}}, 2(3):475--491,
  2008.

\bibitem{Schmidt2008}
R.~Schmidt, C.~A. Wurm, S.~Jakobs, J.~Engelhardt, A.~Egner, and S.~W. Hell.
\newblock {Spherical Nanosized Focal Spot Unravels the Interior of Cells}.
\newblock {\em {Nature Methods}}, 5(6):539--544, 2008.

\bibitem{Sen1993}
P.~K. Sen and J.~M. Singer.
\newblock {\em {Large Sample Methods in Statistics}}.
\newblock Chapman \& Hall, 1993.

\bibitem{Silverman1990}
B.~W. Silverman, M.~C. Jones, J.~D. Wilson, and D.~W. Nychka.
\newblock {A Smoothed EM Approach to Indirect Estimation Problems, with
  Particular Reference to Stereology and Emission Tomography}.
\newblock {\em Journal of the Royal Statistical Society, Series B},
  52(2):271--324, 1990.

\bibitem{Thompson2002}
R.~Thompson, D.~Larson, and W.~Webb.
\newblock {Precise Nanometer Localization Analysis for Individual Fluorescent
  Probes}.
\newblock {\em {Biophysical Journal}}, 82(5):2775–--2783, 2002.

\bibitem{Vaart2000}
A.~W. van~der Vaart.
\newblock {\em {Asymptotic Statistics}}.
\newblock {Cambridge Series in Statistical and Probabilistic Mathematics}.
  {Cambridge University Press}, 2000.

\bibitem{Vardi1985}
Y.~Vardi, L.~A. Shepp, and L.~Kaufman.
\newblock {A Statistical Model for Positron Emission Tomography}.
\newblock {\em Journal of the American Statistical Association}, 80(389):8--20,
  1985.

\bibitem{Weickert2001}
J.~Weickert and C.~Schn\"orr.
\newblock {A Theoretical Framework for Convex Regularizers in PDE-Based
  Computation of Image Motion}.
\newblock {\em International Journal of Computer Vision}, 45(3):245--264, 2001.

\bibitem{Westphal2008}
V.~Westphal, S.~O. Rizzoli, M.~A. Lauterbach, D.~Kamin, R.~Jahn, and S.~W.
  Hell.
\newblock {Video-rate Far-field Optical Nanoscopy Dissects Synaptic Vesicle
  Movement}.
\newblock {\em {Science}}, 320(5873):246--249, 2008.

\bibitem{Wu1986}
C.~F.~J. Wu.
\newblock {Jackknife, Bootstrap, and Other Resampling Methods in Regression
  Analysis (with Discussion)}.
\newblock {\em Annals of Statistics}, 14(4):1261--1295, 1986.

\bibitem{Xu2013}
W.~Xu, J.~Mulligan, D.~Xu, and X.~Chen.
\newblock {Detecting and Classifying Blurred Image Regions}.
\newblock In {\em {Proceedings of 2013 IEEE International Conference on
  Multimedia and Expo (ICME)}}, pages 1--6, 2013.

\bibitem{Zhang2008}
B.~Zhang, J.~M. Fadili, and J.~L. Starck.
\newblock {Wavelets, Ridgelets, and Curvelets for Poisson Noise Removal}.
\newblock {\em {IEEE Transactions on Image Processing}}, 17(7):1093--1108,
  2008.

\bibitem{Zhu2012}
L.~Zhu, W.~Zhang, D.~Elnatan, and B.~Huang.
\newblock {Faster STORM Using Compressed Sensing}.
\newblock {\em {Nature Methods}}, 9(7):721--726, 2012.

\end{thebibliography}
  }{}

\end{document}